\documentclass[pra, reprint, superscriptaddress, amsmath,amssymb, aps]{revtex4-2}

\usepackage{amsthm}
\newtheorem{theorem}{Theorem}
\newtheorem{corollary}{Corollary}
\newtheorem{lemma}{Lemma}
\newtheorem{definition}{Definition}
\newtheorem{remark}{Remark}
\usepackage{tcolorbox}
\usepackage{esint}
\usepackage{bbold}
\usepackage{amssymb}
\usepackage{graphicx}
\usepackage{textcomp}
\usepackage[margin=0.8in]{geometry}
\usepackage{relsize}
\usepackage{url}
\usepackage[colorlinks=true, allcolors=camblue]{hyperref}
\usepackage{blkarray}
\usepackage{geometry}
\usepackage{cancel}
\usepackage{tikz}
\usepackage{natbib}
\usepackage{stmaryrd}
\usepackage[utf8]{inputenc}
\usepackage[T1]{fontenc}
\usepackage[caption=false]{subfig}
\usetikzlibrary{quantikz2}
\usepackage{multirow}
\usepackage{mathtools}
\mathtoolsset{showonlyrefs}

\definecolor{camblue}{cmyk}{0.2443, 0.0000, 0.1250, 0.3098}

\usepackage{xcolor}
\usepackage{lipsum}

\newcommand{\vtx}[4]{%
\begin{tikzpicture}[scale=1.0,baseline={(0,0)}]
  \draw (-0.9,0)--(0.9,0);
  \draw (0,-0.9)--(0,0.9);
  \fill (0,0) circle (2pt);

  \ifstrequal{#1}{out}{\path[tips, ->, line width=0.8pt] (0,0)--(0,0.65);}
                     {\path[tips, ->, line width=0.8pt]  (0,0.9)--(0,0.5);}
  \ifstrequal{#2}{out}{\path[tips, ->, line width=0.8pt]  (0,0)--(0,-0.65);}
                     {\path[tips, ->, line width=0.8pt]  (0,-0.9)--(0,-0.5);}
  \ifstrequal{#3}{out}{\path[tips, ->, line width=0.8pt]  (0,0)--(-0.65,0);}
                     {\path[tips, ->, line width=0.8pt]  (-0.9,0)--(-0.5,0);}
  \ifstrequal{#4}{out}{\path[tips, ->, line width=0.8pt]  (0,0)--(0.65,0);}
                     {\path[tips, ->, line width=0.8pt]  (0.9,0)--(0.5,0);}
\end{tikzpicture}%
}

\tikzset{
  cup_gate/.style={
    draw=none, 
    inner sep=0.5pt,
    minimum width=0.5em,
    minimum height=3em,
    path picture={
      \coordinate (NW) at (path picture bounding box.north west);
      \coordinate (NE) at (path picture bounding box.north east);
      \coordinate (SW) at (path picture bounding box.south west);
      \coordinate (SE) at (path picture bounding box.south east);
      \path let \p1 = (NW), \p2 = (NE), \p3 = (SW), \p4 = (SE) in
        coordinate (nwl) at ($(NW)$)
        coordinate (nel) at ($(NE)$)
        coordinate (swl) at ($(SW)$)
        coordinate (sel) at ($(SE)$);
      \draw[line width=1.5pt] (nwl) -- (swl) -- (sel) -- (nel);
    }
  }
}

\tikzset{
  cap_gate/.style={
    draw=none, 
    inner sep=0.5pt,
    minimum width=0.5em,
    minimum height=3em,
    path picture={
      \coordinate (NW) at (path picture bounding box.north west);
      \coordinate (NE) at (path picture bounding box.north east);
      \coordinate (SW) at (path picture bounding box.south west);
      \coordinate (SE) at (path picture bounding box.south east);
      \path let \p1 = (NW), \p2 = (NE), \p3 = (SW), \p4 = (SE) in
        coordinate (nwl) at ($(NW)$)
        coordinate (nel) at ($(NE)$)
        coordinate (swl) at ($(SW)$)
        coordinate (sel) at ($(SE)$);
      \draw[line width=1.5pt] (swl) -- (nwl) -- (nel) -- (sel);
    }
  }
}

\begin{document}

\title{The Power of Power-of-SWAP: Postselected Quantum Computation with the Exchange Interaction}
\author{Jędrzej Burkat}
\email{jbb55@cam.ac.uk}
\affiliation{Cavendish Laboratory, Department of Physics, University of Cambridge, CB3 0HE, UK}
\author{Sergii Strelchuk}
\affiliation{Department of Computer Science, University of Oxford, OX1 3QD, UK}
\author{Michał Studziński}
\affiliation{International Centre for Theory of Quantum Technologies, University of Gdańsk, 80-309, Poland}

\begin{abstract}
We introduce Exchange Quantum Polynomial Time (\textsf{XQP}) circuits, which comprise quantum computation using only computational basis SPAM and the isotropic Heisenberg exchange interaction. Structurally, this sub-universal model captures decoherence-free subspace computation without access to singlet states. We show that \textsf{XQP} occupies an intermediate position between \textsf{BPP} and \textsf{BQP}, as its efficient multiplicative-error simulation would collapse the polynomial hierarchy to its third level. 
We further provide evidence that additive-error simulation of \textsf{XQP} would enable efficient additive-error simulation of arbitrary \textsf{BQP} computations. Remarkably, the restricted family of \textsf{XQP} circuits consisting solely of $\sqrt{\mathrm{SWAP}}$ gates remains hard to simulate to multiplicative error. 
We additionally prove that circuits generated by $\sqrt{\mathrm{SWAP}}$ gates are semi-universal, generate $t$-designs for the uniform distribution over $SU(2)$-invariant unitaries, and maximise the entangling power within \textsf{XQP}. Finally, we derive structural results linking computational basis states in \textsf{XQP} to the Gelfand--Tsetlin basis of the symmetric group, and expressing \textsf{XQP} output probabilities as partition functions of the six-vertex and Potts models. Our findings indicate that \textsf{XQP} circuits are naturally suited to near-term hardware and provide a promising platform for experimental demonstrations of quantum computational advantage.
\end{abstract}

\maketitle

\section{Introduction}

Identifying the source of power in quantum computation is an elusive task, made difficult by the fact that `magic' ingredients lifting sub-universal circuits to full universality are not unique, typically depending on the structure of the underlying gate set. Whilst properties such as superposition, interference and entanglement are necessary, they are not always sufficient for quantum advantage. The additional resources required to achieve it can vary from something as simple as a $\mathrm{SWAP}$ operation in matchgate circuits \citep{Jozsa_2008} to the inclusion of magic states (or non-Clifford gates) in the stabilizer formalism \citep{Gottesman_1998}. Sometimes, removing this resource from a computational model collapses its power to on par with classical machines. In other cases, \textit{restricted models} of computation in which access to the magic resource is forbidden yield new, intermediate complexity classes which are not universal yet still hard to simulate classically \citep{Bremner_2010, Aaronson_2010}. Restricted models of computation thus form a fascinating area of study in quantum complexity theory, as they can offer new insights and intuition into what makes quantum computers powerful. In addition, they form great candidates for experimental demonstrations of quantum computational supremacy, as they are often structurally simpler and more amenable to near-term implementation than their universal counterparts \citep{Bremner_2017, Hangleiter_2023, Tillmann_2013}. Since almost any set of quantum gates is universal \citep{Deutsch_1995, LLoyd_1995}, discovering interesting restricted models is not an easy task, requiring additional insights from mathematics, computer science, or physics. In this work, we do exactly this by considering a universal model of quantum computation based on the tunable isotropic Heisenberg exchange interaction \citep{DiVincenzo_2000}, and subtracting the ingredient which makes it powerful -- access to state preparation and measurement of singlet states. This yields a new intermediate class of quantum computation, which we call `Exchange Quantum Polynomial Time' ($\mathsf{XQP}$), and subsequently study. 

We begin with a few words on the Heisenberg exchange interaction. In physics language, we take qubits to be spin-$\frac{1}{2}$ particles, with local operators:
\begin{equation}
\vec{S}_i = (S^x_i, S^y_i, S^z_i) = \frac{1}{2}(X_i, Y_i, Z_i).
\end{equation} 

\noindent The computational basis states $|0 \rangle$ and $|1 \rangle$ are identified with the eigenstates of $S^z$ as $|0 \rangle \cong |+1/2 \rangle$ and $|1 \rangle \cong |-1/2 \rangle$. Similarly, the $n$-qubit computational basis is taken to be the joint eigenbasis of the commuting set of operators $\{ S^z_1, \dots, S^z_n \}$. State initialisation and readout therefore corresponds to local preparation of $S^z$ eigenstates and local measurement of spin in the $z$-direction. The initialized state is set to evolve under the isotropic Heisenberg Hamiltonian:
\begin{equation} \label{eq:heisenberg}
    H = \sum_{i < j} J_{ij} \vec{S}_i \cdot \vec{S}_j,
\end{equation}

\noindent where $J_{ij}$ is the tunable exchange coupling between spins $i$ and $j$. Using the identity:
\begin{equation}
    \vec{S}_i \cdot \vec{S}_j = \frac{1}{4}(X_i X_j + Y_i Y_j + Z_i Z_j) = \frac{1}{2}E_{ij} - \frac{1}{4}\mathbf{1},
\end{equation}

\noindent where $E_{ij}$ is the $\mathrm{SWAP}_{ij}$ operation, this may be re-written as:
\begin{equation}
    H = \frac{1}{2} \sum_{i < j} J_{ij} E_{ij} - \frac{1}{4} \left( \sum_{i < j} J_{ij} \right) \mathbf{1}.
\end{equation}

\noindent As the second term contributes a global phase under time evolution, the dynamics are equivalently generated by the transpositions $E_{ij}$. We assume that the couplings are switched on for controlled durations $\delta t$, one qubit pair $(i,j)$ at a time. Then, the evolution of the system can be described as a sequence of `pulses' $U_{ij}(\theta) = e^{i \theta E_{ij}}$ acting on qubits, where $\theta = J_{ij} \delta t / 2$ are referred to as `pulse angles'. As a matrix:
\begin{align}
\label{eq:Utheta-def}
U(\theta)
&:= \cos \theta \cdot \mathbf{1} + i \sin \theta \cdot \mathrm{SWAP} \\
&= \begin{pmatrix}
e^{i\theta} & 0 & 0 & 0\\
0 & \cos\theta & i\sin\theta & 0\\
0 & i\sin\theta & \cos\theta & 0\\
0 & 0 & 0 & e^{i\theta}
\end{pmatrix}.
\end{align}

\noindent We refer to the unitaries $U(\theta)$ as `exchange gates'. Sometimes, they are also referred to as the `partial $\mathrm{SWAP}$', or `power-of-$\mathrm{SWAP}$' gates, owing to the fact that $ \theta \in [0 , \pi ]$ continuously interpolates between the identity and the $\mathrm{SWAP}$ operation. Indeed, $U(0)=\mathbf{1}$, $U(\pi/4)=\sqrt{\mathrm{SWAP}}$, and $U(\pi/2)=i \cdot \mathrm{SWAP}$.

The tunable Heisenberg exchange interaction first came to prominence as a candidate for a physical implementation of quantum computation following the proposal of DiVincenzo et al. \citep{DiVincenzo_2000}, building on the seminal work of Kempe, Bacon, Lidar and Whaley \citep{Kempe_2001}. The authors showed that by encoding logical qubits into \textit{decoherence-free subspaces} (or \textit{subsystems}) of the physical Hilbert space, it is possible to implement a universal set of quantum gates using only evolution by the Heisenberg Hamiltonian. In representation theory terms, this amounts to treating certain basis states of the irreducible representation (irrep) spaces of the symmetric group as logical qubits, and using unitaries generated by the Lie algebra of transpositions $E_{ij}$ to approximate any unitary on the logical space. We formalise this model of computation as $\mathsf{DFS}_k$ (where $k$ refers to the number of physical qubits used to encode a single logical qubit) in Definition \ref{def:dfs}. In this sense, the Heisenberg exchange satisfies encoded universality on an appropriately chosen state preparation and measurement basis \citep{Kempe_2001b}. In addition, $\mathsf{DFS}_k$ computation is robust to collective decoherence, forming a `passive' error correcting code against any noise of the form $U^{\otimes n}$ on the physical qubits, in which syndrome extraction is not required \citep{Bacon_2000, Lidar_2001a, Lidar_2001b}. This property is used in the context of quantum error correction, through hybrid QEC-DFS codes which combine the passive and active noise mitigation of both to protect against a wider range of noise \citep{Lidar_1999, Wu_2002, Dash_2024, Dasu_2025, Kasatkin_2026}.

Experimentally, $\mathsf{DFS}_3$ computation on quantum dots was demonstrated in \citep{Laird_2010, Weinstein_2023}. Other experimental implementations of a coherent, tunable Heisenberg exchange also include \citep{Petta_2005, Medford_2013, Eng_2015, van_Diepen_2021, Madzik_2025}. The ease of implementing the exchange interaction on existing quantum hardware is thus a major motivation for our work, making a strong case for the practical relevance and applicability of exchange-based computational models. The other is the observation that although universal quantum computation can be performed with the Heisenberg exchange, it will \textit{always} require an external source of singlet states (for logical qubit encoding), or additional gates (for example the single-qubit $S = \text{diag}(1, i)$ gate) to prepare and measure singlets. This raises the central question: \textit{given a restricted quantum computer which can only prepare and measure states in a local basis and implement the isotropic Heisenberg exchange, how non-classical is its computational power? In addition, if the gate set is further restricted to just a single, non-classical $\sqrt{\mathrm{SWAP}}$ operation, is that enough to achieve a provable quantum advantage?}

We answer these questions as follows. In Section \ref{sec:definitions}, we define the key concepts referred to throughout our work, including exchange quantum polynomial time circuits. In Section \ref{sec:gadget}, we present a simple postselected phase gadget $\mathcal{G}(\theta)$, which implements a single-qubit phase gate $e^{i \theta Z}$ on its target using only the exchange gates $U(\pm \theta)$ and correct computational basis measurement on two ancillary qubits. We then use this gadget in Section \ref{sec:multiplicative} to show that $\mathsf{PostXQP} = \mathsf{PostBQP}$ (Theorem \ref{thm:postxqpbqp}), and thus efficient weak simulation of $\mathsf{XQP}$ circuits to multiplicative precision would collapse the polynomial hierarchy to its third level. Such a consequence is widely believed as highly unlikely in complexity theory: assuming it does not hold, we conclude that the computational power of $\mathsf{XQP}$ lies in an intermediate class between $\mathsf{BPP}$ and $\mathsf{BQP}$, much like in the case of $\mathsf{IQP}$ or $\mathsf{BosonSampling}$ \citep{Bremner_2010,Aaronson_2010}. Remarkably, this result holds for $\mathsf{XQP}(\pi/4)$, i.e. $\mathsf{XQP}$ circuits in which exchange gates are limited to integer powers of $\sqrt{\mathrm{SWAP}}$. With access to small pulse angles, the phase gadget $\mathcal{G}(\theta)$ can be implemented with success probability $1 - \epsilon$ using only $\mathcal{O}(1/\epsilon)$ applications of exchange gates. This could be leveraged experimentally, for example to generate $S$ gates to prepare singlets for $\mathsf{DFS}_k$ computation. We use this fact in Section \ref{sec:additive} to establish evidence that sampling from $\mathsf{XQP}$ circuits to additive precision is unlikely to be efficient, as it would allow for efficient weak simulation of arbitrary $\mathsf{BQP}$ circuits to additive precision (Theorem \ref{thm:tvxqpbqp}). In Section \ref{sec:semi_universality} we focus on the $\mathsf{XQP}(\pi/4)$ model more closely: Theorem \ref{thm:semiuniversal} proves that $\sqrt{\mathrm{SWAP}}$-generated circuits are semi-universal \citep{hulse2024} on any circuit size, previously only known for groups composed of arbitrary exchange gates $U(\theta)$. This implies that $\sqrt{\mathrm{SWAP}}$ gates alone can approximate any $SU(V_J)$ unitary on the irrep spaces $V_J$ of the symmetric group, i.e. the result of \cite{Kempe_2001}. Nonetheless, Theorem \ref{thm:xqpsubset} shows that $\mathsf{XQP}(\pi/4)$ forms a strict subset of $\mathsf{XQP}$, meaning that under computational basis SPAM (which almost always has support on multiple irrep spaces), sequences of $\sqrt{\mathrm{SWAP}}$ cannot approximate arbitrary exchange gates. The entangling power and linear entropy of $U(\theta)$ gates is studied in Section \ref{sec:entangling_power}, where it is shown to be maximised by the $\sqrt{\mathrm{SWAP}}$ gate. The remaining sections focus on structural properties of $\mathsf{XQP}$ circuits: Section \ref{sec:schur_states} relates the computational basis inputs of $\mathsf{XQP}$ to the Schur basis inputs of $\mathsf{DFS}_k$, culminating in a proof that the ability to additively estimate $\mathsf{XQP}$ circuit amplitudes allows for the additive estimation of $\mathsf{DFS}_k$ amplitudes, and thus any encoded $\mathsf{BQP}$ amplitude (Theorem \ref{thm:amplitudes}). This forms an additional piece of evidence for the additive-approximation hardness of $\mathsf{XQP}$. Finally, in Section \ref{sec:potts} we relate $\mathsf{XQP}$ circuits to the six-vertex model of statistical mechanics, identifying non-trivial instances of efficiently computable $\mathsf{XQP}$ probability amplitudes. By leveraging a connection between the six-vertex model to the Potts model, we find that the addition of $S$ gates to $\mathsf{XQP}$ circuits (lifting their computational power to $\mathsf{BQP}$) corresponds to a modification of the underlying six-vertex model to encode the partition function of a complex $q=4$ Potts model \citep{Baxter_1982}. We end with Section \ref{sec:open}: a selection of open problems regarding the $\mathsf{XQP}$ model for future work.

Throughout our work, we assume basic familiarity with properties of decoherence-free subspaces and subsystems \citep{Kempe_2001}, making use of the $k=3$ and $k=4$ encodings. For more details on the irrep structure of the former, we recommend the discussion in \citep{Kempe_2001b, van_Meter_2019}. For properties of the latter, see \citep{Woodworth_2006}. In addition to the references above, \citep{Fong_2011, Setiawan_2014, Zeuch_2014, Lidar_1998, Hsieh_2003, West_2010} contain helpful discussion of decoherence-free subspaces, as well as various methods of implementing logical two-qubit gates in practice. For a review-style exposition, see references \citep{Kempe_2001, Bacon_1999, Lidar_2003, Lidar_2014}, as well as Bacon's PhD thesis \citep{Bacon_2003}. For a thorough generalisation of the results of \citep{Kempe_2001} to the case of qudits, see \citep{Van_Meter_2021}.

As a Hamiltonian simulation problem, estimating the ground state energy of the isotropic Heisenberg Hamiltonian is $\mathsf{QMA}$-complete \citep{Piddock_2015, Cubitt_2016}. In fact, such Hamiltonians are universal in the sense that any other Hamiltonian evolution can be encoded with an isotropic Heisenberg model on a larger system \citep{Piddock_2018, Cubitt_2018}. This means that in principle, any $\mathsf{BQP}$ circuit can be simulated through a carefully engineered evolution of the Heisenberg Hamiltonian. However, the encoding presented in the appendix of \citep{Cubitt_2018} crucially relies on encoding logical qubits using singlet states, and thus falls outside the scope of $\mathsf{XQP}$ circuits. 

Finally, in \citep{Aaronson_2016} a related restricted model of computation called $\mathsf{QBall}$ was considered. In their work, the authors defined quantum circuits restricted to gates of the form $\cos \theta \cdot R(e) + i \sin \theta \cdot R((ij))$, with the computational basis consisting of states $|\sigma \rangle$ indexed by permutations $\sigma \in S_n$, and the action of gates given by left regular representation $R((ij)) |\sigma \rangle = |(i j) \circ \sigma \rangle$. Similarly to the exchange interactions on qubits, the computational power of this model depends on the input states. For example, it was shown that on any input state $|\sigma \rangle = R(\sigma) |e \rangle$, $\mathsf{QBall}$ amplitudes can be represented as the trace of a unitary, and subsequently that any amplitude $\langle \sigma' | U_{\mathsf{QBall}} | \sigma \rangle$ can be efficiently approximated by a $\mathsf{DQC1}$ circuit. It was also shown (Appendix K of \citep{Aaronson_2016}) that the state:

\begin{equation}
    |\phi_0 \rangle = \frac{1}{\sqrt{k!(n-k)!}} \sum_{\sigma \in S_{k, n-k}} R(\sigma) |e \rangle
\end{equation}

\noindent encodes the computational basis state $|0^{ k} 1^{ n-k} \rangle$, and similarly $R((ij))| \phi_0 \rangle$ encodes $E_{ij} |0^{ k} 1^{ n-k} \rangle$. Finally, measurement in the computational basis was shown to be implementable by measurement in the $|\sigma \rangle$ basis. This implies that $\mathsf{QBall}$ circuits initialised to $|\phi_0 \rangle$ can simulate $\mathsf{XQP}$ circuits, and it follows that our results for $\mathsf{XQP}$ (in particular, Theorem \ref{thm:postxqpbqp}) prove that on inputs of the form $R(\sigma) |\phi_0 \rangle$, $\mathsf{QBall}$ circuits lie in an intermediate regime between $\mathsf{BPP}$ and $\mathsf{BQP}$. This partially resolves the open problem of classifying the computational power of $\mathsf{QBall}$ circuits on various input states, which was raised at the end of \citep{Aaronson_2016}.

\section{Definitions} \label{sec:definitions}

We begin by formalising the notion of computation on decoherence-free subspaces introduced in \citep{Kempe_2001}:

\begin{definition}[Decoherence-Free Subspace Computation \citep{Kempe_2001} ] \label{def:dfs}
    A $\mathsf{DFS}_k$ circuit (where $k = 3, 4$) on $kn$ qubits consists of the following:
    \begin{enumerate}
        \item Preparation of a state $| 0_L \rangle^{\otimes n}$, where:
        \begin{equation}
        |0_L \rangle = \begin{cases}
        |S \rangle \otimes |0 \rangle, & k = 3, \\
        |S \rangle \otimes |S \rangle, & k = 4,

        \end{cases}
        \end{equation}
         $|S \rangle = \frac{1}{\sqrt{2}}(|01 \rangle - |10 \rangle)$ is the singlet state, and:
         \begin{equation}
         Z_L = -E_{12}, \quad X_L = \frac{1}{\sqrt{3}}(E_{13} - E_{23}).
         \end{equation}
        \item Application of $\text{poly}(n)$ Heisenberg exchange gates $e^{i H t}$, where $H$ is a linear combination of exchange interactions and their (possibly nested) commutators. In other words, $H$ is an element of the Lie algebra generated by transpositions $E_{ij}$:
        \begin{equation}
            H = \sum_{i, j} \alpha_{ij} E_{ij} - i \sum_{i, j, k, l} \beta_{ijkl} [E_{ij}, E_{kl}] + \ldots
        \end{equation}
        \item Projective measurements $\{ |S \rangle \langle S|, \mathbf{1} - |S \rangle \langle S| \}$ on pairs of qubits, or measurements in the computational ($Z^{\otimes kn}$) basis.
    \end{enumerate}
\end{definition}

\noindent Our definition of evolution in $\mathsf{DFS}_k$ circuits is lenient, as often only pulses $e^{i \theta E_{ij}}$ on single qubit pairs can be switched on at a time. Sometimes, gates of the form $e^{i \alpha_{ij} E_{ij} + i \alpha_{kl} E_{kl}}$ may also be directly available by switching on two interactions simultaenously \citep{Kempe_2001}. However, in practice Hamiltonians in the full Lie algebra can be generated by short sequences of exchange gates using the Lie--Trotter formulae \cite{Marvian_2022}:

\begin{align}
&e^{i A \alpha \delta t} e^{ i  B \beta \delta t} = e^{i( \alpha A + \beta B) \delta t} + \mathcal{O}(\delta t^2), \\
& e^{i A \delta t} e^{i B \delta t} e^{-i A \delta t} e^{-i B \delta t} = e^{[A, B] \delta t^2} + \mathcal{O}(\delta t^3),
\end{align}

\noindent leading to a $\mathcal{O}(1 / \text{poly}(n))$ error in the approximation of the desired Hamiltonian evolution. We now define the class of exchange quantum polynomial time circuits, which is our main object of study. They can be seen as a restriction of $\mathsf{DFS}_k$ circuits to computational basis state preparation and measurement:

\begin{definition}[Exchange Quantum Polynomial Time Circuits] \label{def:xqp}
    An $\mathsf{XQP}$ circuit on $n$ qubits consists of the following:
    \begin{enumerate}
        \item Preparation of a state $| \mathbf{x} \rangle$ in the computational ($Z^{\otimes n}$) basis.
        \item Application of $\text{poly}(n)$ Heisenberg exchange gates $U_{ij}(\theta)$, where:
        \begin{equation}
            U_{ij}(\theta) = e^{i\theta E_{ij}} = \cos \theta \cdot \mathbf{1} + i \sin \theta \cdot E_{ij},
        \end{equation}
        and $E_{ij} = (X_i X_j + Y_i Y_j + Z_i Z_j + \mathbf{1})/2$ is a $\mathrm{SWAP}$ operation between qubits $i$ and $j$. 
        \item Measurement in the computational basis.
    \end{enumerate}
\end{definition}

\noindent Such circuits are efficient to describe by listing the input bitstring $\mathbf{x}$, and the parameters of each $U_{ij}(\theta)$ gate it is acted on. The choice of the bitstring states in the definition is arbitrary, and we are free to work with any orthonormal basis $| \alpha \rangle = V|0 \rangle$ and $|\beta \rangle = V |1 \rangle$, so long as the change of basis is global. This is because all transpositions $E_{ij}$ commute with $V^{\otimes n}$, therefore $\mathsf{XQP}$ circuits $C$ satisfy ${V^\dagger}^{{\otimes n}} C V^{\otimes n} = C$. In practice, this makes executing such circuits extremely simple on hardware; all that is required is the ability to initialise and measure each qubit in a common orthogonal basis, and apply the tunable Heisenberg exchange.

Many of our results regarding $\mathsf{XQP}$ (and $\mathsf{DFS}_k$) circuits will also apply when pulse angles are restricted to a finite set of values. To denote this, we define the following restricted circuit families:

\begin{definition} An $\mathsf{XQP}(\theta)$ $(\mathsf{DFS}_k(\theta))$ circuit is an $\mathsf{XQP}$ $(\mathsf{DFS}_k)$ circuit for which all Heisenberg exchange gates are of the form $U_{ij}(\theta)$ for some fixed $\theta$. Equivalently, $\mathsf{XQP}(\theta)$ $(\mathsf{DFS}_k(\theta))$ consists of $\mathsf{XQP}$ $(\mathsf{DFS}_k)$ circuits with pulse angles constrained to $m \theta$, for fixed $\theta$ and $m \in \mathbb{N}$.
\end{definition}

\noindent If $\theta / \pi \notin \mathbb{Q}$, then $\mathsf{XQP}(\theta)$ is dense in $\mathsf{XQP}$ and can approximate any $\mathsf{XQP}$ circuit to arbitrary precision. Perhaps surprisingly, if $\theta$ is a rational multiple of $\pi$ then $\mathsf{XQP}(\theta)$ can still form an infinitely large subset of $\mathsf{XQP}$. We will pay particular attention to $\mathsf{XQP}(\pi/4)$, which consists of circuits restricted to integer powers of $U(\pi/4) = \sqrt{\mathrm{SWAP}}$. We find that such circuits generate an infinitely large (Lemma \ref{lemma:dfs3universal}), yet strict (Theorem \ref{thm:xqpsubset}) subset of $\mathsf{XQP}$.

In Section \ref{sec:multiplicative} we also make use of the complexity class $\mathsf{PostXQP}$, consisting of $\mathsf{XQP}$ computations conditional on particular measurements of a subset of the output register. Here, we adopt the definition of \citep{Bremner_2010}:

\begin{definition}[Postselected Exchange Quantum Polynomial Time]

A language $L$ is in the class $\mathsf{PostXQP}$ iff there is an error tolerance $0 < \varepsilon < \frac{1}{2}$ and a uniform family $\{C_w\}$ of postselected $\mathsf{XQP}$ circuits with a specified single qubit output register $\mathcal{O}_w$ (for the $L$-membership decision problem) and a specified postselection register $\mathcal{P}_w$ such that:
\begin{enumerate}
    \item[(i).] if $w \in L$, then $p(\mathcal{O}_w = 1 | \mathcal{P}_w = \mathbf{p}) \geq 1 - \varepsilon$,
    \item[(ii).] if $w \notin L$, then $p(\mathcal{O}_w = 0 | \mathcal{P}_w = \mathbf{p}) \geq 1 - \varepsilon$.
\end{enumerate}

\end{definition}

\noindent In the definition, the registers $\mathcal{O}_w$ and $\mathcal{P}_w$ are specified out of the full output of the postselected circuits, meaning that not all qubits must be measured, or have their measurements recorded. In our case this means that we can use logically encoded qubits (using $kn$ physical qubits per logical qubit) and represent their logical measurement by the output of a subset of the physical qubit measurements. The successful logical SPAM can also be conditioned by postselections, as the definition does not discriminate between the use of postselection for state preparation, evolution, or measurement. The case where state evolution is conditioned on postselection is discussed in Section \ref{sec:multiplicative}, whereas the case where logical SPAM is conditioned on postselection is discussed in Appendix \ref{appendix:alternativeproof}.

\begin{remark}
$\mathsf{XQP} \subsetneq \mathsf{BQP}$
\end{remark}
\begin{proof}
    The non-universality of $\mathsf{XQP}$ follows from \citep{Childs2010}, where families of non-universal Hamiltonians were identified. As $U(\theta) = e^{i \theta E_{ij}}$ is generated by the Hamiltonian $E_{ij}$ (a transposition), each $U(\theta)$ is generated by a non-universal Hamiltonian and thus $\mathsf{XQP}$ computations are non-universal. We also note that another classification was given in \citep{bouland2016}, where the authors gave conditions under which two-qubit commuting Hamiltonians are universal. As the exchange Hamiltonian $E_{ij}$ is non-commuting (since $[E_{12}, E_{23}] \neq 0$), $\mathsf{XQP}$ falls outside of the scope of two-qubit commuting Hamiltonian circuits.
\end{proof}

 In Sections \ref{sec:semi_universality} and \ref{sec:schur_states} we will make reference to the decomposition of the $n$-qubit Hilbert space into irreducible representation spaces of the symmetric group $S_n$. This decomposition is succinctly described by Schur--Weyl duality:

 \begin{definition}[Schur-Weyl Duality] \label{def:schur_weyl}
    Under the joint action of $S_n$ and $SU(2)$, the $n$-qubit Hilbert space decomposes as:

\begin{equation} \label{eq:schur_weyl}
    (\mathbb{C}^2)^{\otimes n} \stackrel{S_n \times SU(2)}{\cong} \bigoplus_{J \in \mathcal{J}_n} V^{S_n}_J \otimes W^{SU(2)}_J,
\end{equation}

\noindent where $V_J$ ($W_J$) are the irrep spaces of $S_n$ ($SU(2)$) labelled by the total angular momentum $J$. Under the action of the symmetric group, $(V_J \otimes W_J)$ form isotypic components labelled by $J$. In other words, $W_J$ forms the multiplicity space of the $S_n$ irrep $V_J$. The set $\mathcal{J}_n$ of allowed angular momenta is given by $\mathcal{J}_n = \{0, 1, \ldots, n/2\}$ for even $n$, and $\mathcal{J}_n = \{1/2, 3/2, \ldots, n/2\}$ for odd $n$. The dimensions of the $S_n$ irrep spaces are given by $\dim V_J = \binom{n}{n/2 - J} - \binom{n}{n/2 - J - 1}$, and the dimensions of the $SU(2)$ irrep spaces are given by $\dim W_J = 2J + 1$.
 \end{definition}

\section{The Postselected Phase Gadget} \label{sec:gadget}

The central tool used throughout this work is the phase gadget $\mathcal{G}(\theta)$, which acts on a single qubit state $|\psi \rangle = \alpha | 0 \rangle + \beta |1 \rangle$, producing a phase gate $\text{diag}(1, e^{-2i \theta})$ upon a successful postselection on two ancillary qubits. The gadget only requires the ability to prepare and measure qubits in the computational basis and apply exchange gates $U(\theta)$. As we will discuss in Section \ref{sec:multiplicative}, with access to just the $U(\pi/4) = \sqrt{\mathrm{SWAP}}$ gate this is sufficient to implement the $S$ gate, and consequently perform postselected universal quantum computation.

Starting with the state $| \psi \rangle$ on the $i^\text{th}$ register, we proceed by appending an ancillary qubit $|0\rangle$ to the $j^\text{th}$ position, and apply the exchange interaction $e^{i \theta E_{i j}} = \cos ( \theta) \cdot \mathbf{1} + i \sin (\theta) \cdot E_{i j}$. This produces the state:

\begin{equation}
    \alpha e^{i \theta}|00 \rangle  +  \ \beta \cos \theta |10 \rangle + i \beta \sin \theta |01 \rangle.
\end{equation}

\noindent Following the exchange gate, we postselect on a measurement of $|0 \rangle$ on the $j^\text{th}$ qubit. The resulting (non-normalised) state is $\alpha e^{i \theta}|0 \rangle + \beta \cos \theta |1 \rangle$. We then append a second ancillary qubit $|1 \rangle$ in position $k$, and apply an exchange gate $e^{- i \theta E_{i k}} = \cos \theta \cdot \mathbf{1} - i \sin \theta \cdot E_{i k}$. This produces the (non-normalised) state:

\begin{equation}
\alpha e^{i \theta} \cos \theta |01 \rangle - i \alpha e^{i \theta} \sin \theta |10 \rangle + \beta e^{- i \theta} \cos \theta |11 \rangle.
\end{equation}

\noindent Postselection on an ancilla measurement of $|1 \rangle$ then gives:

\begin{equation}
    \alpha e^{i \theta} \cos \theta |0 \rangle  + \beta e^{- i \theta } \cos \theta |1 \rangle.
\end{equation}

\noindent In normalising the output state, the factors of $\cos \theta$ cancel out and we are left with $ \alpha e^{i \theta } |0 \rangle + \beta  e^{-i \theta } |1 \rangle$. Overall, a successful action of $\mathcal{G}(\theta)$ implements the equivalent of a phase gate $e^{i \theta Z } $ on the single qubit input -- the only scenario where it fails is when $\theta = \pi / 2$, in which case $e^{\frac{ i\pi}{2} E_{ij}} = i E_{i j}$ is a $\mathrm{SWAP}$ gate. In that case one can simply apply the gadget twice, setting $\theta = \pi / 4$ to obtain $\mathcal{G}(\pi/2) = \mathcal{G}(\pi/4) \circ \mathcal{G}(\pi/4)$, equivalent to a $Z$ gate. Our gadget is illustrated in Figure \ref{fig:phasegadget}.

Another phase gadget for synthesising phase gates using exchange interactions was proposed in \citep{Wu_2003}. The protocol therein is deterministic, in the sense that with mid-circuit measurements and adaptive gate applications it is possible to repeat steps until success, without risk of destroying the input state and a success probability which approaches $1$. However, unlike our protocol this requires the ability to form projective measurements in the singlet basis (as well as parity measurements of two qubits), which is not available in $\mathsf{XQP}$ circuits. As we discuss in Section \ref{sec:additive}, given access to small pulse angles and mid-circuit measurements (or a polynomially large number of ancillae), our gadget can also approach a success probability of $1$, albeit with a small risk of destroying the target state on failure.
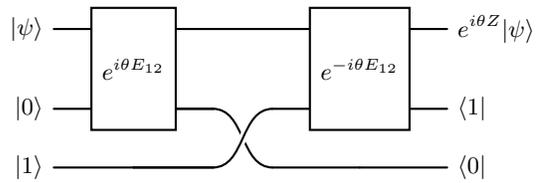
\begin{figure}
    \begin{quantikz} 
    \lstick[1]{\ket{\psi}} & \gate[wires=2]{e^{i \theta E_{12} }} &  & \gate[wires=2]{e^{-i \theta E_{12} }} & \rstick[1]{ $e^{i \theta  Z } | \psi \rangle$ }  \\
   \lstick{\ket{0}} & &\permute{2, 1} & &  \rstick{\bra{1}} \\
   \lstick{\ket{1}} & & & &  \rstick{\bra{0}} 
\end{quantikz}
    \caption{\textit{Quantum Circuit for the Phase Gadget}. Upon successful postselection on the ancilla measurements, the gadget implements a phase gate $e^{i \theta Z }$ on the input state. The success probability is given by $p(\text{success}) = \cos^2 \theta$.}
    \label{fig:phasegadget}
\end{figure}

\section{Multiplicative Weak Simulation of \texorpdfstring{$\mathbf{XQP}$}{XQP} Circuits Implies Collapse of the Polynomial Hierarchy} \label{sec:multiplicative}

We now prove our first result: postselected $\mathsf{XQP(\pi/4)}$ circuits (and thus postselected $\mathsf{XQP}$ circuits) are universal for quantum computation.

\begin{theorem}
$\mathsf{PostXQP(\pi/4)} = \mathsf{PostXQP} = \mathsf{PostBQP}$ \label{thm:postxqpbqp}
\end{theorem}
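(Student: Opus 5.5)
The inclusions $\mathsf{PostXQP}(\pi/4)\subseteq\mathsf{PostXQP}\subseteq\mathsf{PostBQP}$ are immediate. Every $\mathsf{XQP}(\pi/4)$ circuit is an $\mathsf{XQP}$ circuit. Every $\mathsf{XQP}$ circuit is in turn a polynomial-size quantum circuit built from two-qubit gates with efficiently specified angles, and these can be approximated to exponential precision by a universal gate set, so the postselected acceptance probabilities shift only negligibly. The content of the theorem is therefore the reverse inclusion $\mathsf{PostBQP}\subseteq\mathsf{PostXQP}(\pi/4)$. To prove it, I will show that postselected $\mathsf{XQP}(\pi/4)$ circuits can simulate a universal postselected computation with computational-basis SPAM, and then appeal to Aaronson's $\mathsf{PostBQP}=\mathsf{PP}$ framework, which is robust to the choice of universal gate set as long as each gate is implemented with postselection success probability bounded below by $1/\mathrm{poly}$ per gadget (or even exponentially small, since postselection absorbs it).

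The key resource is the phase gadget $\mathcal{G}(\pi/4)$ of Section~\ref{sec:gadget}. It uses only $U(\pm\pi/4)$; note $U(-\pi/4)=U(\pi/4)^7$, so it is an integer power of $\sqrt{\mathrm{SWAP}}$. It implements $e^{i\pi Z/4}$, which up to global phase is the $S$ gate, with success probability $\cos^2(\pi/4)=1/2$. With $S$ available via postselection, I will use the $\mathsf{DFS}_3$ (or $\mathsf{DFS}_4$) encoding. It is known from \citep{Kempe_2001} (and, in the $\sqrt{\mathrm{SWAP}}$-only setting, from the paper's Lemma~\ref{lemma:dfs3universal} / Theorem~\ref{thm:semiuniversal}) that exchange gates generated by $\sqrt{\mathrm{SWAP}}$ act densely on the encoded logical space. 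The plan then has three parts:
\begin{enumerate}
\item \emph{Logical state preparation.} Prepare singlets from $|01\rangle$. Applying $\sqrt{\mathrm{SWAP}}$ to $|01\rangle$ gives $\tfrac{1}{\sqrt2}e^{i\pi/4}(|01\rangle+i|10\rangle)$ up to normalisation conventions. A gadget-implemented $S^\dagger$ (three applications of $S$) on one qubit then rotates the relative phase $i$ to $-1$, producing $|S\rangle$. This yields $|0_L\rangle^{\otimes n}$ with constant postselection cost per qubit.
\item \emph{Universal logical gates.} Apply encoded gates using products of $\sqrt{\mathrm{SWAP}}$ pulses. Since $\langle\sqrt{\mathrm{SWAP}}\rangle$ generates a dense subgroup of the encoded $SU$ by semi-universality, the Solovay--Kitaev theorem gives efficient approximations of any logical circuit to $1/\mathrm{poly}$ precision and even to $2^{-\mathrm{poly}}$ with polynomial overhead.
\item \emph{Logical measurement.} Measure by un-preparing: run the singlet preparation in reverse using $S$ gadgets and $\sqrt{\mathrm{SWAP}}^\dagger$, so that singlet projection maps to a computational-basis outcome, then read out. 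Alternatively, postselect the leakage-detecting pairs.
\end{enumerate}
The whole construction is a postselected $\mathsf{XQP}(\pi/4)$ circuit whose postselection register is the union of all gadget ancillas and the $\mathsf{PostBQP}$ postselection qubit.

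The main obstacle is exactness and bounded error in the postselection setting. $\mathsf{PostBQP}$ is sensitive to errors that are small compared with the postselected probability, which can be exponentially small. Only finitely many gadgets are needed per logical gate, so the gadget successes multiply to at most $2^{-O(\mathrm{poly})}$, which is fine for postselection. The Solovay--Kitaev approximation error, however, must be driven below the postselection probability. I would handle this as Aaronson does, compiling $\mathsf{PostBQP}$ into a fixed gate set where needed. The fallback is to choose a target universal set that the $\sqrt{\mathrm{SWAP}}$ group realizes exactly on the logical space, for example Hadamard-like and Toffoli-like elements. If exact realization is unavailable, I would rely on the argument that a $\mathsf{PP}$-complete gap can be amplified so that $2^{-\mathrm{poly}}$ approximation errors suffice. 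A secondary check is that the global structure commuting with $V^{\otimes n}$ does not obstruct anything. It does not, because $S$ gadgets break the $SU(2)$ symmetry locally through the fixed ancilla inputs $|0\rangle,|1\rangle$.
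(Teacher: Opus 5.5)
Your plan is essentially the paper's alternative proof (Appendix~\ref{appendix:alternativeproof}): gadget-assisted singlet preparation, logical gates from $\sqrt{\mathrm{SWAP}}$ alone via Lemma~\ref{lemma:dfs3universal} and Theorem~\ref{thm:semiuniversal}, and a middle-qubit readout after undoing the preparation. However, the step you flag as the main obstacle is not closed, and the remedy you lean on would not close it.

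Those results give only density: the single-qubit generators rotate by the irrational angle $\arccos(5/8)$. So your logical circuit is approximate, and an error $\eta$ in the unnormalised final state shifts the postselected acceptance probability by order $\eta/\sqrt{p_{\mathrm{post}}}$. Amplifying the $\mathsf{PP}$ gap does not help. It only raises the tolerable shift to a constant, while $p_{\mathrm{post}}$ can be exponentially small. The gadget probabilities are not the problem: a successful gadget acts by the Kraus operator $\cos\theta\, e^{i\theta Z}$, so it contributes an exact, state-independent factor that cancels in the conditional probability.

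What is missing is a lower bound on the postselection probability of the circuit you simulate. Take a target circuit over a fixed universal gate set with algebraic entries, such as Aaronson's $\mathsf{PP}$-hardness circuit. A nonzero postselection probability is then at least $2^{-q(n)}$ for some polynomial $q$, by the standard algebraic-number argument (due to Kuperberg). Compiling each logical gate by Solovay--Kitaev to precision $2^{-q(n)}/\mathrm{poly}(n)$, at polynomial cost, then keeps the conditional acceptance probability within a small constant of the ideal. Your other suggested fix, an exact universal set realised by $\sqrt{\mathrm{SWAP}}$ alone on the $\mathsf{DFS}_3$ logical space, is not established anywhere and would need its own proof.

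The paper's main proof avoids the issue by spending gadgets inside the computation, not only at state preparation and measurement. It uses the dual-rail encoding $|0_L\rangle=|01\rangle$, $|1_L\rangle=|10\rangle$. There the gates $H_L=S^\dagger_iU_{ij}(\pi/4)S_j$, $S_L=H_L^\dagger U(3\pi/4)H_L$ and $CS_L$ are exact. A $\mathsf{PostBQP}$ circuit over this algebraic universal set is therefore reproduced with no approximation, and logical SPAM is just $|01\rangle$ and the $\{|01\rangle,|10\rangle\}$ readout. Your encoding uses only $O(n+m)$ gadgets, and it is the form the paper reuses in Lemma~\ref{thm:xqpbqp}, where the gadget count matters. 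But it is precisely what forces the precision argument above.

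Minor corrections:
\begin{itemize}
\item $\mathcal{G}(\pi/4)$ implements $e^{i\pi Z/4}\propto\mathrm{diag}(1,-i)=S^\dagger$, not $S$. The gate $S$ itself comes from the single gadget $\mathcal{G}(3\pi/4)$, so no threefold application is needed.
\item $(S\otimes\mathbf{1})U(\pi/4)|01\rangle=|S\rangle$, with $U(\pi/4)|01\rangle=(|01\rangle+i|10\rangle)/\sqrt2$ exactly (no extra phase).
\item Your easy inclusion $\mathsf{PostXQP}\subseteq\mathsf{PostBQP}$ with arbitrary angles carries the same precision caveat, although the paper also treats it as immediate.
\end{itemize}
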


\noindent It is sufficient to prove the $\mathsf{PostXQP(\pi/4)}$ case only. Obviously, $\mathsf{PostXQP(\pi/4)} \subseteq \mathsf{PostBQP}$. To show that $\mathsf{PostBQP} \subseteq \mathsf{PostXQP(\pi/4)}$, we will show that any postselected $\mathsf{BQP}$ computation can be implemented on a postselected $\mathsf{XQP(\pi/4)}$ circuit with two physical qubits encoding each logical qubit, plus ancillae. In contrast to the $\mathsf{DFS}_k$ encoding using singlet states, we will encode the logical $|0_L \rangle$ and $|1_L \rangle$ states as $|01\rangle$ and $|10 \rangle$ respectively, meaning that logical SPAM can be implemented with the physical SPAM available in $\mathsf{XQP(\pi/4)}$. 

\begin{proof} 

We will make use of just two physical quantum gates: $U(n\pi/4) = (\sqrt{\mathrm{SWAP}})^n$, and the $S = \text{diag}(1, i)$ gate which can be obtained by postselection via $\mathcal{G}(3\pi/4)$ (we can also obtain $S^\dagger = S^3$ directly via $\mathcal{G}(\pi/4)$). The former is given by:

\begin{equation}
    U(\pi/4) = \sqrt{\mathrm{SWAP}} =  \begin{pmatrix} e^{\frac{i \pi}{4}} & 0 & 0 & 0 \\0 & \frac{1}{\sqrt{2}} & \frac{i}{\sqrt{2}} & 0 \\ 0 & \frac{i}{\sqrt{2}} & \frac{1}{\sqrt{2}} & 0 \\ 0 & 0 & 0 & e^{\frac{i \pi}{4}}\end{pmatrix}  .
\end{equation}

\noindent We encode our logical qubits as $|0_L \rangle = |01\rangle, \ |1_L \rangle = |10\rangle$. To obtain our universal set of gates we will first construct $H_L$, an operation which acts as the Hadamard gate on the logical space:
\begin{equation}
    H_L = S^\dagger_i U_{ij}(\pi/4) S_j  = i \times \left( \begin{array}{c | c | c} e^{\frac{-i \pi}{4}} & &  \\ \hline &  \mathlarger{H}  & \\ \hline
    & & e^{\frac{-i \pi}{4}}\end{array} \right).
\end{equation}
Clearly, $U(\pi/2) = iX_L$, and therefore, $H_L^\dagger U(\pi/2) H_L = iZ_L$. We  can also construct the logical phase gate:
\begin{equation}
    S_L = H_L^\dagger U(3\pi/4) H_L  = e^{3i \pi/4} \text{diag}(1, 1, i, 1).
\end{equation}

\noindent Now, construct the two qubit gate:
\begin{align}
    (CS_L)_{ij} &= \mathbf{1}_i \otimes (U(\pi/2)^\dagger S_L U(\pi/2))_{i+1, j} \otimes \mathbf{1}_{j+1} \\
    &= e^{3 i \pi / 4} \times \mathbf{1}_i \otimes \text{diag}(1, i, 1, 1)_{i+1, j} \otimes \mathbf{1}_{j+1}.
\end{align} 

\noindent Up to a global phase, this is a controlled phase gate \citep{Glaudell_2021} on the logical space. Suppose we have two logical qubits encoded in the registers $(i, i+1)$ and $(j, j+1)$. Applying $CS_L$ to qubits $(i+1, j)$ results in the following transformation:
\begin{align}
    |\psi \rangle &= \alpha |00_L \rangle + \beta |01_L \rangle + \gamma |10_L \rangle + \delta |11_L \rangle \\
    &= \alpha |01 01 \rangle + \beta |01 10 \rangle + \gamma |10 01 \rangle + \delta |10 10 \rangle \\
    & \rightarrow  \alpha |01 01 \rangle + \beta |01 10 \rangle + \gamma |10 01 \rangle + i \delta |10 10 \rangle \\
    & = \alpha |00_L \rangle + \beta |01_L \rangle + \gamma |10_L \rangle + i \delta |11_L \rangle.
\end{align}

\noindent We therefore obtain the universal set of logical gates $\{X_L, Z_L, H_L, S_L, CZ_L, CS_L\}$ using the physical gate $U(\pi/4)$, and the ${S, S^\dagger}$ gates obtained via postselection. 

Now, consider any $\mathsf{PostBQP}$ computation: without loss of generality, it will consist of a state and postselection register initialised to $|0\rangle^{\otimes n} |0\rangle^{\otimes m}$, a sequence of quantum gates, a computational basis measurement on the first qubit, and a postselection on $|0\rangle^{\otimes m}$. We can re-express this computation as a postselected $\mathsf{XQP(\pi/4)}$ circuit by initialising the state $|01\rangle^{\otimes n + m + k}$, where we've appended $2k = \mathcal{O}(\text{poly}(n, m))$ phase gadget ancillae. For each gate in the original circuit, synthesise the corresponding logical gate in the $\mathsf{XQP(\pi/4)}$ circuit using $U(\pi/4)$ gates and the phase gadget. For the single qubit measurement in the original circuit, perform the two-qubit measurement of $\{ |01 \rangle, | 10 \rangle \}$ on physical qubits $1, 2$ in the $\mathsf{XQP(\pi/4)}$ circuit. Finally, postselect on the measurement of $ | 01\rangle^{\otimes m + k}$ on the postselection registers to ensure the correct postselection of the original circuit, and the success of the phase gadgets. This shows that $\mathsf{PostBQP} \subseteq \mathsf{PostXQP(\pi/4)}$, which proves that $\mathsf{PostXQP(\pi/4)} = \mathsf{PostBQP}$.
\end{proof}

We remark that a $\mathsf{PostBQP}$ computation on $n$ qubits with $m$ measurements can also be re-expressed as a $\mathsf{PostXQP(\pi/4)}$ computation on $3n$ qubits with just $n + m$ phase gadget applications. This is achieved by using the sequence $(S \otimes \mathbf{1}) U(\pi/4) |01 \rangle$ to prepare the singlet states $\frac{1}{\sqrt{2}}(|01 \rangle - |10 \rangle)$, which may then be used to re-express the original $\mathsf{PostBQP}$ circuit as a $\mathsf{PostDFS}_3(\pi/4)$ circuit, using the fact that $\mathsf{DFS}_3(\pi/4) = \mathsf{BQP}$ (Lemma \ref{lemma:dfs3universal}). For the alternative proof, see Appendix \ref{appendix:alternativeproof}.

\begin{corollary} \label{cor:xqpsimulation}
    If the output probability distributions generated by uniform families of $\mathsf{XQP(\pi/4)}$ could be weakly classically simulated to within multiplicative error $1 \leq c < \sqrt{2}$, then $\mathsf{PostBPP = PP}$. This implies that the polynomial hierarchy collapses to its third level, i.e. $\mathsf{PH} = \Delta_3$. 
\end{corollary}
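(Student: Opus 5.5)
The plan is to follow the standard Bremner--Jozsa--Shepherd argument, using Theorem \ref{thm:postxqpbqp} as the only model-specific input. Suppose a classical randomised algorithm samples, for every $\mathsf{XQP}(\pi/4)$ circuit $C$ in a uniform family, from a distribution $q_C(\mathbf{y})$ with
\begin{equation}
\frac{1}{c}\, p_C(\mathbf{y}) \le q_C(\mathbf{y}) \le c\, p_C(\mathbf{y})
\end{equation}
for all outputs $\mathbf{y}$, and for all marginals on subsets of output registers. I would first note that the multiplicative bound on full outcomes implies the same bound on every marginal, by summing the inequality. The sampler then defines a $\mathsf{PostBPP}$ machine: run it and postselect classically on the register $\mathcal{P}_w$ taking the value $\mathbf{p}$.

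The key step is to show that this classical postselected sampler decides every $\mathsf{PostXQP}(\pi/4)$ language with bounded error. Take a language $L \in \mathsf{PostXQP}(\pi/4)$ with family $\{C_w\}$ and tolerance $\varepsilon$. For the conditional probability I get
\begin{equation}
q(\mathcal{O}_w = x \mid \mathcal{P}_w = \mathbf{p}) = \frac{q(x,\mathbf{p})}{q(\mathbf{p})} \ge \frac{1}{c^2}\, p(\mathcal{O}_w = x \mid \mathcal{P}_w = \mathbf{p}) \ge \frac{1-\varepsilon}{c^2}.
\end{equation}
This exceeds $1/2$ whenever $c^2 < 2(1-\varepsilon)$. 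The main obstacle is arranging that $\varepsilon$ can be taken arbitrarily small, so that every $c < \sqrt{2}$ works. I would amplify inside $\mathsf{PostBQP}$: by Theorem \ref{thm:postxqpbqp}, $\mathsf{PostXQP}(\pi/4) = \mathsf{PostBQP}$. $\mathsf{PostBQP}$ admits error reduction to any constant, using majority vote over independent postselected repetitions in the standard way (Aaronson's proof of $\mathsf{PostBQP}=\mathsf{PP}$). Hence every $\mathsf{PostBQP}$ language, in particular every $\mathsf{PP}$-complete one, has a $\mathsf{PostXQP}(\pi/4)$ family with arbitrarily small $\varepsilon$.

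A point to check is that the gadget encoding in the proof of Theorem \ref{thm:postxqpbqp} is uniform and has polynomial size. It must also preserve postselection probabilities that are nonzero, even if exponentially small. Multiplicative error handles such probabilities without issue, since $q(\mathbf{p}) \ge p(\mathbf{p})/c > 0$.

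With this, $\mathsf{PP} = \mathsf{PostBQP} = \mathsf{PostXQP}(\pi/4) \subseteq \mathsf{PostBPP} \subseteq \mathsf{PP}$, so $\mathsf{PostBPP} = \mathsf{PP}$. For the collapse, I would cite $\mathsf{PostBPP} \subseteq \mathsf{BPP}^{\mathsf{NP}} \subseteq \Delta_3$ and Toda's theorem $\mathsf{PH} \subseteq \mathsf{P}^{\mathsf{PP}}$. These give $\mathsf{PH} \subseteq \mathsf{P}^{\mathsf{PostBPP}} \subseteq \mathsf{P}^{\Delta_3} = \Delta_3$. Since both results are standard, this step is routine.
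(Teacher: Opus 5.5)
Your argument is essentially the paper's own: the Bremner--Jozsa--Shepherd conditional-probability bound $\tilde S_w(x)\ge S_w(x)/c^2$ gives the threshold $c^2<1+2\delta$, which is your $c^2<2(1-\varepsilon)$ with $\delta=\tfrac12-\varepsilon$. Your explicit error-reduction step is actually an improvement. The paper jumps from $c^2<1+2\delta$ to ``$c<\sqrt{2}$'' without saying that $\delta$ can be pushed towards $\tfrac12$. Amplifying in $\mathsf{PostBQP}$ and compiling through the circuit-level construction of Theorem \ref{thm:postxqpbqp}, as you do, supplies exactly that.

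One correction is needed in your final step. The inclusion $\mathsf{BPP}^{\mathsf{NP}}\subseteq\Delta_3$ is not known to hold; it is the $\mathsf{NP}$-relativised form of the open question $\mathsf{BPP}\subseteq\mathsf{P}^{\mathsf{NP}}$. Routed through $\mathsf{BPP}^{\mathsf{NP}}$, your chain only gives $\mathsf{PH}\subseteq\mathsf{P}^{\mathsf{BPP}^{\mathsf{NP}}}=\mathsf{BPP}^{\mathsf{NP}}\subseteq\Sigma_3$. That is still a third-level collapse, but it is weaker than the stated $\mathsf{PH}=\Delta_3$.

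To get $\Delta_3$, use $\mathsf{PostBPP}=\mathsf{BPP}_{\mathrm{path}}\subseteq\mathsf{P}^{\Sigma_2[\log]}\subseteq\Delta_3$ (Han--Hemaspaandra--Thierauf). Alternatively, argue directly: Stockmeyer's approximate counting is deterministic in $\mathsf{FP}^{\Sigma_2}$, so the two postselected path counts can be estimated and compared within $\Delta_3$. Then $\mathsf{PH}\subseteq\mathsf{P}^{\mathsf{PP}}=\mathsf{P}^{\mathsf{PostBPP}}\subseteq\mathsf{P}^{\Delta_3}=\Delta_3$ as required.
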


\noindent Of course, Corollary \ref{cor:xqpsimulation} also holds for $\mathsf{XQP}$. More importantly, it tells us that circuits consisting solely of computational basis SPAM and $\sqrt{\mathrm{SWAP}}$ are hard to simulate classically. Given the ease of implementing $\sqrt{\mathrm{SWAP}}$ in practice, implementing $\mathsf{XQP(\pi/4)}$ circuits could form one of the lowest bars for experimental demonstrations of quantum advantage.

\section{Additive Weak Simulation of \texorpdfstring{$\mathbf{XQP}$}{XQP} Circuits is Unlikely} \label{sec:additive}

In addition to the theoretical utility of our phase gadget, it can also be shown to be practical -- in the sense that the postselection probability for $\mathcal{G}(\theta)$ can be brought arbitrarily to close to $1$ with polynomially large overheads in space and time. Consequently, we can use it to create $\mathsf{XQP}$ circuits that sample from $\mathsf{BQP}$ computations on $n$ qubits with a success probability $1 - \varepsilon$ with time and ancilla overheads of just $\mathcal{O}(n^2 / \varepsilon)$. To boost the gadget probabilities, we assume the ability to implement small pulses and segment each $\mathcal{G}(\theta)$ gadget into a sequence of high-likelihood $\mathcal{G}(\theta/N)$ gadgets. In turn, this gadget only needs to be applied $\mathcal{O}(n)$ times to simulate any universal quantum computation on $n$ qubits. 

\begin{lemma} \label{thm:xqpbqp} With $N = \mathcal{O}(n / \varepsilon)$, a $\mathsf{PostXQP}(\pi / 4N)$ circuit can successfully implement any $\mathsf{BQP}$ computation on $n$ logical qubits with probability $1 - \varepsilon$ using $3n$ physical qubits, $\mathcal{O}(n^2 / \varepsilon)$ ancillary qubits, and a total gadget time overhead of $\mathcal{O}(n^2 / \varepsilon)$. With mid-circuit measurements, the number of required ancillae is $2$.

\end{lemma}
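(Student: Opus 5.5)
The plan is to go through the $\mathsf{DFS}_3$ encoding rather than the two-qubit encoding of Theorem \ref{thm:postxqpbqp}, because then the only non-exchange resource needed is a constant number of $S$ gates per logical qubit. For each of the $n$ logical qubits we use three physical qubits initialised to $|010\rangle$. On the first pair we apply $U(\pi/4)$ and then an $S$ gate on the first qubit, which gives $(S\otimes\mathbf{1})U(\pi/4)|01\rangle = |S\rangle$ up to global phase. This produces $|0_L\rangle = |S\rangle\otimes|0\rangle$. By Lemma \ref{lemma:dfs3universal}, exchange gates (together with Lie--Trotter products, which stay inside $\mathsf{XQP}$) then act universally on the encoded space. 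Logical measurement needs the singlet projector, and we implement it by running the same singlet-preparation circuit in reverse: an $S^\dagger$ on the first qubit, then $U(-\pi/4)$, then a computational-basis measurement. This maps $|S\rangle$ to $|01\rangle$ up to phase. So each logical qubit needs $\mathcal{O}(1)$ phase gates, at most $\mathcal{O}(n)$ in total (one or two per qubit for preparation, plus one per measured qubit). This explains the count of $n+m$ gadgets and the $3n$ physical qubits in the statement.

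Next, we make each phase gate reliable. The gate $S$ (or $S^\dagger$) equals $e^{\pm i\pi Z/4}$ up to phase, and $e^{i\theta Z}$ factors exactly as $N$ copies of $e^{i(\theta/N)Z}$. We therefore replace one $\mathcal{G}(\pi/4)$ by $N$ consecutive gadgets $\mathcal{G}(\pi/4N)$, each built from exchange gates $U(\pm\pi/4N)$. By the computation of Section \ref{sec:gadget}, each small gadget succeeds with probability $\cos^2(\pi/4N)$, and this holds independently of the input state. The reason is that both branches of the state are multiplied by the same factor $\cos\theta$, so the success probability is state-independent. Hence the $N$ small gadgets succeed jointly with probability
\begin{equation}
\cos^{2N}\!\left(\frac{\pi}{4N}\right) \ge 1 - N\sin^2\!\left(\frac{\pi}{4N}\right) \ge 1 - \frac{\pi^2}{16N}.
\end{equation}
Combining over the $\mathcal{O}(n)$ required phase gates with a union bound, the total failure probability is $\mathcal{O}(n/N)$. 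Choosing $N=\mathcal{O}(n/\varepsilon)$ makes it at most $\varepsilon$.

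Finally, the resource count. The total number of small gadgets is $\mathcal{O}(n)\cdot N=\mathcal{O}(n^2/\varepsilon)$. Each uses two ancillae and two exchange gates, which gives $\mathcal{O}(n^2/\varepsilon)$ ancillae and gadget time. With mid-circuit measurement and reset, the two ancillae can be measured and reinitialised to $|0\rangle,|1\rangle$ after each use, so only $2$ are needed. Since every gate is an integer power of $U(\pi/4N)$ (note $\pi/4 = N\cdot\pi/4N$), the whole circuit lies in $\mathsf{PostXQP}(\pi/4N)$.

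The main obstacle I expect is making sure that the success probability of each gadget is input-independent and that the small gadgets compose exactly, with no accumulated approximation error. The factor $\cos\theta$ is common to both amplitudes, so the probability is exactly $\cos^2\theta$ whatever state the target qubit is in, including entangled states. If Lie--Trotter approximations are needed to realise the $\mathsf{DFS}_3$ logical gates, their error can be made $\varepsilon/2$ with polynomial overhead and absorbed into the bound.
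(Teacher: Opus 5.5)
Your proposal is correct and follows essentially the paper's route: a $\mathsf{DFS}_3$ encoding with $\mathcal{O}(1)$ gadget-synthesised $S^{\pm 1}$ gates per logical qubit (for singlet preparation and readout), each gadget split into $N$ small-angle gadgets, and $N=\mathcal{O}(n/\varepsilon)$, giving $\mathcal{O}(nN)=\mathcal{O}(n^2/\varepsilon)$ ancillae and gadget time, or $2$ reusable ancillae with mid-circuit measurement. The paper reaches the failure bound by a small-angle approximation; your observation that the postselected gadget map is exactly $\cos\theta$ times a unitary (so it succeeds with probability $\cos^2\theta$ even on entangled inputs), combined with the Bernoulli bound $\cos^{2N}(\theta/N)\ge 1-\theta^2/N$, makes that step rigorous. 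Your Lie--Trotter hedge is unnecessary (and arbitrary Trotter angles need not lie in $\mathsf{XQP}(\pi/4N)$), since Lemma~\ref{lemma:dfs3universal} already works with $U(\pi/4)=U(\pi/4N)^N$ alone.
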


\begin{proof}

We first calculate the probability of a successful application of $\mathcal{G}(\theta)$. This proceeds with two successful measurements, $p(x_i = 0) = 1 - |\beta|^2 \sin^2 \theta$. and $p(x_j = 1 | x_i = 0) = \cos^2 \theta / p(x_i = 0)$. Therefore, $p(\text{success}) = \cos^2 \theta$. The sequential application of the phase gadget satisfies $\mathcal{G}(\theta_1) \circ \mathcal{G}(\theta_2) = \mathcal{G}(\theta_1 + \theta_2)$, which we can exploit to boost the success probability close to $1$ for any choice of $\theta$. This is achieved by segmenting the angle into $\theta / N$, and applying the gadget sequentially to obtain $\mathcal{G}(\theta/N)^N = \mathcal{G}(\theta)$. This gives an overall success probability of $p(\text{success}) = \cos^{2N}(\theta / N)$.  If we allow mid-circuit measurement, at each round can postselect on $|0 1 \rangle_{jk}$ immediately and re-use it for the next round. Otherwise, this requires $2N$ ancillary qubits.

Now, suppose we wish to pick $N$ so that $p(\text{success}) \geq 1 - \epsilon$. From the small angle approximation, we have that $\epsilon \simeq \theta^2 / N$, giving $N = \mathcal{O}(1 / \epsilon)$. We aim to simulate a $\mathsf{BQP}$ computation on $n$ qubits with an overall success probability of $1 - \varepsilon$. We first prepare the $\mathsf{DFS}_3(\pi/4)$ state $|0^n_L \rangle = (|S \rangle |0 \rangle)^{\otimes n}$ on $3n$ physical qubits using $n$ many $S$ gates, as $(S \otimes \mathbf{1}) U(\pi/4) |01 \rangle = |S \rangle$. We then apply a sequence of $U(\pi/4)$ exchange gates encoding a $\mathsf{BQP}$ circuit. To perform a measurement in the logical basis via physical measurement in the computational basis, we can use the technique presented in Appendix \ref{appendix:alternativeproof}. This only requires measuring the middle physical qubit of each triple encoding a logical qubit, and negating the measurement outcome to obtain $|\lnot x_2 \dots \lnot x_{3n - 1} \rangle = |\mathbf{x}_L \rangle$, using one $S$ gate per logical qubit. From Lemma \ref{lemma:dfs3universal}, this is sufficient to encode a $\textsf{BQP}$ computation.

Overall, the procedure requires $2n$ applications of the phase gadget $\mathcal{G}(3\pi/4)$ and a $\text{poly}(n)$ number of exchange gates for the $\mathsf{DFS}_3(\pi/4)$ computation. When each application of $\mathcal{G}(3\pi/4)$ succeeds, we require $(1 - \epsilon)^{2n} \geq 1 - \varepsilon$. Thus we can set $\epsilon = \varepsilon / 2n$, giving $4Nn = \mathcal{O}(n^2 / \varepsilon)$ phase gadget ancillae in total. Therefore, the additional time overhead from computation via the phase gadget is $\mathcal{O}(n^2 / \varepsilon)$. 

\end{proof}

As a direct consequence of Lemma \ref{thm:xqpbqp}, we can show that classically simulating $\mathsf{XQP}(\pi/4N)$ circuits to a small total variation distance is unlikely.

\begin{theorem} \label{thm:tvxqpbqp}
Suppose we can efficiently weakly simulate $\mathsf{XQP}(\pi/4N)$ circuits on $n$ qubits to within total variation distance $\delta$. Then, we can efficiently weakly simulate $\mathsf{BQP}$ circuits on $\mathcal{O}(n / N)$ qubits to within total variation distance $\delta + \varepsilon$, where $\varepsilon = \mathcal{O}(n/N^2)$.
\end{theorem}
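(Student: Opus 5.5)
The plan is to combine the construction of Lemma \ref{thm:xqpbqp} with a standard conditioning argument. We will compare the output distribution of the resulting circuit with the distribution obtained when every gadget succeeds.

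\textbf{Step 1: the circuit.} Fix a $\mathsf{BQP}$ circuit $C$ on $m$ logical qubits. Following the proof of Lemma \ref{thm:xqpbqp}, we build an $\mathsf{XQP}(\pi/4N)$ circuit $C'$ (without postselection). It uses $3m$ physical qubits for the $\mathsf{DFS}_3$ encoding. It also includes $\mathcal{O}(m)$ phase gadgets $\mathcal{G}(3\pi/4)$, each segmented into $N$ copies of $\mathcal{G}(3\pi/4N)$, with two fresh ancillae per copy. Integer powers of $U(\pi/4N)$ realise $U(\pi/4)$ and $U(3\pi/4N)$ exactly, so $C'$ is a genuine $\mathsf{XQP}(\pi/4N)$ circuit. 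Counting qubits gives $n = 3m + \mathcal{O}(mN)$, that is, $m = \mathcal{O}(n/N)$, which is the qubit count claimed in the statement.

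\textbf{Step 2: failure probability.} Each small gadget succeeds with probability $\cos^2(3\pi/4N) = 1 - \mathcal{O}(1/N^2)$. This holds independently of the input state, by the computation in Lemma \ref{thm:xqpbqp}, and the gadgets act sequentially on the output of earlier successful gadgets. A union bound over all $\mathcal{O}(mN)$ small gadgets therefore gives an overall failure probability of
\begin{equation}
\varepsilon = \mathcal{O}(mN \cdot 1/N^2) = \mathcal{O}(m/N) \le \mathcal{O}(n/N^2),
\end{equation}
using $m = \mathcal{O}(n/N)$.

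\textbf{Step 3: reduction to $C$.} Let $P$ be the output distribution of $C'$ over all qubits. Let $P_{\mathrm{succ}}$ be $P$ conditioned on every ancilla pair reading the success pattern $|01\rangle$. By the correctness of the gadget, $P_{\mathrm{succ}}$ restricted to the logical readout equals the output distribution of $C$ exactly. For any event $A$ with $\Pr[A] \ge 1 - \varepsilon$, one has $\|P - P|_A\|_{TV} \le \varepsilon$. Hence $\|P - P_{\mathrm{succ}}\|_{TV} \le \varepsilon$.

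Given a classical sampler $Q$ with $\|Q - P\|_{TV} \le \delta$, we output only the logical bits of each sample. This uses the middle-qubit negation readout from Appendix \ref{appendix:alternativeproof}. The marginal map is a deterministic post-processing, so it does not increase total variation distance. The triangle inequality then gives $\|Q_L - P_C\|_{TV} \le \delta + \varepsilon$, which is the claimed bound.

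\textbf{Main obstacle.} The delicate point is Step 2's claim that gadget failures only cost total variation distance and do not corrupt the conditional distribution. A failed gadget destroys the target state, so the argument must not rely on any post-failure behaviour. Conditioning on the global success event, rather than analysing failures, avoids this issue. The remaining care is in checking that the success probabilities multiply correctly along the sequential composition $\mathcal{G}(\theta/N)^N$ with state-independent factors. Lemma \ref{thm:xqpbqp} already gives this.
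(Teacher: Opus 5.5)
Your proposal is correct and is essentially the paper's proof. It uses the same circuit from Lemma~\ref{thm:xqpbqp} and an $\varepsilon$-bound from conditioning on gadget success: the paper writes the marginal as $D'=(1-\varepsilon)Q+\varepsilon\,D(\cdot\mid\mathbf{y}\neq 01^k)$, while you compare $P$ with $P_{\mathrm{succ}}$ on the full output. Both then apply the triangle and data-processing inequalities with $\varepsilon=\mathcal{O}(m/N)$ and $\Theta(mN)$ ancillae, giving $m=\mathcal{O}(n/N)$ and $\varepsilon=\mathcal{O}(n/N^2)$. Your union bound, using the state-independent per-gadget success probability $\cos^2(3\pi/4N)$, replaces the paper's small-angle estimate cleanly. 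One small point: deducing $m=\mathcal{O}(n/N)$ needs the ancilla count to be $\Omega(mN)$, not merely $\mathcal{O}(mN)$. That does hold here, since each gadget uses exactly $2N$ fresh ancillae.
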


\begin{proof}
Our goal is to construct an efficient classical sampler $\tilde{\mathcal{D}}(\mathbf{z})$ for a $\mathsf{BQP}$ output distribution $\tilde{D}(\mathbf{z}) = |\langle \mathbf{z} | \tilde{U} | \mathbf{0} \rangle |^2$ on $m$ qubits, given an efficient classical sampler $\mathcal{D}(\mathbf{x,y})$ for an $\mathsf{XQP}(\pi/4N)$ output distribution on $n$ qubits. We will again proceed by encoding the $\mathsf{BQP}$ computation into a $\mathsf{DFS}(\pi/4)_3$ computation encoded in an $\mathsf{XQP}(\pi/4N)$ circuit. 

Let $n = 3m + 2k$, where $m$ is the number of $\mathsf{DFS}_3(\pi/4)$ logical qubits, and $2k$ is the number of ancillae for the phase gadgets. Define the distribution $D(\mathbf{x, y}) = | \langle \mathbf{x, y} | U | 010^m 01^k \rangle |^2$, where $U$ is an $\mathsf{XQP}(\pi/4N)$ circuit such that on measurement $\mathbf{y} = 01^k$, the $|010^m \rangle$ state is (i) encoded into the $\mathsf{DFS}_3(\pi/4)$ state $| 0^m_L \rangle$, (ii) an encoded $\mathsf{BQP}$ circuit $\tilde{U}$ is performed on the logical space using $\sqrt{\mathrm{SWAP}}$ gates, and (iii) the output state is decoded back into the computational basis, such that for a measurement $x_1 x_2 x_3$ of consecutive triples on the $3m$-qubit register, $|z_L \rangle = |f(x_1, x_2, x_3)\rangle$ where $f$ is some Boolean function (see Appendix \ref{appendix:alternativeproof}). Conditioning on the measurement $\mathbf{y} = 01^k$ on the ancilla register,

\begin{equation}
D(\mathbf{x} | \mathbf{y} = 01^k) = \frac{D(\mathbf{x}, \mathbf{y} = 01^k)}{ 1 - \varepsilon } = Q(\mathbf{x}),
\end{equation} 

\noindent where $Q(\mathbf{x})$ is the desired $\mathsf{BQP}$ output distribution up to a deterministic post-processing map $g: \{0, 1 \}^{3m} \rightarrow \{0, 1\}^m$, where:

\begin{equation}
g(\mathbf{x}) = f(x_1, x_2, x_3), \dots, f(x_{3m-2}, x_{3m-1}, x_{3m}) = \mathbf{z},
\end{equation}

\noindent so that $g(Q(\mathbf{x})) = \tilde{D}(\mathbf{z})$. Also define the marginal distribution $D'(\mathbf{x}) = \sum_{\mathbf{y}} D(\mathbf{x, y})$. The total variation distance (TVD) between $D'$ and $Q$ is given by:

\begin{align}
    & \|D' - Q \| 
    =  \frac{1}{2} \sum_{\mathbf{x}} \left| \sum_{\mathbf{y}} D(\mathbf{x, y}) - Q(\mathbf{x}) \right| \\
     &= \frac{1}{2} \sum_{\mathbf{x}} \left| D(\mathbf{x}, \mathbf{y} = 01^k) + D(\mathbf{x}, \mathbf{y} \neq 01^k) - Q(\mathbf{x}) \right| \\
     &= \frac{1}{2} \sum_{\mathbf{x}} \left|  (1 - \varepsilon) D(\mathbf{x} | \mathbf{y} = 01^k) + \varepsilon D(\mathbf{x} | \mathbf{y} \neq 01^k) - Q(\mathbf{x}) \right| \\
     &= \varepsilon \cdot \frac{1}{2} \sum_{\mathbf{x}} \left|  D(\mathbf{x} | \mathbf{y} \neq 01^k) - D(\mathbf{x} | \mathbf{y} = 01^k) \right| \leq \varepsilon .
\end{align}

\noindent Now, suppose we can efficiently sample from a distribution $\mathcal{D}(\mathbf{x,y})$, such that in total variation distance $\| \mathcal{D} - D \| \leq \delta$. Define the marginal $\mathcal{D}'(\mathbf{x}) = \sum_{\mathbf{y}} \mathcal{D}(\mathbf{x, y})$. The TVD between $\mathcal{D}'(\mathbf{x})$ and $D'(\mathbf{x})$ is given by:
\begin{align}
    & \| \mathcal{D}' - D' \| 
    =  \frac{1}{2} \sum_{\mathbf{x}} \left| \sum_{\mathbf{y}} ( \mathcal{D}(\mathbf{x, y}) -  D(\mathbf{x, y}) ) \right| \\
    & \leq \frac{1}{2} \sum_{\mathbf{x, y}} \left| \mathcal{D}(\mathbf{x, y}) -  D(\mathbf{x, y}) \right| \\
    & \leq \delta.
\end{align}

\noindent By the triangle inequality, $\| \mathcal{D}' - Q \| \leq \| \mathcal{D}' - D' \| + \| D' - Q \| \leq \delta + \varepsilon$. With our classical sampler, we can ignore the ancilla register altogether and read off $\mathbf{x}$ from $\mathcal{D}(\mathbf{x, y})$ to sample from $\mathcal{D}'(\mathbf{x})$. We can then apply the post-processing map $g$ to the output to sample from a distribution $\tilde{\mathcal{D}}(\mathbf{z}) = g(\mathcal{D}'(\mathbf{x}))$. By the data processing inequality \citep{Polyanskiy_2025}, $\| \tilde{\mathcal{D}} - \tilde{D} \| = \| g(\mathcal{D}') - g(Q) \| \leq \| \mathcal{D}' - Q \| \leq \delta + \varepsilon$. Therefore, we can efficiently sample from a distribution $\tilde{\mathcal{D}}(\mathbf{z})$ which is close to the output of a $\mathsf{BQP}$ computation in TVD $\delta + \varepsilon$. By Lemma \ref{thm:xqpbqp}, $k = \mathcal{O}(m^2 / \varepsilon)$, $\varepsilon = \mathcal{O}(m/N)$, so $k = \mathcal{O}(mN)$. Therefore, $m = \mathcal{O}(n / N)$ and the additional error in TVD is $\varepsilon = \mathcal{O}(n/N^2)$.

\end{proof}

\section{Semi-Universality of \texorpdfstring{$\mathbf{XQP(\pi/4)}$}{XQP(pi/4)} Circuits} \label{sec:semi_universality}

We have shown that postselected $\mathsf{XQP(\pi/4)}$ circuits are universal for quantum computation. Similarly, we can show that given access to singlet states, the $\sqrt{\mathrm{SWAP}}$ exchange interaction alone is sufficient for universal quantum computation.

\begin{lemma} \label{lemma:dfs3universal}
$\mathsf{DFS}_3(\pi/4) = \mathsf{BQP}$ 
\end{lemma}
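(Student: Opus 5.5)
The inclusion $\mathsf{DFS}_3(\pi/4) \subseteq \mathsf{BQP}$ is immediate, since every gate is a fixed two-qubit unitary and every SPAM operation is efficiently implementable in the circuit model. For the converse, the plan is to reduce to the known encoded universality of $\mathsf{DFS}_3$ \citep{Kempe_2001,DiVincenzo_2000}. That result synthesises logical gates from exchange Hamiltonians on the $J=1/2$ irreps: within one triple, the single-logical-qubit $SU(2)$ comes from $E_{12},E_{13},E_{23}$; an entangling gate comes from exchanges across two triples acting on the six-qubit $J=1/2$ (or total-$J$) sector. It therefore suffices to show two things. First, the group generated by integer powers of $\sqrt{\mathrm{SWAP}}_{ij}$ is dense in the relevant unitary groups on the irreps that carry the encoded computation. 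Second, by Solovay--Kitaev, approximations are efficient.

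\textbf{Step 1 (single triple).} On three qubits, $U_{ij}(\pi/4)=e^{i\pi/4}\bigl(\tfrac{1}{\sqrt2}(\mathbf{1}+iE_{ij})\bigr)$ up to a global factor. On the two-dimensional $S_3$ irrep $V_{1/2}$, the transpositions act as reflections. I will take $E_{12}=\mathrm{diag}(1,-1)$ (so $Z_L=-E_{12}$) and $E_{13},E_{23}$ as reflections at $\pm 120^\circ$. Then $\sqrt{\mathrm{SWAP}}_{12}$ acts as a phase times $e^{i\frac{\pi}{4}Z_L}$, and $\sqrt{\mathrm{SWAP}}_{13}$ as a phase times a $\pi/2$ rotation about an axis at $120^\circ$ in the $X$--$Z$ plane. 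I will compute the $SU(2)$ part of a product such as $\sqrt{\mathrm{SWAP}}_{12}\sqrt{\mathrm{SWAP}}_{13}$ and show its rotation angle is an irrational multiple of $\pi$. The test is that its trace is not $2\cos$ of a rational multiple of $\pi$, which I will certify via Niven-type/algebraic-integer arguments on the trace, e.g.\ something like $\tfrac{1}{2}\pm\tfrac{\sqrt3}{4}$-style values. Two non-commuting rotations, one of infinite order, generate a dense subgroup of $SU(2)$. This gives approximations of any logical single-qubit gate, up to an irrep-dependent phase that is irrelevant since the logical state lives in a single isotypic copy of $V_{1/2}$.

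\textbf{Step 2 (two triples).} Next, the entangling gate. On six qubits I will restrict to the sector where the encoded pair lives: both triples have $J=1/2$, and the relevant subspace is the $J_{\text{tot}}$ components of $V_{1/2}\otimes V_{1/2}$ inside $S_6$ irreps, namely $V_0$ and $V_1$ of dimensions 5 and 9. I want the $\sqrt{\mathrm{SWAP}}$-generated group to be dense in $SU(V_J)$ for the irrep(s) containing the logical space. I will use the standard closure argument. The Lie algebra of the closure contains each logarithm of an infinite-order element; for instance, $\sqrt{\mathrm{SWAP}}_{ij}^{\,}$ conjugated by Step~1 dense elements yields infinite-order products whose logs lie in $\mathrm{span}\{E_{ij}\}$ projected to $\mathfrak{su}(V_J)$. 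Since transpositions generate the full $\mathfrak{su}(V_J)$ on each irrep \citep{Kempe_2001} (irreducibility plus the simple-Lie-algebra argument), the closure contains all of $SU(V_J)$. Concretely, Step~1 already places $e^{itE_{ij}}$ for arbitrary $t$ inside the closure for each triple's pairs: dense $SU(2)$ on $V_{1/2}$ restricted to a triple, extended via the other triple's action. I will attempt to bootstrap that to a cross-triple transposition by finding, e.g., the subgroup generated by $\sqrt{\mathrm{SWAP}}_{34}$ and $\sqrt{\mathrm{SWAP}}_{23}$ on qubits $2,3,4$, which is again a three-qubit instance of Step~1. This yields continuous one-parameter groups $e^{itE_{ij}}$ for all nearest-neighbour pairs, and hence the full Lie algebra generated by transpositions, which is what Definition~\ref{def:dfs} allows.

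\textbf{Main obstacle.} The main difficulty is that Step~1 gives density only on $V_{1/2}$ of three qubits, not continuous $e^{itE_{ij}}$ on the whole space. On $V_{3/2}$ the generated group is finite, since all $E_{ij}=\mathbf{1}$ there. I must check that the phases on the different irreps do not obstruct recovering $e^{itE_{ij}}$ on larger systems. The cleanest fix is to argue directly via the closed-subgroup theorem: the closure $G$ of the $\sqrt{\mathrm{SWAP}}$ group is a compact Lie group, and its identity component has a Lie algebra $\mathfrak g$ that is an ideal-stable subalgebra of $\mathfrak{u}(\text{commutant})$. I will then show $\mathfrak g\supseteq[\mathfrak{a},\mathfrak{a}]$, where $\mathfrak{a}$ is the transposition algebra, by conjugation invariance: $\mathfrak g$ is invariant under $\mathrm{Ad}_G$, and $\mathrm{Ad}$ by each $\sqrt{\mathrm{SWAP}}$ is nontrivial on $[\mathfrak a,\mathfrak a]$, which is semisimple and has simple pieces $\mathfrak{su}(V_J)$. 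The only possible failure is a finite $G$, which Step~1 rules out. Encoded logical gates lie in $[\mathfrak a,\mathfrak a]$ up to phases, so Solovay--Kitaev then completes the proof.
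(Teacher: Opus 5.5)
Your Step~1 is fine and is essentially the paper's single-qubit argument. The paper uses the group commutators $G_1,G_2$ (half-trace $5/8$). Your product $\sqrt{\mathrm{SWAP}}_{12}\sqrt{\mathrm{SWAP}}_{13}$ has half-trace $\tfrac{1}{2}-\tfrac{1}{2}\,\mathbf{n}_{12}\cdot\mathbf{n}_{13}=\tfrac{3}{4}$ on $V_{1/2}$, which is an irrational angle by Niven's theorem. One correction: ``two non-commuting rotations, one of infinite order, generate a dense subgroup'' is false in general. For example, $e^{i\varphi Z}$ and $iX$ generate only the normaliser $N(T)$ of a torus $T$. It is harmless here only because $\sqrt{\mathrm{SWAP}}$ has trace $\sqrt{2}\neq 0$ on $V_{1/2}$, so it cannot lie in the trace-zero coset $N(T)\setminus T$. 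You should say this, or use two infinite-order elements with non-parallel axes as the paper does.

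The genuine gap is Step~2, i.e.\ everything that is supposed to yield an entangling logical gate.

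\emph{The bootstrap is false as stated.} The generated phases on $V_{3/2}$ form a finite group and $SU(2)$ is connected, so what Step~1 actually puts in the closure is $\exp\bigl(it(E_{ij}-\Pi^{(ijk)}_{\mathrm{sym}})\bigr)$, not $e^{itE_{ij}}$. Moreover $\Pi^{(ijk)}_{\mathrm{sym}}$ is not central on six qubits: the six-qubit $V_1$ restricted to $S_3$ contains both the trivial and the two-dimensional irrep. Hence on $V_1$ this generator is not $E_{ij}$ plus a scalar, and the Kempe et al.\ fact that transpositions generate $\mathfrak{su}(V_J)$ cannot be invoked.

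\emph{Your proposed fix is a non sequitur.} $\mathrm{Ad}_G$-invariance of $\mathfrak{g}$ under a group $G$ not yet known to be large does not make $\mathfrak{g}$ an ideal of $\mathfrak{a}$. Also, ``the only possible failure is a finite $G$'' is false. The closure of the gates acting inside each triple separately is infinite, yet it acts on the logical space only through $SU(2)\otimes SU(2)$. Nothing in your argument uses the cross-triple gates to exclude such subgroups.

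The paper does not prove this step; it cites explicit exchange-only entangling constructions. A rigorous version of your route needs a genuine generation argument:
\begin{enumerate}
\item The closure's Lie algebra contains $\mathfrak{su}(V^{(T)}_{1/2})\otimes\mathbf{1}$ for every triple $T$.
\item On four qubits, $T=\{1,2,3\}$ and $T=\{2,3,4\}$ act as all of $\mathfrak{su}(V_0)\cong\mathfrak{su}(2)$. On $V_1\cong\mathbb{C}^3$ they act as block $\mathfrak{su}(2)$'s on two different $2$-planes, which together generate $\mathfrak{su}(3)$.
\item Since $\mathfrak{su}(2)\not\cong\mathfrak{su}(3)$, Goursat's lemma gives $\mathfrak{su}(2)\oplus\mathfrak{su}(3)$.
\item Lemma~\ref{lemma:semiuniversalinduction} propagates this to all $\mathfrak{su}(V_J)$ on $3n$ qubits. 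In particular it reaches $V_{n/2}$, which contains the whole logical space for the gauge $|S\rangle|0\rangle$.
\end{enumerate}
This is in substance the route of Theorem~\ref{thm:semiuniversal}. Its proof uses only the three-qubit part of the present lemma, so there is no circularity.
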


\begin{proof}
We first write the matrix elements of $E_{12}, E_{23}$ and $E_{13}$ in the $\{ |0_L \rangle, |1_L \rangle \}$ basis of $\mathsf{DFS}_3$ using Young's orthogonal form:

\begin{align}
E_{12} &= \begin{pmatrix} -1 & 0 \\ 0 & 1 \end{pmatrix}, \\
E_{23} &= \frac{1}{2}\begin{pmatrix} 1 & -\sqrt{3} \\ -\sqrt{3} & -1 \end{pmatrix}, \\
E_{13} &= \frac{1}{2} \begin{pmatrix} 1 & \sqrt{3} \\ \sqrt{3} & -1 \end{pmatrix}.
\end{align}

\noindent The exchange gates $U(\pi/4)_{ij}$ are thus single-qubit rotations $\cos (\pi/4) \mathbf{1} + i \sin (\pi/4) (\mathbf{n} \cdot \mathbf{\sigma}_L)$ on the logical space, where $\mathbf{\sigma}_L = (X_L, Y_L, Z_L)$, \ $\mathbf{n}_{12} = (0, 0, -1), \ \mathbf{n}_{23} = (-\sqrt{3}/2, 0, 1/2)$, and $\mathbf{n}_{13} = (\sqrt{3}/2, 0, 1/2)$. Now consider the following choice of generators $G_i$ constructed from sequences of $\sqrt{\mathrm{SWAP}}_{ij}$:

\begin{align}
    G_1 &= U(\pi/4)_{12} U(\pi/4)_{23} U(3\pi/4)_{12} U(3\pi/4)_{23}, \\
    G_2 &= U(\pi/4)_{23} U(\pi/4)_{13} U(3\pi/4)_{23} U(3\pi/4)_{13}.
\end{align}

\noindent For this choice, $G_i = \cos \phi \mathbf{1} + i \sin \phi (\mathbf{m}_i \cdot \mathbf{\sigma}_L)$, where $\phi = \arccos(5/8)$, $\mathbf{m}_1 = (\sqrt{3/39}, -\sqrt{27/39}, -\sqrt{9/39})$, and $\mathbf{m}_2 = (-\sqrt{12/39}, -\sqrt{27/39}, 0)$. Because $\phi$ is an irrational multiple of $\pi$, the set $\langle G_1, G_2 \rangle$ generated by $G_i$ is dense in $SU(2)$, and therefore can approximate any logical single-qubit gate. To obtain a universal set of gates, we also need a logical two-qubit entangling gate constructed from $U(\pi/4)_{ij}$ gates only. Such constructions are provided in \citep{Shi_2012, Setiawan_2014, Weinstein_2023}.

\end{proof}

Whilst the above implies that the $\sqrt{\mathrm{SWAP}}$ interaction is sufficient for universal encoded quantum computation with singlets, it does not tell us whether it is `universal' in the same sense as the full set of exchange interactions with arbitrary pulse angles, i.e. whether it can simultaneously (and independently) implement $SU(\dim V_J)$ action on all $S_n$ irreps $V_J$, for any $n$ \citep{Kempe_2001}. Using the notion of `semi-universality' (defined in \citep{hulse2024}), we can resolve this question in the affirmative.

\begin{definition}[Semi-Universality \citep{hulse2024}]
Recall the decomposition of the $n$-qubit Hilbert space under the joint action of $S_n \times SU(2)$ in Definition \ref{def:schur_weyl}:

\begin{equation} 
    (\mathbb{C}^2)^{\otimes n} \cong \bigoplus_{J \in \mathcal{J}_n} V^{S_n}_J \otimes W^{SU(2)}_J.
\end{equation}

\noindent Let $\mathcal{V}^{(n)} = \{V : [V, U^{\otimes n}] = 0 \ \forall U \in SU(2) \}$ be the group of all $SU(2)$-invariant unitaries on $n$ qubits. By Schur's lemma, any unitary $V \in \mathcal{V}^{(n)}$ may be written in the form:

\begin{equation} \label{eq:semiuniversality_decomp}
    V = \bigoplus_{J \in \mathcal{J}_n} v_J \otimes \mathbf{1}_{2J + 1},
\end{equation}

\noindent where $v_J \in U(\dim V_J)$ are unitaries acting only on the $S_n$ irrep spaces $V_J$. Similarly, let $\mathcal{SV}^{(n)} \subset \mathcal{V}^{(n)}$ be the group of all $SU(2)$-invariant unitaries on $n$ qubits without relative phases between the isotypic components $(V_J \otimes W_J)$. For $V \in \mathcal{SV}^{(n)}$ Equation \eqref{eq:semiuniversality_decomp} holds with $v_J \in SU(\dim V_J)$ for all $J \in \mathcal{J}_n$. Up to isomorphism, we may therefore write:
\begin{align}
    &\mathcal{V}^{(n)} \cong \prod_{J \in \mathcal{J}_n} U(\dim V_J), & &\mathcal{SV}^{(n)} \cong \prod_{J \in \mathcal{J}_n} SU(\dim V_J).
\end{align}

\noindent We say that a subgroup $\mathcal{W}^{(n)} \subseteq  \mathcal{V}^{(n)}$ of $SU(2)$-invariant unitaries on $n$ qubits is semi-universal iff it holds true that: 
\begin{equation}
    \mathcal{SV}^{(n)} \subseteq \mathcal{W}^{(n)} \subseteq  \mathcal{V}^{(n)}. 
\end{equation}
    
\end{definition}

\begin{theorem} \label{thm:semiuniversal}
$\mathsf{XQP}(\pi/4)$ circuits are semi-universal for any number of qubits $n$. That is, $\mathsf{XQP}(\pi/4)$ circuits on $n$ qubits generate a group which contains $\mathcal{SV}^{(n)}$.
\end{theorem}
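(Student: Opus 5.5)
Let $\mathcal{W}^{(n)}$ be the closure of the group generated by the gates $U_{ij}(\pi/4)$; being a closed subgroup of a compact Lie group, it is itself a compact Lie group. Since $U(\pi/4)^8=\mathbf{1}$, every $U_{ij}(m\pi/4)$ is a positive power of $\sqrt{\mathrm{SWAP}}$, so inverses are available. The plan is to show that the Lie algebra $\mathfrak{w}$ of $\mathcal{W}^{(n)}$ contains $i$ times every transposition generator $E_{ij}$. Once this holds, $\mathcal{W}^{(n)}$ contains the whole connected group generated by exchange gates with arbitrary pulse angles. It is already known (\citep{Kempe_2001}, and the semi-universality result for continuous exchange of \citep{hulse2024}) that this group contains $\mathcal{SV}^{(n)}$, which proves the theorem. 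So the task reduces to producing, from $\sqrt{\mathrm{SWAP}}$ words alone, one-parameter subgroups whose generators, together with their commutators, span the Lie algebra generated by $\{iE_{ij}\}$.

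The first step is local, on three qubits. As in Lemma \ref{lemma:dfs3universal}, consider the commutator-type word
$$G_1 = U(\pi/4)_{12}U(\pi/4)_{23}U(3\pi/4)_{12}U(3\pi/4)_{23}.$$
This word acts on each isotypic component of three qubits separately. On the $J=3/2$ component both $E_{12}$ and $E_{23}$ act as $+1$, so $G_1$ acts as a scalar phase; the word is balanced, so this phase equals $e^{i(\pi/4+\pi/4+3\pi/4+3\pi/4)}=e^{2\pi i}=1$. On the $J=1/2$ component $G_1$ acts as a logical rotation by the angle $\phi=\arccos(5/8)$, which is an irrational multiple of $\pi$ (Niven-type argument: $\cos\phi=5/8$ is rational but not in $\{0,\pm1/2,\pm1\}$). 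Powers of $G_1$ therefore fill a one-parameter subgroup $e^{itK_1}$, where $K_1$ is supported on $V_{1/2}$ with axis $\mathbf{m}_1$. The same holds for $G_2$ with axis $\mathbf{m}_2$. The group $\langle G_1,G_2\rangle$ is dense in $SU(V_{1/2})\times\{\mathbf{1}_{V_{3/2}}\}$. Since the traceless part of $E_{12}$ restricted to $V_{1/2}$ lies in $\mathfrak{su}(V_{1/2})$, and $E_{12}$ acts as $+1$ on $V_{3/2}$, the Lie algebra of $\mathcal{W}^{(3)}$ contains $i(E_{12}-c\mathbf{1})$ for a suitable scalar $c$. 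Here $c$ is the identity on the three-qubit register, a linear combination of the isotypic projectors $P_{1/2}$ and $P_{3/2}$ that makes the operator vanish on $V_{3/2}$ and traceless on $V_{1/2}$.

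The second step embeds this into $n$ qubits. Applying the three-qubit construction to every triple $(i,j,k)$ gives, in $\mathfrak{w}$, elements of the form $i(E_{ij}-f_{ijk})$. Here $f_{ijk}$ is a combination of the identity and the three-qubit Casimir $E_{ij}+E_{jk}+E_{ik}$, so it is $SU(2)$-invariant and symmetric in the triple. Taking commutators of these elements for overlapping triples removes the $f$ corrections modulo central terms, since $f_{ijk}$ commutes with everything in the triple. The resulting commutators span $[\mathfrak{g},\mathfrak{g}]$, where $\mathfrak{g}$ is the Lie algebra generated by the $iE_{ij}$. The statement that $[\mathfrak{g},\mathfrak{g}]$ contains $\bigoplus_J\mathfrak{su}(V_J)$ is exactly the commutator-subalgebra content of semi-universality, as established for continuous exchange in \citep{hulse2024} (equivalently, \citep{Marvian_2022}). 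The commutator subgroup of the closed group $\mathcal{W}^{(n)}$ is again contained in $\mathcal{W}^{(n)}$, so $\mathcal{SV}^{(n)}\subseteq\mathcal{W}^{(n)}$.

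The main obstacle is the second step: making rigorous that the three-local algebra elements extracted from $\sqrt{\mathrm{SWAP}}$ words generate $[\mathfrak{g},\mathfrak{g}]$ on $n$ qubits. Two things must be controlled. First, the relative-phase (central) parts, which are exactly why only $\mathcal{SV}^{(n)}$ and not all of $\mathcal{V}^{(n)}$ is claimed. Second, whether the restricted elements $i(E_{ij}-f_{ijk})$ suffice. To handle this, I would first show that $\mathfrak{w}$ contains $i[E_{12},E_{23}]$ directly on any three qubits, since that element is traceless and supported on $V_{1/2}$. Then I would argue inductively in $n$, following the branching $S_{n-1}\subset S_n$ as in \citep{Kempe_2001} and using that nested commutators of the $[E_{ij},E_{jk}]$ already span the derived algebra.
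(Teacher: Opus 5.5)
There is a genuine gap, and it starts with the reduction in your first paragraph: $iE_{ij}$ is \emph{not} in $\mathfrak{w}$. Your own Step 1 gives $\mathfrak{w}^{(3)}=\mathfrak{su}(V_{1/2})\oplus 0$, while $iE_{12}$ acts as $i$ on $V_{3/2}$. For general $n$, the determinant ratio $\Delta_{n/2,n/2-1}$ of Theorem~\ref{thm:xqpsubset} is a continuous homomorphism. It equals $i$ on each generator, so it takes only the values $\pm1,\pm i$ on $\mathcal{W}^{(n)}$, yet it equals $e^{2it}$ on $e^{itE_{ij}}$. So $\mathcal{W}^{(n)}$ does not contain the continuous exchange group, and semi-universality of continuous exchange \citep{hulse2024} cannot simply be imported. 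You have to show directly that the three-local pieces you actually extract generate $\bigoplus_J\mathfrak{su}(V_J)$.

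That is exactly where Step 2 breaks. The corrections $f_{ijk}$ (essentially the projector $P^{(ijk)}_{3/2}$) are central only inside the three-qubit algebra. On $n$ qubits, $P^{(123)}_{3/2}$ does not commute with $E_{14}$, so commutators across overlapping triples do not cancel them. The claim that the resulting commutators span $[\mathfrak{g},\mathfrak{g}]$ is asserted, not proved. Knowing $[\mathfrak{g},\mathfrak{g}]=\bigoplus_J\mathfrak{su}(V_J)$ for the algebra generated by the two-local $iE_{ij}$ tells you nothing about what your three-local subalgebras generate.

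The repair you propose is false. Since $[E_{ij},E_{jk}]=\pm(c-c^{-1})$ for a $3$-cycle $c$, everything these elements generate lies in the image of $\mathbb{C}A_n$. For $n=3$, all of them are proportional to $Y_L$ on $V_{1/2}$ (check with the matrices of Lemma~\ref{lemma:dfs3universal}), so they generate a one-dimensional algebra. For $n=4$, the irrep $V_0$ (shape $(2,2)$) splits into two characters of $A_4$, so the generated algebra is abelian on $V_0$. An induction built on these elements therefore fails at its base.

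What is missing is what the paper supplies. First, the induction Lemma~\ref{lemma:semiuniversalinduction} (from \citep{hulse2024}): $\mathcal{SV}^{(m)}$ on the first and last $m$ qubits generates $\mathcal{SV}^{(m+1)}$ for $m\ge 4$, which reduces everything to $n=4$. Second, an explicit four-qubit check. The $G_i$ give $A$ on $V_0$ (and $1\oplus A$ on $V_1$). The products $G_{ij,kl}$ of three-qubit commutators on complementary triples cancel on $V_0$, because $E_{12}=E_{34}$, $E_{13}=E_{24}$ and $E_{23}=E_{14}$ there, and they generate a dense subgroup of $SU(3)$ on $V_1$.

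Your Step 1 is sound and can feed that base case if you keep the full algebras $\mathfrak{su}(V^{(T)}_{1/2})\otimes\mathbf{1}$ for all triples $T$, rather than only the commutators. On four qubits these are diagonal copies of $\mathfrak{su}(2)$ on $V_0\oplus V_1$, sitting on distinct $2$-planes of $V_1$. They generate $\mathfrak{su}(V_1)$, and Goursat's lemma with $\mathfrak{su}(2)\not\cong\mathfrak{su}(3)$ then gives $\mathfrak{su}(V_0)\oplus\mathfrak{su}(V_1)$. An induction lemma of the above type is still needed beyond $n=4$.
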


\noindent To prove Theorem \ref{thm:semiuniversal}, we will adapt a useful result from Section V of \citep{hulse2024}:

\begin{lemma} \label{lemma:semiuniversalinduction}
For a semi-universal set of $m$-qudit unitaries $\mathcal{SV}^{(m)}$, let $\mathcal{W}^{(m+1)} = \langle \mathcal{SV}^{(m)} \otimes \mathbf{1}, \mathbf{1} \otimes \mathcal{SV}^{(m)} \rangle$ denote the set of $m+1$-qudit unitaries generated by applying $\mathcal{SV}^{(m)}$ to the first, or last $m$ qudits. For $m \geq 4$, $\mathcal{W}^{(m+1)} = \mathcal{SV}^{(m+1)}$.
\end{lemma}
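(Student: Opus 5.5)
The plan is to work inside the Schur--Weyl decomposition for $SU(d)$, the qudit analogue of Definition \ref{def:schur_weyl}:
\[
(\mathbb{C}^d)^{\otimes (m+1)} \cong \bigoplus_{\lambda} V_\lambda \otimes W_\lambda ,
\]
where $\lambda$ runs over Young diagrams with $m+1$ boxes and at most $d$ rows. The key tool is the multiplicity-free branching rule
\[
V_\lambda\big|_{S_m} \cong \bigoplus_{\mu = \lambda - \square} V_\mu .
\]
Here $S_m$ can be taken either as the permutations of the first $m$ qudits or of the last $m$ qudits. These two copies are conjugate by the cyclic permutation $c=(1\,2\,\cdots\,m{+}1)$, which acts on each $V_\lambda$.

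\textbf{Step 1: $\mathcal{W}^{(m+1)} \subseteq \mathcal{SV}^{(m+1)}$.} Each generator $V\otimes\mathbf{1}$ with $V\in\mathcal{SV}^{(m)}$ commutes with $U^{\otimes(m+1)}$, so it is $SU(d)$-invariant. On the isotypic block $\lambda$ it acts as $\bigoplus_{\mu\subset\lambda} v_\mu$ in the branching basis. Since $\det v_\mu=1$ for every $\mu$, this block has determinant $1$. For generators $\mathbf{1}\otimes V$ the same argument applies after conjugating by $c$.

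\textbf{Step 2: reduce the reverse inclusion to a Lie-algebra statement.} $\mathcal{W}^{(m+1)}$ is generated by two compact connected groups, so it is a compact connected Lie group. It therefore suffices to show that its Lie algebra $\mathfrak g$ equals $\bigoplus_\lambda \mathfrak{su}(V_\lambda)$. By construction, $\mathfrak g$ contains $\mathfrak a=\bigoplus_{\mu}\mathfrak{su}(V_\mu)$ embedded along the first-$m$ branching, and contains $c\,\mathfrak a\,c^{-1}$ along the last-$m$ branching.

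\textbf{Step 3: the projection onto each block is all of $\mathfrak{su}(V_\lambda)$.} Fix $\lambda$ and let $\mathfrak g_\lambda$ be the projection of $\mathfrak g$ to $\mathfrak{u}(V_\lambda)$.
\begin{itemize}
\item First, $\mathfrak g_\lambda$ acts irreducibly on $V_\lambda$. An operator $X$ commuting with $\mathfrak g_\lambda$ is scalar on each $V_\mu$ of the first branching (Schur's lemma, using $\mathfrak{su}(V_\mu)$; trivially so when $\dim V_\mu=1$). Since the image of $\mathbb{C}[S_m]$ on $V_\lambda$ lies in the span of the $\mathfrak{su}(V_\mu)$ blocks and block scalars, $X$ commutes with the first $S_m$. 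By the same argument it commutes with the conjugate copy of $S_m$. These two copies generate $S_{m+1}$, so $X$ commutes with $S_{m+1}$ and is therefore a scalar on the irrep $V_\lambda$.
\item Second, $\mathfrak g_\lambda$ is an irreducible subalgebra of $\mathfrak{su}(V_\lambda)$ that contains a block $\mathfrak{su}(V_\mu)\oplus 0$ with $\dim V_\mu\ge 2$. A standard lemma (e.g.\ the one used in \citep{hulse2024} and \citep{Marvian_2022}) says that an irreducible subalgebra of $\mathfrak{su}(N)$ containing $\mathfrak{su}(k)$ acting on a $k$-dimensional subspace, with $k\ge2$, must be all of $\mathfrak{su}(N)$: it cannot be an orthogonal or symplectic subalgebra, or a tensor-product subalgebra, because it contains a full $\mathfrak{su}(k)$ with nontrivial fixed complement.
\item For $m\ge4$, every $\lambda$ with $\dim V_\lambda\ge2$ has some $\mu\subset\lambda$ with $\dim V_\mu\ge2$. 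This is the first place the hypothesis $m\ge4$ enters.
\end{itemize}

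\textbf{Step 4: decoupling the blocks (main obstacle).} Since $\mathfrak g$ is compact with each projection simple, $\mathfrak g$ is a direct sum of ideals, by a Goursat-type argument. The only failure mode is that two blocks $\lambda\neq\lambda'$ with $\dim V_\lambda=\dim V_{\lambda'}$ are tied together: either $\mathfrak g$ is the graph of an isomorphism $\mathfrak{su}(V_\lambda)\to\mathfrak{su}(V_{\lambda'})$, inner or $X\mapsto -\bar X$. To rule this out, I would choose $\mu$ obtained by removing a box from $\lambda$ but not from $\lambda'$, with $\dim V_\mu\ge2$. Such a $\mu$ exists because distinct diagrams have distinct sets of removable-box children; $m\ge4$ again guarantees a child of dimension at least $2$ can be chosen. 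Then the element of $\mathfrak a$ supported only on $\mathfrak{su}(V_\mu)$ has rank-deficient nonzero image in the $\lambda$ block and zero image in the $\lambda'$ block. No graph of an isomorphism can contain such an element, so the blocks are independent and $\mathfrak g=\bigoplus_\lambda\mathfrak{su}(V_\lambda)$.

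I expect the delicate part to be checking the small-diagram cases in Steps 3 and 4, namely hooks and two-row shapes near $m=4$, where $\dim V_\mu$ may equal $1$. That is where the lemma fails for $m<4$, consistent with the exceptional cases in \citep{hulse2024}.
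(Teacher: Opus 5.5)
The paper gives no proof of this lemma; it imports it from Section V of \citep{hulse2024}. Your argument therefore has to stand on its own. Its overall structure is sound: determinant-one blocks via the branching rule, irreducibility of each block projection because the two copies of $S_m$ generate $S_{m+1}$, and Goursat-type decoupling using a child diagram of one block but not the other. However, the second bullet of Step 3 relies on a false statement.

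An irreducible subalgebra of $\mathfrak{su}(N)$ containing $\mathfrak{su}(k)\oplus 0$ on a $k$-dimensional subspace need \emph{not} be all of $\mathfrak{su}(N)$ when $k=2$. The compact symplectic algebra $\mathfrak{sp}(N/2)\subsetneq\mathfrak{su}(N)$ acts irreducibly on $\mathbb{C}^N\cong\mathbb{H}^{N/2}$. It also contains $\mathfrak{sp}(1)\oplus 0=\mathfrak{su}(2)\oplus 0_{N-2}$, given by a purely imaginary quaternion in the $(1,1)$ entry. So your stated reason for excluding the symplectic case fails exactly here. Moreover, orthogonal, symplectic and tensor-product subalgebras do not exhaust the proper irreducible subalgebras of $\mathfrak{su}(N)$; for example, $\mathfrak{su}(3)$ acts irreducibly on $\mathrm{Sym}^2\mathbb{C}^3$ inside $\mathfrak{su}(6)$. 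Some genuine classification input is therefore needed.

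The gap is local and repairable, because the $k\geq 3$ version of the lemma is true and suffices.
\begin{itemize}
\item \textbf{Valid route for $k\geq 3$.} By your first bullet, $\mathfrak{g}_\lambda\otimes\mathbb{C}$ is irreducible. It contains a rank-one nilpotent, coming from $\mathfrak{sl}(V_\mu)\oplus 0$. By the classical Cartan / Guillemin--Quillen--Sternberg classification, it is then $\mathfrak{sl}$ or $\mathfrak{sp}$ up to scalars. If $\mathfrak{su}(S)\oplus 0\subseteq\mathfrak{sp}(\omega)$, then $\omega(S^\perp,S)=0$, so $\omega|_S$ is nondegenerate and $\mathfrak{su}(k)\subseteq\mathfrak{sp}(k/2)$. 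A dimension count then forces $k=2$.
\item \textbf{Where $m\geq 4$ enters Step 3.} You must pick a child with $\dim V_\mu\geq 3$, not merely $\geq 2$; your condition $\dim V_\mu\geq 2$ already holds at $m=3$. For $m=4$, the $S_4$ irreps have dimensions $1,3,2,3,1$, and every $\lambda\vdash 5$ other than $(5)$ and $(1^5)$ has $(3,1)$ or $(2,1,1)$ as a child. For $m\geq 5$, every non-one-dimensional $S_m$ irrep has dimension $\geq m-1\geq 4$.
\item \textbf{Step 4.} Let $\mu$ be a child of exactly one of $\lambda,\lambda'$, from either side, and justify that it exists. Two distinct diagrams share at most one child. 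For $m\geq 4$, the only shapes with a single child of dimension $\geq 2$ are $(m,1)$, $(2,1^{m-1})$ and rectangles, and these have pairwise distinct such children. At $m=3$ this fails for $(3,1)$ versus $(2,1,1)$.
\item \textbf{Closedness.} The generated group is a priori only an analytic subgroup, not compact. It becomes closed once its Lie algebra is shown to be semisimple, so your conclusion is unaffected.
\end{itemize}
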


\begin{proof}
    To apply Lemma \ref{lemma:semiuniversalinduction} and obtain Theorem \ref{thm:semiuniversal}, we just need to show that the set of $\sqrt{\mathrm{SWAP}}$-generated unitaries on $4$ qubits is semi-universal, i.e. $\langle U(\pi/4)_{ij} : i, j \in  \llbracket 4 \rrbracket \rangle \supseteq \mathcal{SV}^{(4)}$.  We have already shown this is the case for $n=3$ in Lemma \ref{lemma:dfs3universal}. Notably, because $U(3\pi/4) = - U(\pi/4)^\dagger$, the matrices $R_i$ satisfy $\det R_i = 1$ on the $J=1/2$ irrep (spanned by $|0_L \rangle, |1_L \rangle$), and $R_i = e^{2\pi i} = 1$ on the (one-dimensional) $J=3/2$ irrep. Therefore, up to isomorphism:
    \begin{equation}
    \langle U(\pi/4)_{ij} : i, j \in  \llbracket 3 \rrbracket \rangle \subseteq SU(2) \times SU(1).
    \end{equation}
    
    For $n=4$, $\mathcal{J}_4 = \{0, 1, 2\}$, with corresponding irrep dimensions $\dim V_J \in \{2, 3, 1\}$. We need to be able to implement a unitary which acts as $SU(2) \times SU(3) \times SU(1)$ on the irrep spaces. In Young's orthogonal form, the transpositions $E_{ij}$ are given by:

\begin{align}
E_{12} &=  \left( \begin{array}{cc|ccc} 1 & 0 & 0 & 0 & 0 \\ 0 & -1 & 0 & 0 & 0 \\ \hline 0 & 0 & 1 & 0 & 0 \\ 0 & 0 & 0 & 1 & 0 \\ 0 & 0 & 0 & 0 & -1 \end{array} \right), \\ \\
E_{23} &=  \left( \begin{array}{cc|ccc} -\frac{1}{2} & \frac{\sqrt{3}}{2} & 0 & 0 & 0 \\ \frac{\sqrt{3}}{2} & \frac{1}{2} & 0 & 0 & 0 \\ \hline 0 & 0 & 1 & 0 & 0 \\ 0 & 0 & 0 & -\frac{1}{2} & \frac{\sqrt{3}}{2} \\ 0 & 0 & 0 & \frac{\sqrt{3}}{2} & \frac{1}{2}\end{array} \right), \\ \\
E_{34} &=  \left( \begin{array}{cc|ccc} 1 & 0 & 0 & 0 & 0 \\ 0 & -1 & 0 & 0 & 0 \\ \hline 0 & 0 & -\frac{1}{3}  & \frac{\sqrt{8}}{3} & 0 \\ 0 & 0 & \frac{\sqrt{8}}{3} & \frac{1}{3} & 0 \\ 0 & 0 & 0 & 0 & 1\end{array} \right),
\end{align}

\noindent where the $2 \times 2$ block acts on $V_{0}$, the $3 \times 3$ block acts on $V_1$, and we have omitted action on the one-dimensional $V_2$ where $E_{ij} = (1)$. Notice that our previously constructed generators $G_i$ allow us to implement any $SU(2)$ unitary $A$ on $V_0$. However, this implements the same unitary in the form $1 \oplus A$ on $V_1$, which does not allow independent action on the two irrep spaces. To bypass this, we can use the fact that on $V_0$, $E_{12} = E_{34}$, as well as that $E_{13} = E_{24}$ and $E_{23} = E_{14}$. We can therefore construct the following generators, which act as the identity on $V_0 \oplus V_2$, but act non-trivially on $V_1$:

\begin{align}
G_{ij, kl} &= [ U(\pi/4)_{ij} U(\pi/4)_{kl} U(3\pi/4)_{ij} U(3\pi/4)_{kl} ] \\
& \times [ U(\pi/4)_{\widetilde{kl}} U(\pi/4)_{\widetilde{ij}} U(3\pi/4)_{\widetilde{kl}} U(3\pi/4)_{\widetilde{ij}} ], \\
\end{align}

\noindent where $\widetilde{ij} = \{1, 2, 3, 4\} \setminus \{i, j\}$ is the complementary pair of $ij$, i.e. $\widetilde{12} = 34$. On $V_0$, the generators $G_{ij, kl}$ act as a unitary $UU^\dagger = \mathbf{1}$ (for example, $G_{12, 23}$ acts as $G_1 G_1^\dagger$). Consider the set $\langle G_{12, 23}, G_{12, 13}, G_{13, 23} \rangle$. A simple calculation will show that on $V_1$, each generator has determinant $1$ and spectrum $\{ 1, e^{i \phi}, e^{-i \phi} \}$, where $\phi = \arccos(1/8)$ is an irrational multiple of $\pi$. Therefore, on $V_1$ the set $\langle G_{12, 23}, G_{12, 13}, G_{13, 23} \rangle$ forms an infinite closed subgroup of $SU(3)$. Showing it is dense in $SU(3)$ can be achieved with a simple numerical check, following Theorem 1 of \citep{Sawicki_2017}. Our calculations (available as a \texttt{Mathematica} notebook in the ancillary files of this submission) show that this is the case; in fact it is possible to generate a dense subset of $SU(3)$ with just two generators, for instance as $\langle G_{12, 23}, G_{12, 13} \rangle$. Therefore, using the generator sets $\langle G_i \rangle$ and $\langle G_{ij, kl} \rangle$ we can generate unitaries of the form:

\begin{align}
    U_A &= \left( \begin{array}{c|c|c} A & & \\ \hline &   \begin{matrix} 1 & 0 \\ 0 & A \end{matrix} & \\ \hline  & & 1 \end{array}  \right), & U_{A, B} &= \left( \begin{array}{c|c|c}  \mathbf{1}  & &  \\ \hline &   \begin{pmatrix} 1 & 0 \\ 0 & A^\dagger \end{pmatrix} B & \\ \hline & & 1 \end{array} \right),
\end{align}

\noindent where $A \in SU(2)$ and $B \in SU(3)$. We can therefore generate unitaries of the form $U_A U_{A, B}$, which act as $ A \oplus B \oplus 1$ on $V_0 \oplus V_1 \oplus V_2$. Each constituent generator is built out of $U(\pi/4)_{ij}$ exchange gates, and therefore:

\begin{equation}
    \langle U(\pi/4)_{ij} : i, j \in  \llbracket 4 \rrbracket \rangle \subseteq SU(2) \times SU(3) \times SU(1).
\end{equation}

\noindent This establishes that on $n=4$, $\mathsf{XQP}(\pi/4)$ circuits contain $\mathcal{SV}^{(4)}$, i.e. are semi-universal. By Lemma \ref{lemma:semiuniversalinduction}, this implies that $\mathsf{XQP}(\pi/4)$ circuits are semi-universal for any number of qubits $n$.

\end{proof}

Semi-universality of $\mathsf{XQP}$ and $\mathsf{XQP}(\pi /4)$ implies that both circuit families can generate $SU(\dim V_J)$ unitaries $v_J$ on each subspace $V_J$, as in Equation \eqref{eq:semiuniversality_decomp}. However, neither can implement the full centre $\mathcal{P}$ of the group $\mathcal{V}^{(n)}$, consisting of relative phases on the isotypic components:

\begin{equation} \label{eq:centre}
    \mathcal{P} = \{ \sum_{J \in \mathcal{J}_n} e^{i \theta_J} \Pi_J \ : \ \theta_J \in \mathbb{R} \},
\end{equation}

\noindent where $\Pi_J$ is a projector onto $(V_J \otimes W_J)$ and $\mathcal{V}^{(n)} = \mathcal{P} \cdot \mathcal{SV}^{(n)}$. Despite this, some relative phases can be implemented -- on this front, we can show that $\mathsf{XQP}(\pi/4)$ is far more constrained. For an exchange gate $U_{ij}(\theta)$, the corresponding unitary $u_{ij}(\theta)_J$ acting on the irrep space $V_J$ satisfies:

\begin{equation}
    \det u_{ij}(\theta)_J = \det e^{i \theta \cdot [ij]_J} =  
    e^{i \theta \cdot \mathrm{tr} [ij]_J} = e^{i \theta \cdot \chi_J(ij)},
\end{equation}

\noindent where $[ij]_J$ is irreducible representation of $(ij)$ on $V_J$, and $\chi_J(ij)$ is the irreducible character of the transposition $(ij)$ in the irrep $J$. For the symmetric group, $\chi_J(ij)$ is always an integer. Therefore, $\det u_{ij}(\pi/4)_J = e^{i \pi m / 4}$ for $m \in \mathbb{Z}$. We can use this fact to prove that $\mathsf{XQP}(\pi/4)$ circuits form a strict subset of $\mathsf{XQP}$:

\begin{theorem} \label{thm:xqpsubset}
    $\mathsf{XQP}(\pi/4) \subsetneq \mathsf{XQP}$. Specifically, $\mathsf{XQP}$ circuits on $n$ qubits always contain unitaries with relative phases between isotypic components $(V_J \otimes W_J)$ that cannot be approximated by $\mathsf{XQP}(\pi/4)$.
\end{theorem}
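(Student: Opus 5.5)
We will use the determinant computation just above the statement as an invariant separating the two families. For each isotypic label $J \in \mathcal{J}_n$, consider the homomorphism $\delta_J : \mathcal{V}^{(n)} \to U(1)$ given by $V \mapsto \det v_J$, where $v_J$ is the block of $V$ on $V_J$ from Equation \eqref{eq:semiuniversality_decomp}. This map is continuous.

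On each generator $U_{ij}(\pi/4)$ we have $\delta_J = e^{i\pi \chi_J(ij)/4}$, and $\chi_J(ij) \in \mathbb{Z}$. Hence the image of the group generated by $\mathsf{XQP}(\pi/4)$ under the joint map $\delta = (\delta_J)_{J \in \mathcal{J}_n}$ lies in the finite set $\{e^{i\pi m/4}\}^{|\mathcal{J}_n|}$. Since $\delta$ is continuous and this set is finite, hence closed, the closure of $\mathsf{XQP}(\pi/4)$ maps into the same finite set. So no element of $\mathsf{XQP}$ whose $\delta$-image lies outside this set can be approximated by $\mathsf{XQP}(\pi/4)$ circuits, even allowing a global phase. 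To make the global phase precise, we will compare ratios such as $\delta_J(V)^{\dim V_{J'}}/\delta_{J'}(V)^{\dim V_J}$, which are invariant under $V \mapsto e^{i\alpha}V$.

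Next we exhibit such an element of $\mathsf{XQP}$. Take a single gate $U_{12}(\theta)$ with $\theta$ generic, for instance $\theta/\pi \notin \mathbb{Q}$. Its $\delta_J$-image is $e^{i\theta\chi_J(12)}$. On the symmetric component $J = n/2$ we have $\dim V_{n/2} = 1$ and $\chi_{n/2}(12) = 1$. On $J = n/2 - 1$ (for $n \ge 2$) the block is the standard representation, with $\dim V = n-1$ and $\chi(12) = n-3$. The phase-invariant combination is then
\begin{equation}
\frac{\delta_{n/2}(V)^{\,n-1}}{\delta_{n/2-1}(V)} = e^{i\theta[(n-1)-(n-3)]} = e^{2i\theta},
\end{equation}
which is not an eighth root of unity. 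For $\mathsf{XQP}(\pi/4)$, by contrast, the same combination is always a power of $e^{i\pi/4}$, and it also is for any global phase times such a circuit.

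Therefore $U_{12}(\theta)$ cannot lie in the closure of $\mathsf{XQP}(\pi/4)$. Combined with the trivial inclusion $\mathsf{XQP}(\pi/4) \subseteq \mathsf{XQP}$, this gives strictness. The obstruction is exactly a relative phase between isotypic components, since by Theorem \ref{thm:semiuniversal} both families agree on $\mathcal{SV}^{(n)}$.

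The main subtlety is bookkeeping: ensuring the invariant is genuinely global-phase independent and continuous, so that it rules out approximation rather than just exact equality. This is handled by the finite-image and closedness argument above.

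The case $n = 1$ is trivial and is excluded. For $n = 2$, $J = 0$ has $\chi = -1$ and dimension $1$, and the ratio becomes $e^{2i\theta}$ directly.
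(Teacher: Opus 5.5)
Your proposal is correct and follows the paper's argument. You use the same dimension-weighted determinant ratio between the trivial ($J=n/2$) and standard ($J=n/2-1$) isotypic blocks. It is quantized to roots of unity on $\mathsf{XQP}(\pi/4)$ but equals $e^{2i\theta}$ on a single gate $U_{ij}(\theta)$; the paper takes $\theta=\pi/3$ where you take $\theta/\pi\notin\mathbb{Q}$. Your explicit continuity/closure step, which upgrades ``no exact equality up to global phase'' to ``no approximation'', and your uniform treatment of every $n\ge 2$ are minor tightenings. Indeed, the paper's separate $n=4$ case is unnecessary: the dimension weighting gives $e^{2i\theta}$ even when the two characters coincide.
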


\begin{proof}
    We will deal with the cases where $n\neq 4$ and $n = 4$ separately. For two irreps $J, K$, define the determinant ratio:
    
    \begin{equation}
        \Delta_{J, K}(U) = \frac{\det (u_J)^{\dim V_K}}{\det (u_K)^{\dim V_J}}.
    \end{equation}
    
\noindent This quantity is invariant under a change of global phase $U \rightarrow e^{i \phi}U$:

\begin{equation}
    \Delta_{J, K}(e^{i \phi}U) = \frac{(e^{i \phi \dim V_J} \det u_J)^{\dim V_K} }{ (e^{i \phi \dim V_K} \det u_K)^{\dim V_J} } = \Delta_{J, K}(U).
\end{equation}

\noindent The motivation for our determinant ratio is the following: if for two unitaries $U, C$ we have that $C = e^{i \phi} U$, then $\Delta_{J, K}(C) = \Delta_{J, K}(U)$ for any pair $J, K \in \mathcal{J}_n$. We will identify a unitary $U \in \mathsf{XQP}$ such that no $C \in \mathsf{XQP}(\pi/4)$ can satisfy $\Delta_{J, K}(C) = \Delta_{J, K}(U)$ for some pair of irreps $J, K$. Therefore, there cannot exist a $C \in \mathsf{XQP}(\pi/4)$ such that $C = e^{i \phi} U$, and so $\mathsf{XQP}(\pi/4)$ circuits cannot approximate arbitrary $\mathsf{XQP}$ circuits.

For any $n \geq 2$, $\mathcal{J}_n$ contains the trivial one-dimensional $S_n$ irrep $J = n/2$, and the $(n-1)$-dimensional `standard' irrep $J = n/2 - 1$. For these, $\chi_{n/2}(ij) = 1$, and $\chi_{n/2 - 1}(ij) = n - 3$ (recall that in the standard representation the character of a permutation is its number of fixed points, minus one). Now, take for example $U = e^{i \frac{\pi}{3} E_{ij}} \in \mathsf{XQP}$. This satisfies:

\begin{equation}
    \Delta_{n/2, n/2-1}(e^{i \frac{\pi}{3} E_{ij}}) = \frac{e^{i \pi (n-1)  / 3}}{e^{i \pi (n- 3) / 3}} = e^{\frac{2\pi i}{3}}.
\end{equation}

\noindent On the other hand, any circuit $C \in \mathsf{XQP}(\pi/4)$ consisting of $m$ many $U(\pi/4)_{ij}$ gates will satisfy:
    
\begin{equation}
    \Delta_{n/2, n/2-1}(C) = \frac{e^{i \pi m(n-1)  / 4}}{e^{i \pi m (n- 3) / 4}} = e^{\frac{m \pi i}{2}}, \quad m \in \mathbb{N}.
\end{equation}

\noindent Hence, even if the $\mathsf{XQP}(\pi/4)$ circuit $C$ is chosen such that (modulo phase) the action of $C$ on irrep spaces $V_J$ approximates $U_{ij}(\pi/3)$ (i.e. $c_J \propto u_{ij}(\pi/3)_J$ for all $J \in \mathcal{J}_n$), it is not able to approximate the relative phases between the different irreps. Therefore, $\mathsf{XQP}(\pi/4) \subsetneq \mathsf{XQP}$. For the case of $n=4$, the characters of the trivial and standard irreps coincide -- in that case, we can simply pick $J = 1$ and $K = 0$ instead.

It is worth noting that the above argument crucially relies on the state preparation and measurement basis of $\mathsf{XQP}$. If input states are initialised to individual irrep spaces (as for example in $\mathsf{DFS}_4$, where the logical qubits reside entirely in the $J=0$ irrep), then the relative phases between different irreps are not observable, and $\mathsf{DFS}_4(\pi/4)$ can reproduce any unitary in $\mathsf{DFS}_4$. In contrast, for the $\mathsf{XQP}$ model any input state $|\mathbf{x} \rangle$ (barring the all-zero, or all-one state) has support on $J=n/2$ and $J=n/2-1$ irrep spaces, as we prove further in Lemma \ref{lem:decomposition} of Section \ref{sec:schur_states}. Therefore, the relative phases present in $\mathsf{XQP}$ affect almost any input, meaning that $\mathsf{XQP}(\pi/4)$ circuits cannot approximate arbitrary $\mathsf{XQP}$ circuits. 
\end{proof}

Theorem \ref{thm:semiuniversal} implies that $\sqrt{\mathrm{SWAP}}$-generated circuits can implement any $SU(2)$-invariant unitary on $n$ qubits up to relative phases, previously only known to be generated by exchange gates with arbitrary pulse angles. When such relative phases are not neglected, Theorem \ref{thm:xqpsubset} tells us that $\mathsf{XQP}(\pi/4)$ becomes strictly weaker than $\mathsf{XQP}$. Recent work \citep{Liu_2024, Mitsuhashi_2025} has shown that random symmetric quantum circuits generate $t$-designs for the uniform distribution over all $SU(2)$-invariant unitaries. This leads to the following Corollary:

\begin{corollary} \label{cor:designs}
    Random $\mathsf{XQP}(\pi /4)$ circuits (i.e. circuits consisting of $\sqrt{\mathrm{SWAP}}_{ij}$ gates only) generate $t$-designs for the uniform distribution over $\mathcal{V}^{(n)} = \{V : [V, U^{\otimes n}] = 0 \ \forall U \in SU(2) \}$, the set of $SU(2)$-invariant unitaries on $n$ qubits. The maximum attainable degree $t$ grows at least linearly with the circuit size, as $t \geq n - 2$.
\end{corollary}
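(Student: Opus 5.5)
The plan is to reduce Corollary \ref{cor:designs} to the known results on random symmetric circuits \citep{Liu_2024, Mitsuhashi_2025}, using Theorem \ref{thm:semiuniversal} as the bridge. Those works take a gate set of $SU(2)$-invariant gates whose generated group $\mathcal{W}^{(n)}$ is semi-universal. They show that random circuits over such a set converge (in moments) to the Haar measure on the closure $\overline{\mathcal{W}^{(n)}}$. They then show that this Haar measure agrees with the uniform measure on $\mathcal{V}^{(n)}$ up to $t$-th moments for all $t$ below a threshold. The threshold is set by the smallest degree of a polynomial in the relative phases $\mathcal{P}$ (Eq. \eqref{eq:centre}) that distinguishes $\overline{\mathcal{W}^{(n)}}$ from $\mathcal{V}^{(n)}$.

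First I will fix the ensemble: independent uniformly random choices of a pair $(i,j)$ and of the gate $\sqrt{\mathrm{SWAP}}_{ij}$, including its powers $U(m\pi/4)$, so that the generated semigroup is a group (it is already a group since the gate has finite order). I will check the standard hypotheses of the moment-convergence argument, namely that the gate set is closed under inverses and generates a compact group $G$. Then the $t$-fold twirl $\mathbb{E}[U^{\otimes t}\otimes \bar U^{\otimes t}]$ over depth-$L$ circuits converges, as $L\to\infty$, to the projector onto the commutant of $G^{\otimes t,t}$, which is the $t$-th moment operator of Haar measure on $G$. By Theorem \ref{thm:semiuniversal}, $\overline{G}\supseteq \mathcal{SV}^{(n)}$, so $\overline{G} = \mathcal{P}_G\cdot \mathcal{SV}^{(n)}$ for some closed subgroup $\mathcal{P}_G$ of the relative-phase torus.

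Next I need to show that the $t$-th moments of Haar measure on $\overline{G}$ and on $\mathcal{V}^{(n)}$ coincide for $t\le n-2$. Write $V=\bigoplus_J e^{i\theta_J}v_J\otimes\mathbf{1}$. A monomial in $t$ copies of $V$ and $t$ copies of $\bar V$ is invariant under $\mathcal{SV}^{(n)}$ only if, for each $J$, the numbers of $v_J$ and $\bar v_J$ factors agree modulo $\dim V_J$, since the centre of $SU(d)$ acts by $d$-th roots of unity. In that case the phase character $\prod_J e^{i(a_J-b_J)\theta_J}$ has each $a_J-b_J$ divisible by $\dim V_J$. The two moment operators differ only on characters that are trivial on $\mathcal{P}_G$ but nontrivial on the full torus. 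From the determinant computation preceding Theorem \ref{thm:xqpsubset}, $\mathcal{P}_G$ is constrained only through $\det u_J = e^{i\pi m\chi_J/4}$. The offending characters are therefore combinations of determinants, and each $\det u_J$ has degree $\dim V_J$.

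The main obstacle is this counting step. I must show that any nontrivial character of the form $\prod_J (\det u_J)^{k_J}$ that is trivial on $\overline{G}$ has total degree $\sum_J |k_J|\dim V_J$ at least $n-1$. Here I would use the fact that the smallest nontrivial irrep dimension is $\dim V_{n/2-1}=n-1$, while the trivial irrep $J=n/2$ has dimension $1$ and carries a free global phase. Concretely, $\overline{G}$ contains elements with arbitrary relative phase between $V_{n/2}$ and the other blocks modulo the finite-order constraint. I then need a degree-$t$ polynomial to involve at least one determinant of a block of dimension $\geq n-1$ (the $n=4$ case, with $\dim V_0=2$, would be handled by the explicit generators of Theorem \ref{thm:semiuniversal}). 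This gives agreement of moments for all $t\le n-2$, which is the claimed bound. Rather than redo this, I would invoke the precise threshold theorem of \citep{Mitsuhashi_2025}, which is stated in terms of the semi-universal group and the minimal irrep dimensions, and check that its hypotheses are exactly what Theorem \ref{thm:semiuniversal} supplies.
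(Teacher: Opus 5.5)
Your top-level route is the paper's: the paper proves the corollary only by combining Theorem \ref{thm:semiuniversal} with Corollary 2 of \citep{Liu_2024}. The self-contained sketch you give for the threshold, however, contains two errors.

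First, $\overline{G}$ does not contain arbitrary relative phases. Under $V\mapsto(\det v_J)_J$, whose kernel is $\mathcal{SV}^{(n)}$, every $\sqrt{\mathrm{SWAP}}_{ij}$ maps to the same point $(e^{i\pi\chi_J((12))/4})_J$, because all transpositions are conjugate. Hence $\overline{G}/\mathcal{SV}^{(n)}$ is cyclic of order dividing $8$.

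Fortunately, for $n\neq4$ no property of $\overline{G}$ beyond semi-universality is needed. Write $d_J=\dim V_J$. A character $\prod_J\det(v_J)^{k_J}$ can occur in the $\mathcal{SV}^{(n)}$-invariant part of $V^{\otimes t}\otimes\bar V^{\otimes t}$ only from a sector in which $a_J$ copies of $V$ and $b_J$ copies of $\bar V$ sit in block $J$, with $a_J-b_J=k_Jd_J$ and $\sum_Ja_J=\sum_Jb_J=t$. Hence $\sum_Jk_Jd_J=0$ and $\sum_{k_J>0}k_Jd_J=\sum_{k_J<0}|k_J|d_J\le t$. It is this one-sided charge, not $\sum_J|k_J|d_J$ (which is twice it), that must be compared with $t$.

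Since $d_{n/2}=1$, any $\mathbf{k}\neq0$ has $k_J\neq0$ for some $J\neq n/2$, which forces $t\ge\min_{J\neq n/2}d_J$. This minimum equals $n-1$ for every $n\neq4$. So for $n\neq4$ and $t\le n-2$ no nontrivial character occurs at all, and any semi-universal group is a $t$-design. This is exactly the content of the paper's remark that the relative-phase restrictions are irrelevant to the lower bound.

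Second, $n=4$ is genuinely exceptional since $d_0=2$, but your proposed fix via the explicit generators of Theorem \ref{thm:semiuniversal} cannot work. $G_i$ and $G_{ij,kl}$ are group commutators (using $U(3\pi/4)=-U(\pi/4)^\dagger$), so they lie in $\mathcal{SV}^{(4)}$, and every determinant character is blind to them. Indeed, $\mathcal{SV}^{(4)}$ is itself semi-universal but is not a $2$-design: $\det(v_0)\det(v_2)^{-2}$ occurs at $t=2$ and is trivial on it.

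What rescues $t\ge2$ for $\mathsf{XQP}(\pi/4)$ is phase information. By the charge count above, the only nontrivial characters occurring at $t=2$ are $(\det(v_0)\det(v_2)^{-2})^{\pm1}$. On $\sqrt{\mathrm{SWAP}}_{ij}$ this character equals $e^{i\frac{\pi}{4}(\chi_0-2\chi_2)}=e^{-i\pi/2}\neq1$, because $\chi_0((12))=\mathrm{tr}\,E_{12}|_{V_0}=0$ and $\chi_2=1$. So neither character is trivial on $\overline{G}$. The paper's appeal to semi-universality alone glosses over this case as well.
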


\noindent This result follows directly from Corollary 2 of \citep{Liu_2024}, where it was applied to $\mathsf{XQP}$ circuits. Notably, it relies solely on the semi-universality property of the random circuit ensemble and multiplicities of $SU(2)$ irreps -- for the lower bound in Corollary \ref{cor:designs}, the restrictions on relative phases in $\mathsf{XQP}(\pi/4)$ are not relevant. It would be interesting to explore the structure of random $\sqrt{\mathrm{SWAP}}$ circuits required to generate the $t$-designs, as well as calculate the exact value of $t$ for such circuits (for which restrictions on attainable relative phases are pertinent). For $\mathsf{XQP}$, in \cite{Liu_2024, Mitsuhashi_2025} it was shown that $t = (n-1)(n-3) - 1$. Whether the scaling of $t$ in random $\mathsf{XQP}(\pi /4)$ circuits is also quadratic in $n$, or remains linear is currently unknown. We address these questions in forthcoming work.

\section{The Entangling Power of Exchange Gates} \label{sec:entangling_power}

We now quantify the entangling capability of the exchange gates $U(\theta)$, as well as the special case of the $U(\pi/4) = \sqrt{\mathrm{SWAP}}$ gate. This analysis clarifies the strength of the exchange interaction as a primitive for entanglement generation, and forms a first step towards characterising the power of low-depth $\mathsf{XQP}$ circuits. We investigate this further in future work.  

To quantify the entangling power, we follow the Zanardi--Zalka--Faoro approach~\cite{ZanardiZalkaFaoro2000}.
The entangling power is defined as the Haar average over product inputs:
\begin{equation} \label{eq:ep0-def}
e_{p_0}(U)
:=
\mathbb{E}_{|\psi\rangle,|\phi\rangle}
\!\left[
E_{\mathrm{lin}}\big(U(|\psi\rangle\otimes|\phi\rangle)\big)
\right].
\end{equation}

\noindent Here $|\psi\rangle \in \mathcal H_A$ and $|\phi\rangle \in \mathcal H_B$ are independent
Haar-random one-qubit pure states. The linear entropy is given by:
\begin{align} \label{eq:lin-entropy-def}
E_{\mathrm{lin}}(|\Psi\rangle)
&=
1 - \operatorname{Tr}(\rho_A^2), &
\qquad
\rho_A &= \operatorname{Tr}_B(|\Psi\rangle\langle\Psi|),
 \end{align}

\noindent where $|\Psi\rangle$ is a pure two-qubit state. We will calculate $e_{p_0}$ as well as $E_{\mathrm{lin}}$ for $U(\theta)$ as a function of the pulse angle $\theta$. For $\mathcal{H}=\mathcal{H}_A\otimes \mathcal{H}_B$ with $\dim \mathcal{H}_A=\dim \mathcal{H}_B=2$, and $\theta\in\mathbb{R}$ we take the exchange interaction $U(\theta)$ in the basis $\{|00\rangle,|01\rangle,|10\rangle,|11\rangle\}$ to be the matrix given in Equation \eqref{eq:Utheta-def}. At this point we make a short remark about the phase convention:

\begin{remark}
Two matrix representations of $U(\theta)$ that differ only by multiplication by a scalar $e^{i \phi}$ represent the same physical operation, as physical two-qubit gates are represented by elements of the projective projective unitary group $PU(4)=U(4)/U(1)$. In particular, distinct but algebraically equivalent coefficient forms represent the same gate whenever the full matrices differ by only a single common phase factor. Such representatives have identical entangling and operational properties.
\end{remark}

\begin{lemma}
\label{lem:zanardi-ep-partial-swap}
Let $U(\theta)$ be the exchange gate defined as in~\eqref{eq:Utheta-def}. Its entangling power~\eqref{eq:ep0-def} equals:
\begin{equation}
\label{eq:ep0-Utheta}
e_{p_0}\big(U(\theta)\big)=\frac{1-\cos(4\theta)}{12}=\frac{\sin^2(2\theta)}{6}.
\end{equation}

\noindent In particular,
\begin{equation}
e_{p_0}\!\big(\sqrt{\mathrm{SWAP}}\big)=e_{p_0}\!\big(U(\pi/4)\big)=\frac{1}{6}.
\end{equation}

\noindent For the normalized linear entropy $\widetilde E_{\mathrm{lin}}:=2(1-\mathrm{Tr}(\rho_A^2))$, the corresponding average is $2e_{p_0}(U(\theta))=(1-\cos4\theta)/6$.
\end{lemma}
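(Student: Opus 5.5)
The plan is to compute $E_{\mathrm{lin}}$ directly on an arbitrary product input, using $U(\theta) = \cos\theta\,\mathbf{1} + i\sin\theta\,\mathrm{SWAP}$, and then take the Haar average over a single scalar. This avoids the general Zanardi operator-entanglement formula. Write $c=\cos\theta$ and $s=\sin\theta$. For product input $|\psi\rangle|\phi\rangle$,
\begin{equation}
|\Psi\rangle = U(\theta)|\psi\rangle|\phi\rangle = c\,|\psi\rangle|\phi\rangle + i s\,|\phi\rangle|\psi\rangle .
\end{equation}
By the phase-convention remark, any global phase on $U(\theta)$ is irrelevant. The state is automatically normalised, since the cross terms contribute $isc\,x - isc\,x = 0$, where $x := |\langle\psi|\phi\rangle|^2$.

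Next I trace out $B$ to get the reduced state:
\begin{equation}
\rho_A = c^2 |\psi\rangle\langle\psi| + s^2 |\phi\rangle\langle\phi| - i s c\,\langle\psi|\phi\rangle\, |\psi\rangle\langle\phi| + i s c\,\langle\phi|\psi\rangle\, |\phi\rangle\langle\psi| .
\end{equation}
I then compute $\mathrm{Tr}\,\rho_A^2$ by expanding the product of these four rank-one terms. Every trace that appears is a power of $\langle\psi|\phi\rangle$ or its conjugate, so the result depends only on $x$.

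This expansion is the only real bookkeeping step and the main place an error can slip in. The signs of the imaginary cross terms must combine correctly. I expect the outcome
\begin{equation}
E_{\mathrm{lin}}(|\Psi\rangle) = 1-\mathrm{Tr}\,\rho_A^2 = 2c^2s^2(1-x)^2 = \tfrac{1}{2}\sin^2(2\theta)\,(1-x)^2 .
\end{equation}
Two checks confirm this form: it vanishes at $x=1$, because identical inputs give a product state, and it vanishes at $\theta=0,\pi/2$, where $U(\theta)$ is the identity or the SWAP.

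For the average, observe that when $|\psi\rangle$ and $|\phi\rangle$ are independent Haar-random qubit states, the overlap $x=|\langle\psi|\phi\rangle|^2$ is uniformly distributed on $[0,1]$. This holds because on the Bloch sphere $x=(1+\cos\gamma)/2$, and $\cos\gamma$ is uniform. Hence $\mathbb{E}[(1-x)^2]=1/3$, and
\begin{equation}
e_{p_0}(U(\theta)) = \frac{\sin^2(2\theta)}{6} = \frac{1-\cos(4\theta)}{12}.
\end{equation}
Setting $\theta=\pi/4$ gives $e_{p_0}(\sqrt{\mathrm{SWAP}})=1/6$, and the statement for the normalised linear entropy follows immediately by multiplying by $2$.
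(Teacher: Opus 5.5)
Your proposal is correct, and it takes a genuinely different route from the paper.

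\textbf{The paper's route.} The paper invokes Proposition~1 of Zanardi--Zalka--Faoro. This converts the Haar average into the two-copy operator traces $\mathrm{Tr}(U^{\otimes 2}T_{13}U^{\dagger\otimes 2}T_{13})$ and $\mathrm{Tr}(U^{\otimes 2}T_{24}U^{\dagger\otimes 2}T_{13})$ on a $16$-dimensional space. The paper evaluates these by direct computation and then substitutes.

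\textbf{Your route.} You instead use the special form $U(\theta)=\cos\theta\,\mathbf{1}+i\sin\theta\,\mathrm{SWAP}$ to obtain a pointwise identity on product inputs,
$E_{\mathrm{lin}}=\tfrac12\sin^2(2\theta)\,(1-|\langle\psi|\phi\rangle|^2)^2$.
The expansion you anticipate does come out this way. The cross terms between the diagonal pieces $|\psi\rangle\langle\psi|$, $|\phi\rangle\langle\phi|$ and the off-diagonal pieces cancel in conjugate pairs. The squares of the off-diagonal pieces contribute $-2c^2s^2x^2$, and their mixed product contributes $+2c^2s^2x$. Together with $c^4+s^4+2c^2s^2x$ this gives $1-2c^2s^2(1-x)^2$. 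The average then reduces to $\mathbb{E}[(1-x)^2]=1/3$, with $x$ uniform on $[0,1]$ for qubits.

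\textbf{What each buys.} The Zanardi formula is a uniform recipe for any two-qubit gate and needs no insight into the structure of $U$. Your computation is more elementary and self-contained, and it gives strictly more information: the full profile of $E_{\mathrm{lin}}$ over product inputs. From that profile, the maximum $\tfrac12\sin^2(2\theta)$, attained at orthogonal inputs (the paper's next lemma), follows immediately.

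Your route also serves as an independent check on the paper's arithmetic. The paper's two displayed traces actually sum to $17+3\cos(4\theta)$, not $14+6\cos(4\theta)$. Moreover, substituting the paper's $I_0+I_1=30+6\cos(4\theta)$ would give $(1-\cos4\theta)/6$. The stated final formula $(1-\cos4\theta)/12$ is nonetheless correct and agrees with yours.
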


\begin{proof}
We follow Proposition~1 of \cite{ZanardiZalkaFaoro2000}. Consider two copies of the bipartite system,
\begin{equation}
\mathcal{H}^{\otimes 2}\cong (\mathcal{H}_A\otimes \mathcal{H}_B)\otimes(\mathcal{H}_A\otimes \mathcal{H}_B),
\end{equation}

\noindent and label the tensor factors as $(1,2,3,4)=(A_1,B_1,A_2,B_2)$.
Let $T_{13}$ denote the $\mathrm{SWAP}$ operator exchanging factors $1$ and $3$ (i.e.\ $A_1\leftrightarrow A_2$), and let $T_{24}$ denote the $\mathrm{SWAP}$ exchanging factors $2$ and $4$ (i.e.\ $B_1\leftrightarrow B_2$). For $d_A=d_B=2$, Proposition~1 in \cite{ZanardiZalkaFaoro2000} yields:
\begin{equation}
\begin{split}
\label{eq:zanardi-formula}
e_{p_0}(U)
&=
1-\frac{1}{36}\Big(I_0(U)+I_1(U)\Big),\\
I_0(U)&=8+\mathrm{Tr}\!\big(U^{\otimes 2}T_{13}U^{\dagger\otimes 2}T_{13}\big),\\
I_1(U)&=8+\mathrm{Tr}\!\big(U^{\otimes 2}T_{24}U^{\dagger\otimes 2}T_{13}\big).
\end{split}
\end{equation}

\noindent We now evaluate the two traces for $U(\theta)$ given in \eqref{eq:Utheta-def}.
A direct computation gives:
\begin{align}
\mathrm{Tr}\!\big(U(\theta)^{\otimes 2} &T_{13}U(\theta)^{\dagger\otimes 2}T_{13}\big)
\\
&=4\sin^4\theta+12\cos^2\theta+4\cos^2\theta\cos(2\theta),\\
\mathrm{Tr}\!\big(U(\theta)^{\otimes 2}&T_{24}U(\theta)^{\dagger\otimes 2}T_{13}\big)\\
&=4\sin^4\theta+4\sin^2\theta+4-4\sin^2\theta\cos(2\theta).
\end{align}

\noindent Adding the two expressions and simplifying using trigonometric identities yields:
\begin{equation}
\begin{split}
\mathrm{Tr}\!\big(&U(\theta)^{\otimes 2}T_{13}U(\theta)^{\dagger\otimes 2}T_{13}\big)
+
\mathrm{Tr}\!\big(U(\theta)^{\otimes 2}T_{24}U(\theta)^{\dagger\otimes 2}T_{13}\big)\\
&=
8+6+6\cos(4\theta) \\ 
&=14+6\cos(4\theta).
\end{split}
\end{equation}

\noindent Hence, $I_0(U(\theta))+I_1(U(\theta))=16+(14+6\cos4\theta)=30+6\cos4\theta$, and substituting into \eqref{eq:zanardi-formula} gives:
\begin{equation*}
\begin{split}
e_{p_0}(U(\theta))=
\frac{1-\cos(4\theta)}{12}=
\frac{\sin^2(2\theta)}{6},
\end{split}
\end{equation*}

\noindent which proves \eqref{eq:ep0-Utheta}. Setting $\theta=\pi/4$ gives $e_{p_0}=1/6$.
\end{proof}

In the following Lemma, we establish the maximum achievable entangling power of the exchange gate $U(\theta)$.

\begin{lemma}
\label{lem:max-ep-lin-Utheta}

Let $U(\theta)$ be the exchange gate defined as in~\eqref{eq:Utheta-def}.
Define the maximum linear-entropy entangling power by:
\begin{equation}
\mathrm{ep}_{\mathrm{lin}}(U):=\max_{|\alpha\rangle,|\beta\rangle}
E_{\mathrm{lin}}\big(U(|\alpha\rangle\otimes|\beta\rangle)\big),
\end{equation}

\noindent where the maximum is over all product inputs. Then,
\begin{equation}
\label{eq:max-ep-Utheta}
\mathrm{ep}_{\mathrm{lin}}\big(U(\theta)\big)=\frac{1}{2}\sin^2(2\theta).
\end{equation}

\noindent In particular,
\begin{equation}
\mathrm{ep}_{\mathrm{lin}}\!\big(\sqrt{\mathrm{SWAP}}\big)
=\mathrm{ep}_{\mathrm{lin}}\!\big(U(\pi/4)\big)
=\frac{1}{2}.
\end{equation}

\noindent For the normalized linear entropy $\widetilde E_{\mathrm{lin}}:=2(1-\mathrm{Tr}(\rho_A^2))$, one has $\max \widetilde E_{\mathrm{lin}}=\sin^2(2\theta)$.
\end{lemma}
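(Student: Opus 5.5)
The plan is to exploit the $SU(2)\times SU(2)$ covariance of the exchange gate to reduce the product inputs to a one-parameter family, then compute the linear entropy on that family in closed form. Since $U(\theta)=\cos\theta\cdot\mathbf{1}+i\sin\theta\cdot\mathrm{SWAP}$ commutes with $V\otimes V$ for every $V\in SU(2)$, and local unitaries leave $E_{\mathrm{lin}}$ unchanged, I may rotate $|\alpha\rangle$ to $|0\rangle$ without loss of generality. The remaining phase freedom ($V$ diagonal) then lets me take $|\beta\rangle=c|0\rangle+s|1\rangle$ with $c=\cos(\gamma/2)$, $s=\sin(\gamma/2)\geq 0$ real. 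The maximisation is therefore over a single overlap parameter $p:=|\langle\alpha|\beta\rangle|^2=c^2\in[0,1]$.

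Next I compute the output state:
\begin{equation}
U(\theta)|0\rangle|\beta\rangle=e^{i\theta}c|00\rangle+s\cos\theta|01\rangle+is\sin\theta|10\rangle .
\end{equation}
Writing this as a $2\times 2$ coefficient matrix $M$, I will use $\mathrm{Tr}\rho_A^2=1-2|\det M|^2$, which gives $E_{\mathrm{lin}}=2|\det M|^2$. Here $\det M=e^{i\theta}c\cdot 0-s\cos\theta\cdot is\sin\theta$, so $|\det M|=s^2|\cos\theta\sin\theta|$. Hence
\begin{equation}
E_{\mathrm{lin}}=2s^4\cos^2\theta\sin^2\theta=\tfrac{1}{2}(1-p)^2\sin^2(2\theta).
\end{equation}

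Maximising over $p$ is now immediate: the maximum is attained at $p=0$, i.e. orthogonal inputs such as $|01\rangle$, and equals $\tfrac12\sin^2(2\theta)$. Setting $\theta=\pi/4$ gives $\mathrm{ep}_{\mathrm{lin}}=1/2$, which is the two-qubit maximum. The statement about the normalised entropy follows by multiplying by $2$.

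The only delicate step is justifying the reduction to $|\alpha\rangle=|0\rangle$ with real $|\beta\rangle$. This requires checking two things:
\begin{itemize}
\item[(i)] $[U(\theta),V\otimes V]=0$, which holds because the SWAP commutes with $V\otimes V$;
\item[(ii)] the linear entropy is invariant under $V\otimes V$, which holds because this is a local unitary.
\end{itemize}
The rest is routine algebra. As a sanity check, I will verify the consistency of the closed form with Lemma \ref{lem:zanardi-ep-partial-swap}. The Haar average of $(1-p)^2$ is $\mathbb{E}[(1-p)^2]=1/3$, since $p$ is uniform on $[0,1]$ for Haar-random qubit states. This yields $\tfrac12\cdot\tfrac13\sin^2(2\theta)=\sin^2(2\theta)/6$, matching the entangling power computed there.
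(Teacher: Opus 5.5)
Your proof is correct. It differs from the paper's in how the upper bound is obtained.

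The paper works with a general product input $(a|0\rangle+b|1\rangle)\otimes(c|0\rangle+d|1\rangle)$ and writes $E_{\mathrm{lin}}=2|AD-BC|^2$. It gets the lower bound $\frac12\sin^2(2\theta)$ from the input $|01\rangle$. It then essentially asserts that $|AD-BC|\le\frac12|\sin 2\theta|$ for every product input, pointing to the Schmidt coefficients of $U(\theta)|01\rangle$. That remark on its own does not exclude other inputs doing better when $\theta\neq\pi/4$.

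Your route uses the $V\otimes V$ covariance of $U(\theta)$ to reduce to the one-parameter family $|0\rangle\otimes(c|0\rangle+s|1\rangle)$. This gives the exact formula $E_{\mathrm{lin}}=\frac12\bigl(1-|\langle\alpha|\beta\rangle|^2\bigr)^2\sin^2(2\theta)$ for all product inputs. The lower and upper bounds then follow together, and the same formula explains the Haar average $\sin^2(2\theta)/6$ of the preceding lemma.

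The paper's coordinate approach gives the same formula if the expansion is done carefully. One finds $AD-BC=-i\sin\theta\cos\theta\,(ad-bc)^2$, and the Lagrange identity gives $|ad-bc|^2=1-|\langle\alpha|\beta\rangle|^2$. The identity displayed in the paper carries a spurious extra term $abcd(e^{2i\theta}-1)$. That term would be nonzero for $|{+}\rangle\otimes|{+}\rangle$, yet this input is mapped to the product state $e^{i\theta}|{+}\rangle\otimes|{+}\rangle$.

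Your symmetry reduction gives a coordinate-free reason why only the overlap $|\langle\alpha|\beta\rangle|$ matters. The corrected direct expansion reaches the same closed form without needing the covariance step.
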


\begin{proof}
Write an arbitrary product input as:
\begin{align}
|\alpha\rangle&=a|0\rangle+b|1\rangle,
&
|\beta\rangle&=c|0\rangle+d|1\rangle,
\end{align}

\noindent with $|a|^2+|b|^2=|c|^2+|d|^2=1$. Then,
\begin{equation}
|\alpha\beta\rangle
=ac|00\rangle+ad|01\rangle+bc|10\rangle+bd|11\rangle.
\end{equation}

\noindent Applying $U(\theta)$ gives:
\begin{align}
U(\theta)|\alpha\beta\rangle
&=
e^{i\theta}ac|00\rangle+\big(\cos\theta\,ad+i\sin\theta\,bc\big)|01\rangle \\
&+\big(i\sin\theta\,ad+\cos\theta\,bc\big)|10\rangle
+e^{i\theta}bd|11\rangle.
\end{align}

\noindent Denote the output amplitudes in the computational basis by:
\begin{equation*}
\begin{split}
A:=e^{i\theta}ac,\quad
B:=\cos\theta\,ad+i\sin\theta\,bc,\\
C:=i\sin\theta\,ad+\cos\theta\,bc,\quad
D:=e^{i\theta}bd.
\end{split}
\end{equation*}

\noindent For a two-qubit pure state $A|00\rangle+B|01\rangle+C|10\rangle+D|11\rangle$, one has:
\begin{equation}
E_{\mathrm{lin}}=2|AD-BC|^2.
\end{equation}

\noindent Hence,
\begin{equation}
\label{eq:Elin-ADBC}
E_{\mathrm{lin}}\big(U(\theta)|\alpha\beta\rangle\big)=2\big|AD-BC\big|^2.
\end{equation}

\noindent A direct simplification using the above expressions yields:
\begin{equation}
AD-BC
=
ab\,cd\,(e^{2i\theta}-1) - i\sin\theta\cos\theta\,(ad-bc)^2.
\end{equation}

\noindent Now choose, the product input parameters such that $ac=bd=0$, i.e.\ take for instance:
\begin{equation}
|\alpha\rangle=|0\rangle,\qquad |\beta\rangle=|1\rangle,
\end{equation}

\noindent so that $a=1,b=0,c=0,d=1$. Then $A=D=0$ and $B=\cos\theta$, $C=i\sin\theta$, so:
\begin{equation}
AD-BC = -\,\cos\theta\,(i\sin\theta)= -\,i\sin\theta\cos\theta,
\end{equation}

\noindent and therefore:
\begin{equation*}
\begin{split}
E_{\mathrm{lin}}\big(U(\theta)|01\rangle\big)
&=
2|\sin\theta\cos\theta|^2
=
2\sin^2\theta\cos^2\theta \\
&=
\frac{1}{2}\sin^2(2\theta).
\end{split}
\end{equation*}

\noindent This gives the lower bound:
\begin{equation}
\mathrm{ep}_{\mathrm{lin}}\big(U(\theta)\big)\ge \frac{1}{2}\sin^2(2\theta).
\end{equation}

\noindent For the matching upper bound, recall that for any two-qubit pure state $|\Psi\rangle$
one has $0\le E_{\mathrm{lin}}(|\Psi\rangle)\le \frac{1}{2}$, with equality iff $|\Psi\rangle$ is maximally entangled.
Moreover, for fixed $\theta$, the image $U(\theta)(|\alpha\rangle\otimes|\beta\rangle)$ varies continuously over a compact
set of product inputs, so the maximum defining $\mathrm{ep}_{\mathrm{lin}}$ is attained.
Finally, the explicit value achieved above already equals the maximal possible entanglement compatible with
the Schmidt coefficients of the two-dimensional subspace $\mathrm{span}\{|01\rangle,|10\rangle\}$ after rotation by $\theta$:
the state $U(\theta)|01\rangle=\cos\theta\,|01\rangle+i\sin\theta\,|10\rangle$ has Schmidt coefficients
$|\cos\theta|$ and $|\sin\theta|$, hence its linear-entropy entanglement is exactly
$2\sin^2\theta\cos^2\theta=\frac{1}{2}\sin^2(2\theta)$, and no product input can yield larger
$E_{\mathrm{lin}}$ because $E_{\mathrm{lin}}=2|AD-BC|^2\le 2\cdot\frac{1}{4}\sin^2(2\theta)$ for this gate family.
Thus, $\mathrm{ep}_{\mathrm{lin}}(U(\theta))=\frac{1}{2}\sin^2(2\theta)$.
\end{proof}

\section{Characterisation of Irrep Structure for XQP States} \label{sec:schur_states}

In this section, we analyse the action of $\mathsf{XQP}$ circuits by considering the irrep subspace decomposition of the computational basis $|\mathbf{x} \rangle$ used in state preparation and measurement. All $\mathsf{XQP}$ unitaries are generated by elements of the symmetric group algebra $\mathbb{C} S_n$, and we therefore write them as:

\begin{equation}
    U = \prod_{k}^N (\cos \theta_k \cdot \mathbf{1} + i \sin \theta_k \cdot E_{i_k j_k}) = \sum_{\sigma \in S_n} \alpha_\sigma \sigma,
\end{equation}

\noindent where $\alpha_\sigma$ are complex coefficients. Clearly, $U$ forms a reducible representation of $S_n$ on the vector space $(\mathbb{C}^2)^{\otimes n}$, which we decompose into irreducible representations (c.f. Definition \ref{def:schur_weyl}) as:

\begin{equation}
    (\mathbb{C}^2)^{\otimes n} \cong \bigoplus_{J \in \mathcal{J}_n} V_J \otimes W_J.
\end{equation}

\noindent The basis which adheres to this decomposition is the Schur (or Gelfand--Tsetlin) basis, in which states are written as:
\begin{equation}
| \mathbf{J} = (j_{[1]}, j_{[2]}, \ldots, j_{[n]}); M \rangle,
\end{equation} 

\noindent where $\mathbf{J}$ is a sequence of $S_k$ irreducible representations which the Schur state belongs to under the group subduction $S_n \supset S_{n-1} \supset \ldots \supset S_1$, with $j_{[n]} = J$. $M$ is the $z$-component of the total angular momentum of the Schur state, or equivalently the multiplicity label of the $S_n$ irrep space. Sometimes we will also write the Schur states as $| \lambda; \mu, \nu \rangle$, where $\lambda \vdash n$ is a partition of $n$ (a Young diagram) corresponding to the $S_n$ irrep, $\mu$ is a standard Young tableau of shape $\lambda$, and $\nu$ is a semistandard Young tableau of shape $\lambda$. The label $\mu$ may be represented as a Yamanouchi symbol $\mathbf{Y} = (Y_1, Y_2, \ldots, Y_n)$, where $Y_k \in \{ 0, 1\}$ is the row of the Young diagram in which the number $k$ appears in $\mu$. The label $\nu$ may be also represented as a Gelfand--Tsetlin pattern.

The Schur states may be decomposed into the computational basis as \citep{Havlicek_2018, Havlicek_2019}:
\begin{align}
|\mathbf{J}, M \rangle &= \sum_{m_1, m_2} \sum_{m_{[2]}, m_3} \cdots \sum_{m_{[n-1]}, m_n} C^{j_{[2]}, m_{[2]}}_{\frac{1}{2} m_1 ;\frac{1}{2}  m_2} C^{j_{[3]}, m_{[3]}}_{j_{[2]}, m_{[2]} ;\frac{1}{2}  m_3} \\
&\cdots C^{j_{[n-1]}, m_{[n-1]}}_{j_{[n-2]}, m_{[n-2]} ;\frac{1}{2}  m_{n-2}} C^{J M}_{j_{[n-1]}, m_{[n-1]} ;\frac{1}{2}  m_n} |m_1 \dots m_n \rangle,
\end{align}

\noindent where $|m_1 \dots m_n \rangle = | \mathbf{x} \rangle$ is a computational basis state with $m_i = 1/2 - x_i$, and $m_{[j]} = \sum_{k=1}^j m_k$. The scalars $C^{j_{[k]}, m_{[k]}}_{j_{[k-1]}, m_{[k-1]} ;\frac{1}{2}  m_k}$ are Clebsch--Gordan coefficients. The inner products $\langle \mathbf{x} | \mathbf{J}, M \rangle$ are simple products of Clebsch--Gordan coefficients:
\begin{equation}
    \langle \mathbf{x} |\mathbf{J}, M \rangle = C^{j_{[2]}, m_{[2]}}_{\frac{1}{2} m_1 ;\frac{1}{2}  m_2} C^{j_{[3]}, m_{[3]}}_{j_{[2]}, m_{[2]} ;\frac{1}{2}  m_3} \cdots C^{J M}_{j_{[n-1]}, m_{[n-1]} ;\frac{1}{2}  m_n}.
\end{equation}

\noindent We can therefore write $|\mathbf{x} \rangle$ in the Schur basis as:
\begin{align}
|\mathbf{x}\rangle &= \sum_{\mathbf{J}} \sum_{M = -J}^J \langle \mathbf{J}, M | \mathbf{x}  \rangle |\mathbf{J}, M \rangle \\&= 
\sum_{\mathbf{J}} \sum_{M = -J}^J C^{j_{[2]}, m_{[2]}}_{\frac{1}{2} m_1 ;\frac{1}{2}  m_2} C^{j_{[3]}, m_{[3]}}_{j_{[2]}, m_{[2]} ;\frac{1}{2}  m_3} \cdots \\
& \cdots C^{j_{[n-1]}, m_{[n-1]}}_{j_{[n-2]}, m_{[n-2]} ;\frac{1}{2}  m_{n-2}} C^{J M}_{j_{[n-1]}, m_{[n-1]} ;\frac{1}{2}  m_n} |\mathbf{J}, M \rangle.
\end{align}

\noindent The sum over $\mathbf{J}$ can be simplified using conservation of angular momentum: at each step, $j_{[k]} \leq k/2$, and $|m_{[k]}| \leq j_{[k]}$. Writing $\mathbf{M} = (m_{[1]}, m_{[2]}, \ldots, m_{[n]} = M)$, we write $\mathbf{J} \geq |\mathbf{M}|$ to denote sequences $\mathbf{J}$ which satisfy the constraints:
\begin{equation}
    \frac{i}{2} \geq j_{[i]} \geq |m_{[i]}| \quad \forall i \in \{ 1, 2, \ldots, n \}. \label{eq:jmconstraints}
\end{equation}

\noindent With this notation, we can write:
\begin{align}
|\mathbf{x}\rangle &= 
\sum_{\mathbf{J} \geq |\mathbf{M}|} C^{j_{[2]}, m_{[2]}}_{\frac{1}{2} m_1 ;\frac{1}{2}  m_2} C^{j_{[3]}, m_{[3]}}_{j_{[2]}, m_{[2]} ;\frac{1}{2}  m_3} \cdots \\
& \cdots C^{j_{[n-1]}, m_{[n-1]}}_{j_{[n-2]}, m_{[n-2]} ;\frac{1}{2}  m_{n-2}} C^{J M}_{j_{[n-1]}, m_{[n-1]} ;\frac{1}{2}  m_n} |\mathbf{J}, M \rangle.
\end{align}

\noindent Clearly, the constraints of Equation \eqref{eq:jmconstraints} are always satisfied when $\mathbf{J} = \mathbf{M}$, leading to the following Lemma:
\begin{lemma} \label{lem:decomposition}
    A computational basis state $|\mathbf{x} \rangle$ satisfies $\langle \mathbf{x} | \Pi_J | \mathbf{x} \rangle \neq 0$ for all $J \in \{ M, M+1, \ldots, n/2 \}$, where $\Pi_J$ is the projector onto the isotypic component ($V_J \otimes W_J$), and $M = n/2 - |\mathbf{x}|$. For any two computational basis states $|\mathbf{x} \rangle$, $|\mathbf{y} \rangle$ such that $|\mathbf{x}| = |\mathbf{y}|$, we have $k^2_{J, M} = \langle \mathbf{x} | \Pi_J | \mathbf{x} \rangle = \langle \mathbf{y} | \Pi_J | \mathbf{y} \rangle$.
\end{lemma}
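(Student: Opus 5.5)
The plan is to use the $S_n\times SU(2)$ structure directly rather than the Clebsch--Gordan expansion. Since $|\mathbf{x}\rangle$ is an eigenvector of the total $S^z=\sum_i S^z_i$ with eigenvalue $M=n/2-|\mathbf{x}|$, and $\Pi_J$ commutes with $S^z$, the vector $\Pi_J|\mathbf{x}\rangle$ lies in the weight-$M$ subspace of $V_J\otimes W_J$. That subspace is $V_J\otimes|J,M\rangle$, and it is nonzero only if $J\ge |M|$; this explains the lower end of the range. (I read the statement's $M$ as $|M|$. The case $M<0$ follows by the global bit flip $X^{\otimes n}$, which commutes with every $\Pi_J$ by Schur--Weyl duality and maps weight $M$ to $-M$.)

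For the second claim, take $|\mathbf{x}|=|\mathbf{y}|$. Then there is a permutation $\sigma\in S_n$ with $|\mathbf{y}\rangle=R(\sigma)|\mathbf{x}\rangle$, where $R$ permutes the tensor factors. The projector $\Pi_J$ is a central idempotent of the image of $\mathbb{C}S_n$, so it commutes with $R(\sigma)$. Since $R(\sigma)$ is unitary,
\begin{equation}
\langle\mathbf{y}|\Pi_J|\mathbf{y}\rangle=\langle\mathbf{x}|R(\sigma)^\dagger\Pi_J R(\sigma)|\mathbf{x}\rangle=\langle\mathbf{x}|\Pi_J|\mathbf{x}\rangle ,
\end{equation}
and this common value defines $k^2_{J,M}$.

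Nonvanishing then reduces to a counting argument. Sum the previous identity over the $\binom{n}{|\mathbf{x}|}$ strings of the same weight. The sum equals $\mathrm{Tr}(\Pi_J P_M)$, where $P_M$ projects onto the weight-$M$ sector. We have $\mathrm{Tr}(\Pi_J P_M)=\dim V_J\cdot\dim(W_J)_M$, and this is $\dim V_J\ge 1$ whenever $|M|\le J\le n/2$, by Definition \ref{def:schur_weyl}. Because all summands are equal, each one equals
\begin{equation}
k^2_{J,M}=\frac{\dim V_J}{\binom{n}{n/2-M}}>0 .
\end{equation}
This gives both claims at once, and it also yields a closed form for $k^2_{J,M}$.

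The one point needing care is the identification of $\Pi_J$ with a central element that commutes with every $R(\sigma)$, together with the fact that the weight-$M$ multiplicity of $W_J$ is exactly one for $|M|\le J$. Both are standard consequences of Schur--Weyl duality as stated in Definition \ref{def:schur_weyl}, so I expect no real obstacle. As a cross-check, the Clebsch--Gordan route suggested in the text gives the same result: choose the path $j_{[k]}=|m_{[k]}|$, raised by $\pm\frac12$ at the end to reach $J$, and note that each coefficient along it is nonzero.
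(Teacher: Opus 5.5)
Your argument is correct. For the equal-weight claim it matches the paper: the strings are related by some $R(\sigma)$, and the isotypic projector $\Pi_J$ commutes with every $R(\sigma)$. For nonvanishing, however, you take a different route.

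The paper works in the Clebsch--Gordan expansion. It fixes the sorted representative $|0^{n/2+M}1^{n/2-M}\rangle$ and notes that the coupling path is forced to equal $\mathbf{J}=\mathbf{M}$ while the partial sums $m_{[i]}$ increase. After that, $j_{[i]}$ may either rise or track $m_{[i]}$ downward, so every $J\in\{M,\ldots,n/2\}$ is reached by an admissible path; nonvanishing of the Clebsch--Gordan factors is left implicit. The closed form $k^2_{J,M}=\dim V_J/\binom{n}{n/2-M}$ is derived only after the lemma, using the same averaging over equal-weight strings that you use.

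You fold that averaging into the proof itself. Identifying the sum with $\mathrm{Tr}(\Pi_J P_M)=\dim V_J\cdot\dim(W_J)_M$ gives positivity for every $|M|\le J\le n/2$ at once. It also treats both signs of $M$ uniformly; your reading of the lower end as $|M|$ is the right one, since the paper's choice of representative tacitly assumes $M\ge 0$. Finally, it justifies the identity $\sum_{|\mathbf{y}|=n/2-M}\langle\mathbf{y}|\Pi_J|\mathbf{y}\rangle=\dim V_J$, which the paper states without comment.

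What the paper's route gives you, and yours does not, is structural information. For the sorted string, $\mathbf{J}=\mathbf{M}$ is the unique admissible path ending at $J=M$. So $\Pi_M|\mathbf{x}\rangle$ is proportional to a single Gelfand--Tsetlin state, with Yamanouchi symbol $(0^{n/2+M}1^{n/2-M})$. Theorem \ref{thm:amplitudes} later relies on exactly this fact, in the form $\Pi_0|0^{2n}1^{2n}\rangle=k_{0,0}|\phi_0\rangle$. A trace identity cannot tell you which vectors of $V_J$ actually occur.

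One small fix to your Clebsch--Gordan cross-check. As written, following $j_{[k]}=|m_{[k]}|$ and adjusting only at the last step reaches at most $J=|M|+1$; for example, $J=n/2$ forces $j_{[k]}=k/2$ throughout. A path that works for every admissible $J$ is $j_{[k]}=\max\bigl(|m_{[k]}|,\,J-(n-k)/2\bigr)$. Along it, each coupling coefficient is nonzero because both endpoints of every step stay within range.
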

\begin{proof}
    For a Hamming weight $|\mathbf{x}| = n/2 - M$, pick the computational basis state $| \mathbf{x} \rangle = |0^{n/2 + M} 1^{n/2 - M} \rangle$. The corresponding sequence of $m_{[i]}$ is $\mathbf{M} = (1/2, 1, \ldots, n/4 + M/2, n/4 + M/2 - 1/2, \ldots, M)$. Up to the $|\mathbf{x}|^{th}$ step of this sequence we have $\mathbf{J} = \mathbf{M}$, after which $j_{[i]}$ can keep increasing up to $n/2$, or decrease down with $\mathbf{M}$ to $j_{[n]} = M$. In fact, the sequence $\mathbf{J} = \mathbf{M}$ is the only sequence ending with $j_{[n]} = M$ which satisfies the constraints of Equation \eqref{eq:jmconstraints}, and therefore for our chosen $|\mathbf{x} \rangle$, we have $\Pi_J | \mathbf{x} \rangle = k_{M, M} |\mathbf{J} = \mathbf{M}, M \rangle$. For a different bitstring $|\mathbf{y} \rangle$ with the same Hamming weight, notice that $\langle \mathbf{y} | \Pi_J | \mathbf{y} \rangle = \langle \mathbf{x} | \sigma^\dagger \Pi_J \sigma | \mathbf{x} \rangle$ for some $\sigma \in S_n$. Since $\Pi_J$ is a projector onto an isotypic component, in the Schur basis it forms an identity on the $(V_J \otimes W_J)$ subspace, and therefore commutes with all $\sigma \in S_n$.
\end{proof}

The values of the coefficients $k_{J, M}^2$ can be easily calculated:
\begin{align}
k_{J, M}^2 &= \langle \mathbf{x} | \Pi_J | \mathbf{x} \rangle \\
&= \frac{1}{\binom{n}{\frac{n}{2}- M}} \sum_{|\mathbf{y} | = \frac{n}{2} - M} \langle \mathbf{y} | \Pi_J | \mathbf{y} \rangle \\
&= \frac{1}{\binom{n}{\frac{n}{2}- M}} \dim V_J,
\end{align} 

\noindent where $\dim V_J$ has a closed-form expression given in Definition \ref{def:schur_weyl}. Notably, if $n$ is even and $|\mathbf{x}| = n/2$, then $k_{0, 0}^2 = (n/2 + 1)^{-1}$.

With the above decomposition, we can re-write input states $|\mathbf{x} \rangle$ as:
\begin{equation}
|\mathbf{x} \rangle = \sum_{J} k_{J, M} | \phi_J \rangle,
\end{equation}

\noindent where $|\phi_J \rangle = \frac{1}{k_{J, M}} \Pi_J | \mathbf{x} \rangle$ is a normalised state in the $V_J$ irrep subspace. Any $\mathsf{XQP}$ unitary $U$ acting on this state will preserve the relative amplitudes $k_{J, M}$, therefore no $\mathsf{XQP}$ circuit can produce a Schur state, or a state with support exclusively on a single irrep (except $|\mathbf{0} \rangle$ and $|\mathbf{1} \rangle$, which reside in the trivial irrep). Hence, exchange-based quantum computation must always require external resources (gates, postselection or a supply of singlets) to prepare $\mathsf{DFS}_k$ logical basis states.

In spite of these restrictions, we will now show that the ability to efficiently calculate $\mathsf{XQP}$ circuit \textit{amplitudes} $\langle \mathbf{y} | U | \mathbf{x} \rangle$ is sufficient to efficiently calculate the amplitudes of any quantum circuit, forming an additional piece of evidence (alongside Theorem \ref{thm:tvxqpbqp}) that $\mathsf{XQP}$ circuits are hard to classically simulate to additive precision. We first prove two technical lemmas, before combining them with Lemma \ref{lem:decomposition} to obtain Theorem \ref{thm:amplitudes}.

\begin{lemma} \label{lem:amplitudes}
Suppose we can classically efficiently calculate quantities $a_{\mathbf{y}, \mathbf{x}}^{U}$ such that $|a_{\mathbf{y}, \mathbf{x}}^{U} - \langle \mathbf{x} | U | \mathbf{y} \rangle| \leq \epsilon$ for an $\mathsf{XQP}$ circuit $U$ with $|\mathbf{x}| = n/2$ and $n$ even. Then, we can classically efficiently calculate quantities $\tilde{a}_{\mathbf{y}, \mathbf{x}}^{U}$, such that $|\tilde{a}_{\mathbf{y}, \mathbf{x}}^{U} - \langle \phi_{0}^{(\mathbf{x})} | U  | \phi_0^{(\mathbf{y})}\rangle| \leq  (n/2 + 1) \epsilon$.
\end{lemma}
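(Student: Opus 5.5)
The key observation is that an $\mathsf{XQP}$ circuit $U$ lies in the image of $\mathbb{C}S_n$, so by Schur--Weyl duality (Definition \ref{def:schur_weyl}) it commutes with every isotypic projector $\Pi_J$. Take $|\mathbf{x}| = |\mathbf{y}| = n/2$, so that $M=0$ for both, and write $|\phi_0^{(\mathbf{x})}\rangle = k_{0,0}^{-1}\Pi_0|\mathbf{x}\rangle$, using $k_{0,0}^2 = (n/2+1)^{-1}$ from the computation following Lemma \ref{lem:decomposition}. Then
\begin{equation}
\langle \phi_0^{(\mathbf{x})}|U|\phi_0^{(\mathbf{y})}\rangle = k_{0,0}^{-2}\langle \mathbf{x}|\Pi_0 U \Pi_0|\mathbf{y}\rangle = (n/2+1)\langle \mathbf{x}|U\Pi_0|\mathbf{y}\rangle .
\end{equation}
The task therefore reduces to expressing $U\Pi_0$ as a combination of $\mathsf{XQP}$-type operators whose computational-basis amplitudes we can estimate. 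The factor $(n/2+1)$ already matches the error blow-up in the statement. This indicates the combination should be a convex combination (an average) of $\mathsf{XQP}$ circuits, since averaging preserves an additive error of $\epsilon$.

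The natural tool is the central idempotent of the $J=0$ isotypic component. Since $\Pi_0$ restricted to $M=0$ is the projector onto the $SU(2)$-singlet sector, I would write it as a group average $\Pi_0 = \int_{SU(2)} dg\, \overline{\chi}\ldots$, though that is not an $\mathsf{XQP}$ circuit. Within the $S_n$ picture, the better option is the character formula $\Pi_J = \frac{\dim V_J}{n!}\sum_{\sigma}\chi_J(\sigma)\sigma$. Here each $\sigma$ is a product of transpositions, $i E_{ij} = U_{ij}(\pi/2)$, so each $\sigma$ is an $\mathsf{XQP}$ circuit up to a known phase. The snag is that $\sum_\sigma |\chi_0(\sigma)|\dim V_0/n!$ is not $O(1)$ in general, so the naive bound gives a much worse constant. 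I would instead use that the amplitude is taken between $M=0$ weight states. On the weight-$M$ subspace, $\Pi_0$ is a polynomial in the Casimir $\vec{S}^2 = \sum_{i<j}E_{ij} + \mathrm{const}$, and $\vec{S}^2$ has eigenvalues $J(J+1)$ for $J\in\{0,\ldots,n/2\}$. Lagrange interpolation, or better a Fourier-type representation $\Pi_0 = \frac{1}{T}\sum_t e^{i t\,\cdot\, \mathrm{stuff}}$, would express $\Pi_0$ as an average of unitaries $e^{i\tau \vec{S}^2}$.

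This is the cleanest route. The total-spin phase $e^{i\tau J(J+1)}$ with $\tau = 2\pi k/K$, averaged uniformly over $k$ for a suitable integer $K$ (for instance $K$ larger than $n^2$, so that $J(J+1) \equiv 0 \bmod K$ only for $J=0$), gives exactly $\Pi_0$ on the whole space. Each $e^{i\tau \sum_{i<j}E_{ij}}$ is an $\mathsf{XQP}$ unitary, being the exponential of an element of the Lie algebra generated by transpositions. It is implementable either via Trotterisation or, more carefully, by recognising it as central and hence a legitimate $\mathsf{XQP}$ circuit up to the paper's lenient definition. Then
\begin{equation}
\langle \phi_0^{(\mathbf{x})}|U|\phi_0^{(\mathbf{y})}\rangle = \frac{n/2+1}{K}\sum_{k=0}^{K-1} c_k\langle \mathbf{x}|U V_k|\mathbf{y}\rangle,
\end{equation}
with $V_k$ an $\mathsf{XQP}$ circuit and $|c_k|=1$ a known global phase. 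Setting $\tilde a$ to the same average of the estimates $a^{UV_k}_{\mathbf{y},\mathbf{x}}$ gives error at most $(n/2+1)\epsilon$ by the triangle inequality, using $K=\mathrm{poly}(n)$ calls.

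The main obstacle is making the averaging unitaries genuine, efficiently describable $\mathsf{XQP}$ circuits whose amplitudes the hypothesised oracle covers. If exact central elements are not available, I would aim to realise them as exact products of exchange gates, or absorb a $1/\mathrm{poly}$ Trotter error into $\epsilon$. If even that fails, the fallback is the more modest reading of the lemma, namely one oracle call on $U\Pi_0$ treated as the circuit.
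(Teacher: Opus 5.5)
Your proposal is correct and follows essentially the paper's proof. Both project onto the $J=0$ isotypic component by averaging the (Trotterised) unitaries $e^{2\pi i k T/K}$, with $T=\sum_{i<j}E_{ij}$, so that only $J(J+1)\equiv 0 \bmod K$ survives; both then rescale by $k_{0,0}^{-2}=n/2+1$ and bound the error with the triangle inequality. Your small refinements are all valid: you treat $\mathbf{y}$ directly via $[U,\Pi_0]=0$ rather than the paper's substitution $U\to U\sigma$, you note that $K$ need not be prime, and you explicitly track the phases $c_k$ coming from the constant in $\vec S^2$, which the paper leaves as ``up to a global phase''.
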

\begin{proof}

We will make use of the Hamiltonian $T = \sum_{i < j} E_{ij}$, which is a sum of all exchange interactions. It is proportional to the total angular momentum operator $\hat{J}^2$, where:
\begin{equation}
    \hat{J}^2 = \sum_{i < j} E_{ij} + (n - \frac{n^2}{4}) \mathbf{1}.
\end{equation}

\noindent Hence, up to a global phase, $e^{i \theta T} |\phi_J \rangle = e^{i \theta J(J+1)} |\phi_J \rangle$ on each irrep space. This unitary can be synthesised with an $\mathsf{XQP}$ circuit to high approximation using the Lie--Trotter formula. Choose the smallest prime number $Q$ such that $Q > \frac{n}{2}(\frac{n}{2} + 1)$, and define the gates $T(r) = e^{2 \pi i r T / Q}$ for $r \in \{ 0, 1, \ldots, Q-1 \}$. We have that:
\begin{equation}
    \langle \mathbf{x} | U T(r) | \mathbf{x} \rangle = \sum_J k_{J, 0}^2 e^{2 \pi i r \frac{J(J+1)}{Q}} \langle \phi_{0}^{(\mathbf{x})} | U | \phi_{0}^{(\mathbf{x})}\rangle. 
\end{equation}

\noindent Writing $b_{J, 0} = k_{J, 0}^2 \langle \phi_{0}^{(\mathbf{x})} | U | \phi_{0}^{(\mathbf{x})} \rangle$, we now take the average of $\langle \mathbf{x} | U T(r) | \mathbf{x} \rangle$ over $r$:
\begin{align}
    \frac{1}{Q } \sum_{r = 0}^{Q-1} \langle \mathbf{x} | U T(r) | \mathbf{x} \rangle &= \frac{1}{Q} \sum_{r = 0}^{Q-1} \sum_J b_{J, 0} e^{2 \pi i r \frac{J(J+1)}{Q}}  \\
    &= \frac{1}{Q } \sum_J b_{J, 0} \left( \sum_{r = 0}^{Q-1} e^{2 \pi i r\frac{J(J+1)}{Q}}\right) \\
    & = k_{0, 0}^2 \langle \phi_{0}^{(\mathbf{x})} | U | \phi_{0}^{(\mathbf{x})} \rangle.
\end{align}

\noindent We then divide by $k_{0, 0}^2 = (n/2 + 1)^{-1}$ to recover the inner product. Similarly, we can find $\langle \phi_{0}^{(\mathbf{x})} | U  | \phi_0^{(\mathbf{y})}\rangle$ by setting $U \rightarrow U \sigma$, where $\sigma$ is a sequence of transpositions such that $\sigma |\mathbf{x} \rangle = |\mathbf{y} \rangle$. 

If $|a_{\mathbf{y}, \mathbf{x}}^{U} - \langle \mathbf{x} | U | \mathbf{y} \rangle| < \epsilon$, then our classical estimator $\tilde{a}_{\mathbf{y}, \mathbf{x}}^{U} = \frac{(n/2 + 1)}{Q} \sum_{r = 0}^{Q-1} a_{\mathbf{y}, \mathbf{x}}^{U \sigma T(r)}$ satisfies:

\begin{align}
    |\tilde{a}_{\mathbf{y}, \mathbf{x}}^{U} &- \langle \phi_{0}^{(\mathbf{x})} | U  | \phi_0^{(\mathbf{y})}\rangle| \\
    &= \frac{n/2 + 1}{Q} \left| \sum_{r = 0}^{Q-1} a_{\mathbf{y}, \mathbf{x}}^{U \sigma T(r)} - \sum_{r = 0}^{Q-1} \langle \mathbf{x} | U \sigma T(r) | \mathbf{y} \rangle \right| \\
    &\leq   \frac{n/2 + 1}{Q} \sum_{r = 0}^{Q-1} |a_{\mathbf{y}, \mathbf{x}}^{U \sigma T(r)} - \langle \mathbf{x} | U \sigma T(r) | \mathbf{y} \rangle| \\
        &\leq (n/2 + 1) \epsilon.
\end{align}

\noindent Choosing $\epsilon = \delta / (n/2 + 1)$, we can then efficiently calculate $\langle \phi_{0}^{(\mathbf{x})} | U  | \phi_0^{(\mathbf{y})}\rangle$ to within additive error $\delta$.

\end{proof}

\begin{lemma} \label{lem:yamanouchi}
For any two $n$-qubit Schur states $|J_1, M_1, \mathbf{Y}_1 \rangle$, $|J_2, M_2, \mathbf{Y}_2 \rangle$ such that $J_1 = J_2$ and $M_1 = M_2$ (where $\mathbf{Y}$ is a Yamanouchi Symbol), it is possible to construct a unitary $U_{\mathbf{Y}_1 \rightarrow \mathbf{Y}_2}$, such that:
\begin{equation}
U_{\mathbf{Y}_1 \rightarrow \mathbf{Y}_2} |J_1, M_1, \mathbf{Y}_1 \rangle = e^{i \phi} |J_2, M_2, \mathbf{Y}_2 \rangle,
\end{equation}
\noindent using $\mathcal{O}(\text{poly}(n))$ exchange interactions $U(\theta)$.

\end{lemma}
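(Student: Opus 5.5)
The plan is to work entirely inside the $S_n$ irrep space $V_J$, with the multiplicity label $M$ held fixed. Every $\mathsf{XQP}$ unitary has the form $V = \bigoplus_J v_J \otimes \mathbf{1}_{2J+1}$. It therefore acts on $|J, M, \mathbf{Y}\rangle$ only through the Yamanouchi label, and it preserves $J$ and $M$. The task thus reduces to finding a unitary $v_J \in SU(\dim V_J)$ with $v_J |\mathbf{Y}_1\rangle = e^{i\phi}|\mathbf{Y}_2\rangle$ in Young's orthogonal form. We must realise it with polynomially many exchange gates; its action on the other irreps is irrelevant. Semi-universality (Theorem \ref{thm:semiuniversal}) guarantees existence but not efficiency, since $\dim V_J$ can be exponential. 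So we will build the map constructively, as a walk between Yamanouchi symbols using adjacent transpositions.

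First, recall Young's orthogonal form. For adjacent $E_{k,k+1}$ acting on $|\mathbf{Y}\rangle$, there are three cases.
\begin{enumerate}
\item If $k$ and $k+1$ lie in the same row or column of the tableau, then $E_{k,k+1}|\mathbf{Y}\rangle = \pm|\mathbf{Y}\rangle$.
\item Otherwise, $E_{k,k+1}$ acts on the two-dimensional span of $|\mathbf{Y}\rangle$ and $|\mathbf{Y}'\rangle$, where $\mathbf{Y}'$ is $\mathbf{Y}$ with $Y_k$ and $Y_{k+1}$ swapped. On this span it is the matrix
\begin{equation}
\begin{pmatrix} 1/r & \sqrt{1-1/r^2} \\ \sqrt{1-1/r^2} & -1/r \end{pmatrix},
\end{equation}
where $r$ is the axial distance between $k$ and $k+1$.
\end{enumerate}
For two-row diagrams, $r$ is at most $n$.

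Inside this $2\times 2$ block we have $e^{i\theta E_{k,k+1}} = \cos\theta\,\mathbf{1} + i\sin\theta\,\hat n\cdot\vec\sigma$, with a real axis $\hat n$ in the $x$–$z$ plane at angle $\arccos(1/r)$ from $z$. Choose $\theta = \pi/2$. This gives $iE_{k,k+1}$, which sends $|\mathbf{Y}\rangle$ to a superposition of both states, so $\theta=\pi/2$ alone is not enough. We must rotate $|\mathbf{Y}\rangle$ exactly onto $|\mathbf{Y}'\rangle$. A rotation about an axis tilted from $z$ can map the north pole to the south pole only if the axis lies on the equator. For $r \geq 2$ the axis is not equatorial, so a single exchange fails.

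We will fix this by combining $E_{k,k+1}$ with a second operator that is diagonal on this block and acts as $Z$ on it. Candidates are $E_{k-1,k}$, when its action is diagonal there, or the Jucys–Murphy-type element $T_{k} = \sum_{i<j\le k} E_{ij}$. By Lemma \ref{lem:amplitudes}, the latter is Casimir-like: on Schur states it acts diagonally with eigenvalue $j_{[k]}(j_{[k]}+1)$ up to a shift. The states $|\mathbf{Y}\rangle$ and $|\mathbf{Y}'\rangle$ differ in $j_{[k]}$, since $\mathbf{Y}$ and $\mathbf{Y}'$ differ exactly at position $k$. So $e^{i\alpha T_k}$ is a relative $Z$-rotation on the block. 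The $j_{[k]}$ values differ by one, so $\alpha$ is easily chosen. The operator $e^{i\alpha T_k}$ can be synthesised from $\mathcal{O}(k^2)$ exchange gates via Lie–Trotter, as in Lemma \ref{lem:amplitudes}. Better, we can use an exact product: the $E_{ij}$ for $i<j\le k$ do not commute, but $T_k$ is central in $\mathbb{C}S_k$, so we can instead use the exact Jucys–Murphy factorisation of the eigenvalues. Euler-angle decomposition $R_z R_{\hat n} R_z$ with the non-parallel axes $z$ and $\hat n$ then gives any $SU(2)$ element on the block, in particular the swap $|\mathbf{Y}\rangle\mapsto e^{i\phi}|\mathbf{Y}'\rangle$.

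These operators also act on other basis vectors of $V_J$. That is harmless: we only track the image of a single vector, and each step maps a basis vector to a basis vector up to phase.

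Finally, connect $\mathbf{Y}_1$ to $\mathbf{Y}_2$ by a sequence of swaps of adjacent entries $(Y_k, Y_{k+1}) = (0,1)\leftrightarrow(1,0)$ that keep the sequence Yamanouchi, i.e.\ a lattice-path condition. Any two Yamanouchi words of the same shape are connected by $\mathcal{O}(n^2)$ such moves, as in a bubble sort on ballot sequences. This gives $\mathcal{O}(n^2)$ steps, each using $\mathrm{poly}(n)$ gates. The main obstacle is the Euler step: the $z$-rotation via $T_k$ must be implemented exactly or within inverse-polynomial error using only exchange gates. I would try exact Lie–Trotter-free constructions first. Failing that, I would accept a $1/\mathrm{poly}$ approximation, which suffices for the additive-error application in Theorem \ref{thm:amplitudes}.
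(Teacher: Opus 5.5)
Your architecture matches the paper's. You use the same $\mathcal{O}(n^2)$ walk through ballot-preserving adjacent swaps. On each block $\mathrm{span}\{|\mathbf{Y}\rangle,|\mathbf{Y}'\rangle\}$ you use the same idea of adding a diagonal Jucys--Murphy/Casimir operator to $E_{k,k+1}$; on the block, $T_k$ is a constant plus $X_k$. However, the Euler step that is supposed to produce the swap fails as written. Since $e^{i\alpha T_k}$ is diagonal on the block, the right-hand $z$-rotation only multiplies $|\mathbf{Y}\rangle$ by a phase. The left-hand one cannot change the modulus of the $|\mathbf{Y}'\rangle$ component. So for every choice of angles
\[
\bigl|\langle \mathbf{Y}'|\,e^{i\alpha T_k}\,e^{i\theta E_{k,k+1}}\,e^{i\gamma T_k}\,|\mathbf{Y}\rangle\bigr| = |\sin\theta|\sqrt{1-1/r^2} < 1 ,
\]
which is exactly the obstruction you identified for a single exchange. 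A $z\hat n z$ Euler decomposition covers $SU(2)$ only when $\hat n\perp z$. With a tilted axis, three factors whose outer ones are about $z$ cannot move $|\mathbf{Y}\rangle$ (the $+z$ pole of the block's Bloch sphere) past polar angle $2\arccos(1/|r|)<\pi$.

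The repair is to put the tilted rotations on the outside: $e^{i\theta_1 E_{k,k+1}}\,e^{i\alpha T_k}\,e^{i\theta_2 E_{k,k+1}}$. Let $\varphi=\arccos(1/|r|)\in[\pi/3,\pi/2)$. Rotations about $\hat n$ carry $+z$ to points whose polar angles fill $[0,2\varphi]$, and carry $-z$ to points whose polar angles fill $[\pi-2\varphi,\pi]$. These intervals overlap because $\varphi\ge\pi/4$, and the middle $z$-rotation matches azimuths, so $|\mathbf{Y}\rangle\mapsto e^{i\phi}|\mathbf{Y}'\rangle$ is achievable.

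The paper avoids the composition question by building a single Hamiltonian $H_{i,i+1}=\frac{|\Delta|}{\sqrt{\Delta^2-1}}\bigl(E_{i,i+1}+\Delta^{-2}(X_{i+1}-X_i)\bigr)$. Its restriction to the block is exactly $\sigma_x$, so one pulse $e^{i\pi H_{i,i+1}/2}$ performs the swap. The cost is Trotterising the whole non-commuting sum. Your repaired sequence instead keeps the $E_{k,k+1}$ pulses exact and Trotterises only the diagonal factor.

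Note also that centrality of $T_k$ in $\mathbb{C}S_k$ does not give an exact gate decomposition of $e^{i\alpha T_k}$. Each $e^{i\alpha X_m}$ is still an exponential of non-commuting transpositions. So Lie--Trotter with inverse-polynomial error is genuinely needed here, which is the same standard the paper adopts.
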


\begin{proof}
Recall that a Yamanouchi symbol $\mathbf{Y} = (Y_1, Y_2, \ldots, Y_n)$ is a sequence of integers $Y_i \in \{ 0, 1\}$ such that for all $i$, the number of $0$'s in the first $i$ entries is greater than or equal to the number of $1$'s. In terms of spin coupling sequences, $j_{[i]} = i/2 - \sum_{l=1}^i Y_l$, so $J_1 = J_2$ implies that $|\mathbf{Y}_1| = |\mathbf{Y}_2|$. We proceed by first finding a sequence of transpositions $(i, i+1)$ permuting the entries of $\mathbf{Y}_1$ to $\mathbf{Y}_2$, such that all intermediate Yamanouchi symbols are valid. This forms a sequence $\mathbf{Y}_{1, k} = \prod_{l=1}^k (i_l, i_l+1) \mathbf{Y}_1$, where $\mathbf{Y}_{1, N} = \mathbf{Y}_2$ for some $N$. We will now show that $N \leq \mathcal{O}(n^2)$. 

To do this, note that if $\mathbf{Y}$ is a valid Yamanouchi symbol, then any transposition $(i, i+1)$ with $Y_i = 1$, $Y_{i+1} = 0$ will produce another valid Yamanouchi symbol. Hence, we can always `shift' a $0$ to the left as: 
\begin{equation}
(Y_1 \cdots 10 \cdots Y_n) \xrightarrow{(i, i+1)} (Y_1 \cdots 01 \cdots Y_n).
\end{equation}

\noindent We will assume $n$ is even and consider the worst-case scenario to upper bound $N$. In general, we can always transform $\mathbf{Y}_1$ into $\mathbf{Y}_2$ by first shifting all $0$'s in $\mathbf{Y}_1$ to the left to produce $\tilde{\mathbf{Y}} = (0^{n/2} 1^{n/2})$, and then shift the $0$'s rightward to their correct positions to produce $\mathbf{Y}_2$. As the number of $0$'s in the first $i$ entries of $\mathbf{Y}$ cannot exceed the number of $1$'s, the first step will require the most individual transpositions $N_{\mathrm{max}}$ if $\mathbf{Y}_1 = (0101 \cdots 01)$. This gives $N_{\mathrm{max}} = \sum_{i=1}^{n/2} (i-1) = n(n-2)/8$. Similarly, the second step of shifting the $0$'s to the right will require the most individual transpositions if $\mathbf{Y}_2 = (0101 \cdots 01)$, which again gives $N_{\mathrm{max}}$. Hence, the total number of transpositions required is upper bound as:

\begin{equation}
N \leq 2 N_{\mathrm{max}} = \frac{1}{4} n(n-2) \leq \mathcal{O}(n^2).
\end{equation}

\noindent For every transposition $(i, i+1)$ in the sequence, we can implement the corresponding unitary $\sigma_{i, i+1}$, such that $\sigma_{i, i+1} |\mathbf{Y} \rangle = e^{i \phi_{i, i+1}} |(i, i+1) \mathbf{Y} \rangle$. This unitary just needs to act as a $X$ gate on the two-dimensional subspace spanned by $|\mathbf{Y} \rangle$ and $|(i, i+1) \mathbf{Y} \rangle$ (where we have omitted the $J, M$ indices). 

Now, consider the standard Young tableaux $\mu(\mathbf{Y})$ and $\mu((i, i+1) \mathbf{Y})$ corresponding to the two Yamanouchi symbols. These tableaux differ only in the positions of the numbers $i$ and $i+1$, which appear in distinct rows and columns. Define the content $\rho_i(\mathbf{Y}) = x_i - y_i$ where $(x_i, y_i)$ are the coordinates of the box in which $i$ appears in $\mu(\mathbf{Y})$. Now define the axial distance $\Delta_{i, i+1}(\mathbf{Y}) = \rho_i(\mathbf{Y}) - \rho_{i+1}(\mathbf{Y})$. As the integers $i$ and $i+1$ swap positions in $\mu$ under $(i, i+1)$, we have $\Delta_{i, i+1}((i + 1)\mathbf{Y}) = -\Delta_{i, i+1}(\mathbf{Y})$. In Young's orthogonal form, the qubit transposition $E_{i, i+1}$ then has the following matrix on the $\{ |\mathbf{Y} \rangle, |(i, i+1) \mathbf{Y} \rangle \}$ subspace:

\begin{equation}
E_{i, i+1} = \begin{pmatrix} \frac{1}{ \Delta } & \frac{\sqrt{\Delta^2 - 1}}{| \Delta |} \\ \frac{\sqrt{\Delta^2 - 1}}{| \Delta |} & - \frac{1}{ \Delta  } \end{pmatrix},
\end{equation}

\noindent where we have written $\Delta = \Delta_{i, i+1}(\mathbf{Y})$. Now also define the Young--Jucys--Murphy (YJM) elements \citep{Grinberg_2025, Vershik_2005} $X_i = \sum_{j < i} E_{ji}$, which are diagonal in Young's orthogonal form and satisfy $X_i |\mathbf{Y} \rangle = \rho_i(\mathbf{Y}) |\mathbf{Y} \rangle$. On the $\{ |\mathbf{Y} \rangle, |(i, i+1) \mathbf{Y} \rangle \}$ subspace, we then have:

\begin{align}
    X_i &= \begin{pmatrix} \rho_i(\mathbf{Y}) & 0 \\ 0 & \rho_i((i, i+1) \mathbf{Y}) \end{pmatrix}, \\ X_{i+1} &= \begin{pmatrix} \rho_{i+1}(\mathbf{Y}) & 0 \\ 0 & \rho_{i+1}((i, i+1) \mathbf{Y}) \end{pmatrix}.
\end{align}

\noindent Taking their difference, we get:

\begin{equation}
X_{i+1} - X_i = \begin{pmatrix} -\Delta & 0 \\ 0 & \Delta \end{pmatrix}.
\end{equation}

\noindent Hence, we can generate $\sigma_{i, i+1}$ with the Hamiltonian:
\begin{equation}
    H_{i, i+1} = \frac{|\Delta|}{\sqrt{\Delta^2 - 1}} \left( E_{i, i+1} + \frac{1}{\Delta^2} (X_{i+1} - X_i) \right).
\end{equation}

\noindent And overall, $\sigma_{i, i+1} = e^{\frac{i \pi}{2} H_{i, i+1}}$. Each $H_{i, i+1}$ is a simple linear combination of transpositions, so given access to exchange interactions $U(\theta)_{ij}$, each $\sigma_{i, i+1}$ in the sequence can be implemented to inverse-polynomial precision in $\mathcal{O}(\text{poly}(n))$ time using the Lie--Trotter formula. Thus, the overall unitary $U_{\mathbf{Y}_1 \rightarrow \mathbf{Y}_2} = \prod_{k=1}^N \sigma_{i_k, i_k+1}$ can be implemented with $\mathcal{O}(\text{poly}(n))$ exchange interactions.
\end{proof}

We now combine Lemmas \ref{lem:decomposition}, \ref{lem:amplitudes} and \ref{lem:yamanouchi} to show that the ability to efficiently calculate $\mathsf{XQP}$ circuit amplitudes is sufficient to efficiently calculate the amplitudes of any quantum circuit:

\begin{theorem} \label{thm:amplitudes}
    Suppose we can classically efficiently calculate quantities $a^{U}$, such that $|a^{U} - \langle 0^{2n} 1^{2n} | U | 0^{2n} 1^{2n} \rangle| \leq \epsilon$ for an $\mathsf{XQP}$ circuit $U$ on $4n$ qubits. Then, we can classically efficiently calculate quantities $\tilde{a}^{U'}$, such that $|\tilde{a}^{U'} - \langle 0^n | U' | 0^n \rangle| \leq \epsilon \cdot \text{poly}(n)$, where $\langle 0^n | U' | 0^n \rangle$ is the amplitude of any $\mathsf{BQP}$ circuit $U'$ on $n$ qubits.
\end{theorem}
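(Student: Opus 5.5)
The plan is to route the $\mathsf{BQP}$ amplitude through a $\mathsf{DFS}_4$ encoding. First, encode each of the $n$ logical qubits of $U'$ into four physical qubits using the $\mathsf{DFS}_4$ code of Definition \ref{def:dfs}. Then $|0_L\rangle^{\otimes n} = (|S\rangle|S\rangle)^{\otimes n}$ is a Schur state on $4n$ qubits lying entirely in the $J=0$ irrep, with $M=0$ and a fixed Yamanouchi symbol $\mathbf{Y}_0 = (0101\cdots01)$. By Definition \ref{def:dfs} and the encoded universality of \citep{Kempe_2001} (or semi-universality, Theorem \ref{thm:semiuniversal}), $U'$ can be compiled into an exchange-only circuit $U''$ on $4n$ qubits with $\text{poly}(n)$ gates. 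This circuit satisfies $\langle 0_L^n | U'' | 0_L^n\rangle = \langle 0^n|U'|0^n\rangle$ up to a known global phase and an inverse-polynomial Trotter error. Since $\mathsf{DFS}_4$ lives inside $J=0$, the relative phases between irreps that troubled Theorem \ref{thm:xqpsubset} play no role.

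Next, relate $|0_L^n\rangle$ to the computational basis input $|\mathbf{x}\rangle = |0^{2n}1^{2n}\rangle$, which has Hamming weight $2n = (4n)/2$ and hence $M=0$. By Lemma \ref{lem:decomposition} and its proof, $\Pi_0|\mathbf{x}\rangle = k_{0,0}|\phi_0^{(\mathbf{x})}\rangle$ is a single Schur state, $|\mathbf{J}=\mathbf{M}, 0\rangle$, whose Yamanouchi symbol is $\mathbf{Y}_{\mathbf{x}} = (0^{2n}1^{2n})$. Here $k_{0,0}^2 = (2n+1)^{-1}$. Lemma \ref{lem:yamanouchi} then provides a $\text{poly}(n)$-size exchange circuit $V = U_{\mathbf{Y}_{\mathbf{x}}\to\mathbf{Y}_0}$ with $V|\phi_0^{(\mathbf{x})}\rangle = e^{i\phi}|0_L^n\rangle$. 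The phase $\phi$ is determined by the construction, since each $\sigma_{i,i+1}$ contributes a known phase, so we can track it. Consequently
\begin{equation}
\langle 0^n|U'|0^n\rangle \approx e^{i\phi'} \langle \phi_0^{(\mathbf{x})}| V^\dagger U'' V |\phi_0^{(\mathbf{x})}\rangle ,
\end{equation}
and $W = V^\dagger U'' V$ is itself an $\mathsf{XQP}$ circuit.

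Finally, apply the projection trick from Lemma \ref{lem:amplitudes} with $\mathbf{x}=\mathbf{y}$, using $n \to 4n$ qubits. Averaging $\langle \mathbf{x}|W T(r)|\mathbf{x}\rangle$ over $r \in \{0,\dots,Q-1\}$, with $Q$ a prime exceeding $2n(2n+1)$, isolates the $J=0$ component, and rescaling by $2n+1$ recovers $\langle \phi_0^{(\mathbf{x})}|W|\phi_0^{(\mathbf{x})}\rangle$. Each term is an amplitude $\langle 0^{2n}1^{2n}|WT(r)|0^{2n}1^{2n}\rangle$ of an $\mathsf{XQP}$ circuit, which the hypothesised oracle supplies to within $\epsilon$. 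The resulting total error is $(2n+1)\epsilon$, plus the Trotter errors from synthesising $T(r)$, the $\sigma_{i,i+1}$, and $U''$; these can be made inverse-polynomially small at polynomial cost. Altogether this gives $\epsilon\cdot\text{poly}(n)$.

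The main obstacle is the middle step: checking that $\Pi_0|0^{2n}1^{2n}\rangle$ really is a single Gelfand--Tsetlin state whose Yamanouchi symbol can be connected by valid adjacent transpositions to that of $|S\rangle^{\otimes 2n}$, and that the resulting phase is known. Note that $|S\rangle^{\otimes 2n}$ corresponds to the coupling path $(0101\cdots)$ only after a fixed qubit ordering and sign conventions of the Clebsch--Gordan coefficients. I would handle this by working explicitly with the coupling sequences $\mathbf{J}$: for $|0^{2n}1^{2n}\rangle$ the path is forced to be $\mathbf{J}=\mathbf{M}$, while for the singlet product it alternates $\tfrac12,0,\tfrac12,0,\ldots$. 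I would then invoke Lemma \ref{lem:yamanouchi} directly on these two symbols, and control the unknown Trotter-induced phases by the same error analysis.
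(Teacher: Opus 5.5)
Your proposal is correct and follows the paper's proof essentially step for step. Lemma \ref{lem:decomposition} isolates $\Pi_0|0^{2n}1^{2n}\rangle$ as the single Schur state with Yamanouchi symbol $(0^{2n}1^{2n})$; Lemma \ref{lem:yamanouchi} rotates it to $|S\rangle^{\otimes 2n}=|0_L\rangle^{\otimes n}$ of $\mathsf{DFS}_4$; and the $T(r)$-averaging of Lemma \ref{lem:amplitudes}, with prime $Q>2n(2n+1)$ and rescaling by $k_{0,0}^{-2}=2n+1$, extracts the encoded amplitude.

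Your extra care with phases and Trotter errors goes beyond what the paper writes. Two small refinements:
\begin{itemize}
\item The phase $e^{i\phi}$ of $V$ cancels in $\langle\phi_0|V^\dagger U''V|\phi_0\rangle$, so it need not be tracked.
\item On $4n$ qubits, $T(r)$ acts on irrep $J$ as $e^{2\pi i rJ(J+1)/Q}$ times an extra $r$-dependent scalar $e^{2\pi i r(4n^2-4n)/Q}$. This scalar must be divided out of each oracle value before averaging; otherwise the average kills the $J=0$ component for $n\geq 2$. Both you and the paper leave this implicit.
\end{itemize}
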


\begin{proof} From Lemma \ref{lem:decomposition}, it follows that for the $4n$-qubit state $|0^{2n} 1^{2n} \rangle$, $\Pi_0 |0^{2n} 1^{2n} \rangle = k_{0, 0} | \phi_0 \rangle$, where $|\phi_0 \rangle = |\mathbf{J} = (1/2, \cdots, n, \cdots, 0), M = 0 \rangle$ is a single Schur state. The corresponding Yamanouchi symbol for this state is $|\phi_0 \rangle = |\mathbf{Y} = (0^{2n} 1^{2n}) \rangle$. From Lemma \ref{lem:yamanouchi}, we can construct a sequence $U_{\mathbf{Y} \rightarrow \bar{\mathbf{Y}}}$, such that:

\begin{align}
U_{\mathbf{Y} \rightarrow \bar{\mathbf{Y}}} |\phi_0 \rangle &= U_{\mathbf{Y} \rightarrow \bar{\mathbf{Y}}} |\mathbf{Y} = (0^{2n} 1^{2n}) \rangle \\
&= e^{i \phi} |\bar{\mathbf{Y}} = (0101 \cdots 01) \rangle \\
&= e^{i \phi} |S \rangle^{\otimes 2n},
\end{align}

\noindent where $|S \rangle = (|01 \rangle - |10 \rangle)/\sqrt{2}$ is the singlet state. Hence, we can map $|\phi_0 \rangle$ to the $\mathsf{DFS}_4$ state $|0^n_L \rangle = |S \rangle^{\otimes 2n}$, and apply a sequence of exchange interactions $U_{\mathrm{DFS}}$ to implement any $\mathsf{BQP}$ circuit $U'$ on the logical qubits. We therefore have:

\begin{equation}
    \langle \phi_0 | U_{\mathbf{Y} \rightarrow \bar{\mathbf{Y}}}^\dagger U_{\mathrm{DFS}} U_{\mathbf{Y} \rightarrow \bar{\mathbf{Y}}} | \phi_0 \rangle = \langle 0^n | U' | 0^n \rangle.
\end{equation}

\noindent The corresponding $\mathsf{XQP}$ amplitude is then given by:

\begin{equation}
    \langle 0^{2n} 1^{2n} | U_{\mathbf{Y} \rightarrow \bar{\mathbf{Y}}}^\dagger U_{\mathrm{DFS}} U_{\mathbf{Y} \rightarrow \bar{\mathbf{Y}}} | 0^{2n} 1^{2n} \rangle.
\end{equation}

\noindent Using Lemma \ref{lem:amplitudes}, we can extract the desired $\mathsf{BQP}$ amplitude $\langle \phi_0 | U_{\mathbf{Y} \rightarrow \bar{\mathbf{Y}}}^\dagger U_{\mathrm{DFS}} U_{\mathbf{Y} \rightarrow \bar{\mathbf{Y}}} | \phi_0 \rangle$ out to within additive error, by setting:

\begin{equation}
U = T(r) U_{\mathbf{Y} \rightarrow \bar{\mathbf{Y}}}^\dagger U_{\mathrm{DFS}} U_{\mathbf{Y} \rightarrow \bar{\mathbf{Y}}}, \quad r \in \{ 0, 1, \ldots, Q-1\},
\end{equation} 

\noindent and taking the average over $r$ (where $Q > 2n(2n+1)$ is prime). Hence, we can efficiently calculate $\langle 0^n | U' | 0^n \rangle$ to within additive error $\epsilon \cdot \text{poly}(n)$. By Lemmas \ref{lem:amplitudes} and \ref{lem:yamanouchi}, the additional overhead to implement $U_{\mathbf{Y} \rightarrow \bar{\mathbf{Y}}}$ and $T(r)$ using exchange gates is $\mathcal{O}(\text{poly}(n))$.

\end{proof}

\section{XQP Circuits as Six-Vertex and Potts Models} \label{sec:potts}

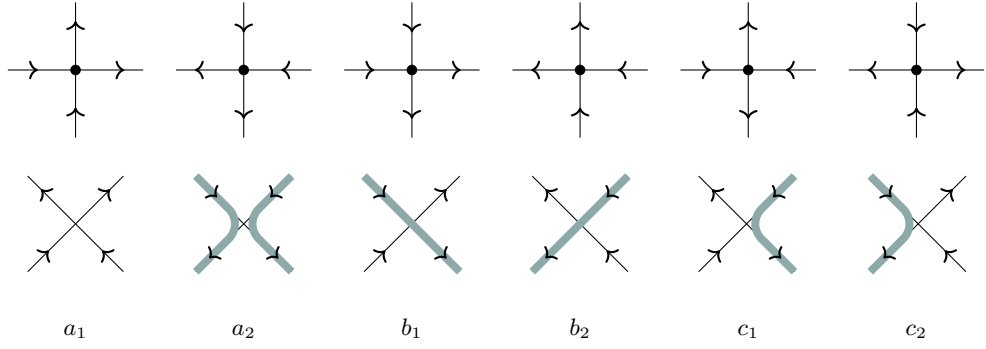
\begin{figure*}
\begin{equation}
\begin{array}{ccccccccccc}
\vtx{out}{in}{in}{out} &  & \vtx{in}{out}{out}{in} & & \vtx{in}{out}{in}{out} && \vtx{out}{in}{out}{in}&& \vtx{out}{out}{in}{in} && \vtx{in}{in}{out}{out} \\
\\ \begin{tikzpicture}[scale=1.0,baseline={(0,0)}, rotate=45]
  \draw (-0.9,0)--(0.9,0);
  \draw (0,-0.9)--(0,0.9);
  \path[tips, ->, line width=0.8pt] (0,0)--(0,0.65);
  \path[tips, ->, line width=0.8pt]  (0,-0.9)--(0,-0.5);
  \path[tips, ->, line width=0.8pt]  (-0.9,0)--(-0.5,0);
  \path[tips, ->, line width=0.8pt]  (0,0)--(0.65,0);
\end{tikzpicture}&& 
\begin{tikzpicture}[scale=1.0,baseline={(0,0)}, rotate=45]
  \draw (-0.9,0)--(0.9,0);
  \draw (0,-0.9)--(0,0.9);
  \draw[camblue, line width=3pt, rounded corners=8pt] (-0.9,0) -- (0,0) -- (0,0.9);
  \draw[camblue, line width=3pt, rounded corners=8pt] (0.9,0) -- (0,0) -- (0,-0.9);
  \path[tips, ->, line width=0.8pt]  (0,0.9)--(0,0.5);
  \path[tips, ->, line width=0.8pt]  (0,0)--(0,-0.65);
  \path[tips, ->, line width=0.8pt]  (0,0)--(-0.65,0);
  \path[tips, ->, line width=0.8pt]  (0.9,0)--(0.5,0);
\end{tikzpicture} && 
\begin{tikzpicture}[scale=1.0,baseline={(0,0)}, rotate=45]
  \draw (-0.9,0)--(0.9,0);
  \draw (0,-0.9)--(0,0.9);
  \draw[camblue, line width=3pt, rounded corners=8pt] (0,-0.9) -- (0,0) -- (0,0.9);
  \path[tips, ->, line width=0.8pt]  (0,0.9)--(0,0.5);
  \path[tips, ->, line width=0.8pt]  (0,0)--(0,-0.65);
  \path[tips, ->, line width=0.8pt]  (-0.9,0)--(-0.5,0);
  \path[tips, ->, line width=0.8pt]  (0,0)--(0.65,0);
\end{tikzpicture}&&  
\begin{tikzpicture}[scale=1.0,baseline={(0,0)}, rotate=45]
  \draw (-0.9,0)--(0.9,0);
  \draw (0,-0.9)--(0,0.9);
  \draw[camblue, line width=3pt, rounded corners=8pt] (-0.9,0) -- (0,0) -- (0.9,0);
  \path[tips, ->, line width=0.8pt] (0,0)--(0,0.65);
  \path[tips, ->, line width=0.8pt]  (0,-0.9)--(0,-0.5);
  \path[tips, ->, line width=0.8pt]  (0,0)--(-0.65,0);
  \path[tips, ->, line width=0.8pt]  (0.9,0)--(0.5,0);
\end{tikzpicture}&& 
\begin{tikzpicture}[scale=1.0,baseline={(0,0)}, rotate=45]
  \draw (-0.9,0)--(0.9,0);
  \draw (0,-0.9)--(0,0.9);
  \draw[camblue, line width=3pt, rounded corners=8pt] (0.9,0) -- (0,0) -- (0,-0.9);
  \path[tips, ->, line width=0.8pt] (0,0)--(0,0.65);
  \path[tips, ->, line width=0.8pt]  (0,0)--(0,-0.65);
  \path[tips, ->, line width=0.8pt]  (-0.9,0)--(-0.5,0);
  \path[tips, ->, line width=0.8pt]  (0.9,0)--(0.5,0);
\end{tikzpicture} && 
\begin{tikzpicture}[scale=1.0,baseline={(0,0)}, rotate=45]
  \draw (-0.9,0)--(0.9,0);
  \draw (0,-0.9)--(0,0.9);
  \draw[camblue, line width=3pt, rounded corners=8pt] (-0.9,0) -- (0,0) -- (0,0.9);
  \path[tips, ->, line width=0.8pt]  (0,0.9)--(0,0.5);
  \path[tips, ->, line width=0.8pt]  (0,-0.9)--(0,-0.5);
  \path[tips, ->, line width=0.8pt]  (0,0)--(-0.65,0);
  \path[tips, ->, line width=0.8pt]  (0,0)--(0.65,0);
\end{tikzpicture} \\
\\
a_1 &  & a_2 && b_1 && b_2 && c_1 && c_2 \\
\end{array}
\end{equation}
\caption{\textit{Six-vertex Model Weights.} Each vertex contains the same number of incoming and outgoing arrows (the ice rule). We can treat the six possible vertex arrangements in the top row as the trajectories of $|0 \rangle$ (thin lines corresponding to upwards arrows) and $|1 \rangle$ (thick lines corresponding to downwards arrows) by rotating $45^{\circ}$ in the anti-clockwise direction. For the $U(\theta) = \cos \theta \cdot \mathbf{1} + i \sin \theta \cdot E_{ij}$ exchange gates, the corresponding weights are given by $a_1 = a_2 = e^{i \theta}$, $b_1 = b_2 = i \sin \theta$, and $c_1 = c_2 = \cos \theta$.}
\label{fig:6V}
\end{figure*}

An alternative way of understanding $\mathsf{XQP}$ circuits is through the lens of the six-vertex (6V) model in statistical physics. In fact, this connection has already been established in \citep{Van_den_Nest_2009,De_las_Cuevas_2011} in the context of $\mathsf{DFS}_k$ circuits. In our case, viewing $\mathsf{XQP}$ circuit amplitudes as partition functions of a complex six-vertex model with external boundary conditions will help identify instances of $\mathsf{XQP}$ circuits that are classically efficiently simulable. 

An important feature of six-vertex models is their ability to encode the partition function of the Potts model. Leveraging this property, we will show that given access to the isotropic Heisenberg exchange interaction, it is possible to encode the partition functions of a critical $q=4$ Potts model at imaginary temperature into measurement amplitudes. However, the encoding process requires a modification of the underlying six-vertex model to include additional `external' vertex weights, we show to correspond to $S$ gates on an quantum circuit level -- meaning that simulation of the Potts models falls outside the scope of $\mathsf{XQP}$. Therefore, the boundary between the sub-universal $\mathsf{XQP}$ model and the universal $\mathsf{XQP} + S$ model coincides with the boundary between six-vertex models which can, and cannot simulate the Potts model. It is also worth noting that the situation is unlike the case of $\mathsf{IQP}$ circuits, where the partition function of the Ising model (i.e. the $q=2$ Potts model) at imaginary temperature can be encoded into probability amplitudes without any modifications \cite{Fujii_2017}.

\begin{definition}[Six-Vertex Model]
    A six-vertex model on a finite graph $\mathcal{L}$ consists of:
    \begin{itemize}
        \item A set $V_{\mathrm{int}}(\mathcal L)$ of internal vertices, each of degree 4, and boundary edges incident to exactly one internal vertex,
        \item For each internal vertex $j\in V_{\mathrm{int}}(\mathcal L)$, six local weights $a_1^j,a_2^j,b_1^j,b_2^j,c_1^j,c_2^j$, corresponding to the six allowed local arrow configurations satisfying the ice rule (Figure \ref{fig:6V}),
        \item An arrow configuration $\gamma$ on the edges of $\mathcal L$ such that at each internal vertex there are exactly two incoming and two outgoing arrows.
    \end{itemize}

\noindent The weight of a configuration $\gamma$ is $w(\gamma)=\prod_{j\in V_{\mathrm{int}}(\mathcal L)} w_j(\gamma)$, where $w_j(\gamma)\in\{a_1^j,a_2^j,b_1^j,b_2^j,c_1^j,c_2^j\}$ is the weight associated to the local arrow configuration at vertex $j$. The partition function is given by:

    \begin{equation}
        Z_{6V}=\sum_{\gamma} w(\gamma) = \sum_{\gamma} \prod_{j\in V_{\mathrm{int}}(\mathcal L)} w_j(\gamma),
    \end{equation}

\noindent where the sum runs over all arrow configurations satisfying the ice rule, and (if boundary conditions are imposed) with fixed arrow orientations on the boundary edges.
\end{definition}

\begin{remark}
Every $\mathsf{XQP}$ circuit amplitude $\langle \mathbf{y} |U_{XQP} | \mathbf{x} \rangle$ can be represented as the partition function of a six-vertex model with complex vertex weights and boundary conditions imposed by the input and output bitstrings. 
\end{remark}

To explain the above Remark, we note that every $\mathsf{XQP}$ circuit on $n$ qubits and $N$ gates can be represented as a degree 4 graph $\mathcal{L}$ with $N$ internal vertices and $2n$ boundary edges. Each possible `path' taken from the input to output strings forms a configuration, where the arrows can be drawn to point in the direction of time for $|0\rangle$, and in the opposite direction for $|1\rangle$. The Hamming weight preservation of $U(\theta)$ gates then ensures that the ice rule is satisfied at each vertex. This construction is implicit in Figures \ref{fig:6V}, \ref{fig:DWBC}. For example, take the following $\mathsf{XQP}$ circuit:

\begin{center}
\begin{tikzpicture}
\scalebox{1}{
\node (circ) {
    \rotatebox{90}{%
      \begin{quantikz}[row sep=0.35cm, column sep=0.45cm]
        \lstick{\rotatebox{-90}{$|1\rangle$}} & & & \gate[2]{} & & & \rstick{\rotatebox{-90}{$\langle 1|$}} \\
        \lstick{\rotatebox{-90}{$|1\rangle$}} & & \gate[2]{} & & \gate[2]{} & & \rstick{\rotatebox{-90}{$\langle 0|$}} \\
        \lstick{\rotatebox{-90}{$|0\rangle$}} & \gate[2]{} & & \gate[2]{} & & \gate[2]{} & \rstick{\rotatebox{-90}{$\langle 0|$}} \\
        \lstick{\rotatebox{-90}{$|1\rangle$}} & & \gate[2]{} & & \gate[2]{} & & \rstick{\rotatebox{-90}{$\langle 1|$}} \\
        \lstick{\rotatebox{-90}{$|0\rangle$}} & & & \gate[2]{} & & & \rstick{\rotatebox{-90}{$\langle 0|$}} \\
        \lstick{\rotatebox{-90}{$|0\rangle$}} & & & & & & \rstick{\rotatebox{-90}{$\langle 1|$}} \\
      \end{quantikz}
    }%
  };

  \draw[->, thick] ($(circ.east)+(0.6,-1)$) -- ($(circ.east)+(0.6,1.0)$)
    node[midway, right] {$t$};
}
\end{tikzpicture}
\end{center}

\noindent Its measurement amplitudes are given by:

\begin{align}
\langle 100101 | U_{XQP} | 110100 \rangle &= \sum_{\mathrm{arrows}} \  \begin{tikzpicture}[scale = 0.4, baseline={(0,0)}, rotate=45]
    \draw[dashed] (-1,1)--(-1,-1);
    \draw[dashed] (0,1)--(0,-1);
    \draw[dashed] (1,1)--(1,-1);
    \draw[dashed] (-1,1)--(1,1);
    \draw[dashed] (-1,0)--(1,0);
    \draw[dashed] (-1,-1)--(1,-1);
    \draw (-2, -1) -- (-1, -1) -- (-1, -2);
    \draw (2, -1) -- (1, -1) -- (1, -2);
    \draw (2, 1) -- (1, 1) -- (1, 2);
    \draw (-2, 1) -- (-1, 1) -- (-1, 2);
    \draw (-2, 0) -- (-1, 0);
    \draw (2, 0) -- (1, 0);
    \draw(0, 2) -- (0, 1);
    \draw(0, -2) -- (0, -1);
    \draw[camblue, line width=3pt, rounded corners=8pt] (-2, 1) -- (-1, 1) -- (-1, 2);
    \draw[camblue, line width=3pt, rounded corners=8pt] (-2, 0) -- (-1, 0);
    \draw[camblue, line width=3pt, rounded corners=8pt] (2, -1) -- (1, -1);
    \draw[camblue, line width=3pt, rounded corners=8pt] (2, 1) -- (1, 1);
    \draw[camblue, line width=3pt, rounded corners=8pt] (-1,-2) -- (-1,-1);
    \end{tikzpicture}  \\
& = Z_{6V},
\end{align}

\noindent where the dashed lines represent unassigned arrows, and the sum is taken over all arrow configurations consistent with the boundary conditions.

We can also use $\mathsf{XQP}$ circuits to compute the partition function of an associated complex six-vertex model with no boundary conditions, however these turn out to be trivial:

\begin{remark}
    With no boundary conditions, the partition function $Z_{6V}$ of the six-vertex model with weights $a(\theta) = e^{i \theta}$, $b(\theta) = i \sin \theta$, and $c(\theta) = \cos \theta$ at each vertex is computable as $Z_{6V} = 2^n e^{i \sum_{j}^{N} \theta_j}$, where $N$ is the number of vertices, and $n$ is sum of the length and width of the lattice.
\end{remark}
\begin{proof}
    This follows from the fact that $Z_{6V} = 2^{n} \langle \mathbf{0} | H^{\otimes n}  U_{XQP} H^{\otimes n} | \mathbf{0} \rangle$, where $H$ is the Hadamard gate. As $U_{XQP}$ circuits commute with any $n$-fold tensor product of single qubit gates, we have $Z_{6V} = 2^{n} \langle \mathbf{0} | U_{XQP} | \mathbf{0} \rangle = 2^n e^{i \sum_{j}^{N} \theta_j}$. 
\end{proof}

\begin{figure*}
    \subfloat{
\begin{tikzpicture}
\node (circ) {
    \scalebox{0.8}{%
    \rotatebox{90}{%
      \begin{quantikz}[row sep=0.35cm, column sep=0.55cm]
        \lstick{\rotatebox{-90}{$|0\rangle$}} & & & \gate[2]{\rotatebox{-90}{$\lambda_3,\nu_1$}} & & & \rstick{\rotatebox{-90}{$\langle 0|$}}\\
        \lstick{\rotatebox{-90}{$|0\rangle$}} & & \gate[2]{\rotatebox{-90}{$\lambda_2,\nu_1$}} & & \gate[2]{\rotatebox{-90}{$\lambda_3,\nu_2$}} & & \rstick{\rotatebox{-90}{$\langle 0|$}} \\
        \lstick{\rotatebox{-90}{$|0\rangle$}} & \gate[2]{\rotatebox{-90}{$\lambda_1,\nu_1$}} & & \gate[2]{\rotatebox{-90}{$\lambda_2,\nu_2$}} & & \gate[2]{\rotatebox{-90}{$\lambda_3,\nu_3$}} & \rstick{\rotatebox{-90}{$\langle 0|$}}\\
        \lstick{\rotatebox{-90}{$|1\rangle$}} & & \gate[2]{\rotatebox{-90}{$\lambda_1,\nu_2$}} & & \gate[2]{\rotatebox{-90}{$\lambda_2,\nu_3$}} & & \rstick{\rotatebox{-90}{$\langle 1|$}} \\
        \lstick{\rotatebox{-90}{$|1\rangle$}} & & & \gate[2]{\rotatebox{-90}{$\lambda_1,\nu_3$}} & & & \rstick{\rotatebox{-90}{$\langle 1|$}}\\
        \lstick{\rotatebox{-90}{$|1\rangle$}} & & & & & & \rstick{\rotatebox{-90}{$\langle 1|$}}
      \end{quantikz}
    }%
    }
  };

  \draw[->, thick] ($(circ.west)+(-0.6,-1)$) -- ($(circ.west)+(-0.6,1.0)$)
    node[midway, left] {$t$};
\end{tikzpicture}
} \quad \quad \quad 
    \subfloat{
\begin{tikzpicture}[scale = 1.2, baseline={(0,-3)}, rotate=45]
    \draw[dashed] (-1,1)--(-1,-1);
    \draw[dashed] (0,1)--(0,-1);
    \draw[dashed] (1,1)--(1,-1);
    \draw[dashed] (-1,1)--(1,1);
    \draw[dashed] (-1,0)--(1,0);
    \draw[dashed] (-1,-1)--(1,-1);
    \draw (-2, -1) -- (-1, -1) -- (-1, -2);
    \draw (2, -1) -- (1, -1) -- (1, -2);
    \draw (2, 1) -- (1, 1) -- (1, 2);
    \draw (-2, 1) -- (-1, 1) -- (-1, 2);
    \draw (-2, 0) -- (-1, 0);
    \draw (2, 0) -- (1, 0);
    \draw(0, 2) -- (0, 1);
    \draw(0, -2) -- (0, -1);
    \draw[camblue, line width=3pt, rounded corners=8pt] (2, -1) -- (1, -1) -- (1, -2);
    \draw[camblue, line width=3pt, rounded corners=8pt] (2, 1) -- (1, 1);
    \draw[camblue, line width=3pt, rounded corners=8pt] (2, 0) -- (1, 0);
    \draw[camblue, line width=3pt, rounded corners=8pt] (-1,-2) -- (-1,-1);
    \draw[camblue, line width=3pt, rounded corners=8pt] (0,-2) -- (0,-1);

    \node[anchor=west] at (-1,-2) {$\nu_1$};
    \node[anchor=west] at (0,-2) {$\nu_2$};
    \node[anchor=west] at (1,-2) {$\nu_3$};

    \node[anchor=west] at (-2,-1) {$\lambda_1$};
    \node[anchor=west] at (-2,0) {$\lambda_2$};
    \node[anchor=west] at (-2,1) {$\lambda_3$};

\end{tikzpicture}
}
\caption{\textit{Domain Wall Boundary Conditions (DWBC).} \textit{Left:} An $\mathsf{XQP}$ circuit of DWBC type, where each gate is an exchange interaction $U(\arctan (\lambda_i - \nu_j))$.  \textit{Right:} The corresponding six-vertex model with DWBC, where a vertex connecting lines with spectral parameters $\lambda_i$ and $\nu_j$ has weights $a(\lambda_i, \nu_j) = 1 + i(\lambda_i - \nu_j)$, $b(\lambda_i, \nu_j) = i (\lambda_i - \nu_j)$, and $c(\lambda_i, \nu_j) = 1$.}
\label{fig:DWBC}
\end{figure*}
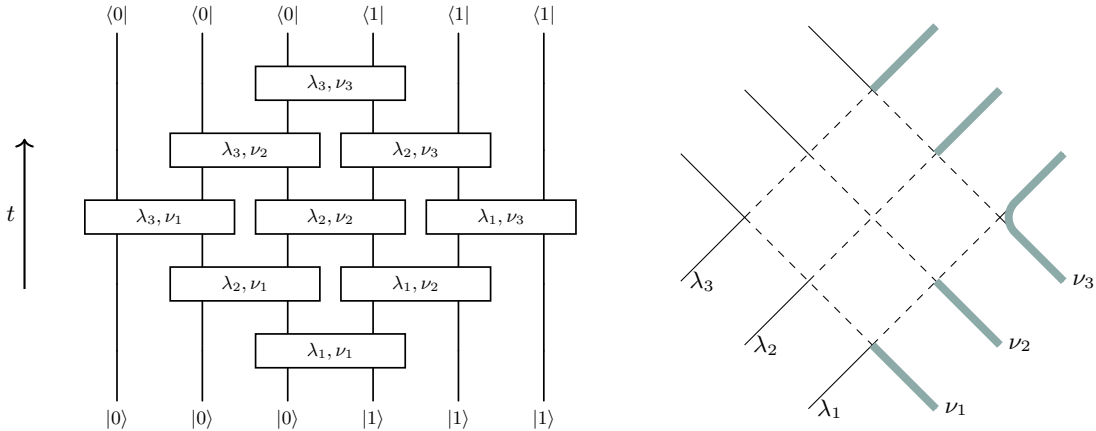

Making use of the famous Izergin--Korepin determinant formula \citep{Izergin_1992}, we can consider $\mathsf{XQP}$ circuits with domain wall boundary conditions (DWBC). These are shown in Figure \ref{fig:DWBC}, and lead to non-trivial class of $\mathsf{XQP}$ circuits whose amplitudes encode efficiently computable partition functions. We first re-write each exchange interaction in an alternative way (as introduced in \citep{Aaronson_2016}), using two spectral parameters $\lambda$ and $\nu$:
\begin{align}
    U_{ij}(\lambda, \nu) &= \frac{1}{\sqrt{1 + (\lambda - \nu)^2}} \mathbf{1} + \frac{i (\lambda - \nu)}{\sqrt{1 + (\lambda - \nu)^2}} E_{ij} \\
    & = U_{ij}(\arctan(\lambda - \nu)).
\end{align}

\noindent We remark that with this parametrisation, the exchange interaction $U_{ij}(\lambda, \nu)$ satisfies the parameter-dependent Yang--Baxter equation \citep{Aaronson_2016}:

\begin{align}
    U_{12}(\lambda_1, \lambda_2) & U_{23}(\lambda_1, \lambda_3) U_{12}(\lambda_2, \lambda_3) \\
    & = U_{23}(\lambda_2, \lambda_3) U_{12}(\lambda_1, \lambda_3) U_{23}(\lambda_1, \lambda_2).
\end{align}

\begin{remark}
Consider an $\mathsf{XQP}$ circuit $U_{\mathrm{DWBC}}$ of DWBC type, i.e. one of the form given in Figure \ref{fig:DWBC}. The measurement amplitude $\langle 0^{n/2} 1^{n/2} | U_{\mathrm{DWBC}} | 0^{n/2} 1^{n/2} \rangle$ of such a circuit can be computed exactly in $\text{poly}(n)$ time using the determinant formula of Izergin, Coker and Korepin \citep{Izergin_1992}.
\end{remark}

\begin{proof}

For an $\mathsf{XQP}$ circuit of DWBC type, we can factor out $1 / \sqrt{1 + (\lambda_i - \nu_j)^2}$ from each $U(\lambda_i, \nu_j)$ gate to obtain:

\begin{equation}
    \langle 0^{\frac{n}{2}} 1^{\frac{n}{2}} | U_\mathrm{DWBC} | 0^{\frac{n}{2}} 1^{\frac{n}{2}} \rangle = \prod_{i, j} \frac{1}{\sqrt{1 + (\lambda_i - \nu_j)^2}} Z_{\mathrm{DWBC}},
\end{equation}

\noindent where the corresponding six-vertex model weights are given by $a(\lambda_i, \nu_j) = 1 + i(\lambda_i - \nu_j)$, $b(\lambda_i, \nu_j) = i (\lambda_i - \nu_j)$, and $c(\lambda_i, \nu_j) = 1$. The determinant formula is readily applicable. We only need to equate our weights to the ones given in \citep{Izergin_1992}, in order to find the correct form of the spectral parameters $\tilde{\lambda}_i$ and $\tilde{\nu}_j$. Comparing Figure \ref{fig:6V} with Figure 6 of \citep{Izergin_1992}, we have $\tilde{a}(\tilde{\lambda}_i, \tilde{\nu}_j) = b(\lambda_i, \nu_j)$, $\tilde{b}(\tilde{\lambda}_i, \tilde{\nu}_j) = a(\lambda_i, \nu_j)$, and $\tilde{c}(\tilde{\lambda}_i, \tilde{\nu}_j) = c(\lambda_i, \nu_j)$. This gives us (c.f. Equation (6.1) of \citep{Izergin_1992}):
\begin{align}
    \tilde{a}(\tilde{\lambda}_i, \tilde{\nu}_j) & = \tilde{\lambda}_i - \tilde{\nu}_j + i \kappa/2 = i(\lambda_i - \nu_j), \\
    \tilde{b}(\tilde{\lambda}_i, \tilde{\nu}_j) & = \tilde{\lambda}_i - \tilde{\nu}_j - i \kappa/2 = 1 + i(\lambda_i - \nu_j), \\
    \tilde{c}(\tilde{\lambda}_i, \tilde{\nu}_j) & = - i \kappa = 1,
\end{align}

\noindent where $\kappa$ is an arbitrary parameter, which we set to $\kappa = i$. Solving for $\tilde{\lambda}_i$ and $\tilde{\nu}_j$ in terms of $\lambda_i$ and $\nu_j$, we obtain: 
\begin{equation}
    \tilde{\lambda}_i  = i \lambda_i + 1/4, \quad 
    \tilde{\nu}_j  = i \nu_j - 1/4.
\end{equation}

\noindent As the spectral parameters $\tilde{\lambda}_i$ and $\tilde{\nu}_j$ are arbitrary this means we can substitute directly into Equation (6.3) of \citep{Izergin_1992} to obtain a closed-form expression for $Z_{\mathrm{DWBC}}$, and thus the corresponding $\mathsf{XQP}$ amplitude. Similarly, if the corresponding six-vertex model is homogenous (i.e. $\lambda_i - \nu_j = x$ for all $i, j$), we can set $\tilde{x} = \tilde{\lambda}_i - \tilde{\nu}_j = i x + 1/2$, and apply the homogenous limit of the determinant formula.

\end{proof}

We now study the square Potts model and its connection to $\mathsf{XQP}$ circuits. An interesting property of the six-vertex model is that, given certain modifications, it is able to encode the partition function of the Potts model as $Z_{6V'} \propto Z_{\mathrm{Potts}}$ \citep{Temperley_1971, Baxter_1976,Baxter_1982}. Given the ability to encode the partition function of a complex six-vertex model into $\mathsf{XQP}$ measurement amplitudes, this raises the question of whether $\mathsf{XQP}$ circuits can also encode the partition function of the complex Potts model, which have been shown to be $\# \mathsf{P}$-hard to approximate \citep{Galanis_2021}. As other restricted models of quantum computation are able to encode quantities with $\# \mathsf{P}$ approximation hardness (for example $\mathsf{IQP}$ and the complex Ising model partition function \citep{Fujii_2017, Bremner_2016, Goldberg_2017}, or $\mathsf{BosonSampling}$ and the matrix permanent \citep{Aaronson_2010, Aaronson_2011}), this is a natural question to ask and a step towards proving hardness of simulating $\mathsf{XQP}$ circuits to additive precision, via similar techniques to \citep{Bremner_2016, Aaronson_2010, Morimae_2017}. Surprisingly, it turns out that $\mathsf{XQP}$ circuits alone are unsuccessful in this, and in our investigation we have not been able to identify any other quantity with $\# \mathsf{P}$ approximation hardness that can be encoded into $\mathsf{XQP}$ amplitudes, leaving its existence as an open question. Despite this, there is strong evidence that additive weak simulation is unlikely to be efficient, namely Theorems \ref{thm:tvxqpbqp} and \ref{thm:amplitudes}. 

We will follow the exposition of Chapter 12 of \citep{Baxter_1982}, and state the relevant results. A $q$-state Potts model on a lattice $\mathcal{L}$ of $N$ vertices (sites) consists of a spin variable $\sigma_i$ at each site, which can take values in $\sigma_i \in \{1, 2, \dots, q\}$. Any two adjacent sites (i.e. vertices connected by an edge in $\mathcal{L})$ interact with energy $-J \delta(\sigma_i, \sigma_j)$. A configuration of spins on $\mathcal{L}$, denoted $\sigma(\mathcal{L})$, has energy $E(\sigma) = - J \sum_{(i,j) \in \sigma(\mathcal{L})} \delta(\sigma_i, \sigma_j)$, where the sum is taken over all pairs of connected edges in $\mathcal{L}$. The partition function of the Potts model is then given by:

\begin{equation}
    Z_{\mathrm{Potts}} = \sum_{\sigma(\mathcal{L})} e^{- E(\sigma) / k_B T} = \sum_{\sigma(\mathcal{L})} e^{K \sum_{(i,j)} \delta(\sigma_i, \sigma_j)},
\end{equation}

\noindent where $K = J / k_B T$. This partition function can be then re-expressed in the Fortuin--Kasteleyn (FK) representation \citep{Kasteleyn_1969,Fortuin_1972}, by setting $v = e^{K} - 1$, so that:

\begin{equation}
    Z_{\mathrm{Potts}} = \sum_{\mathcal{G} \subseteq \mathcal{L}} q^{C(\mathcal{G})} v^{|\mathcal{G}|},
\end{equation}

\noindent where $\mathcal{G} \subseteq \mathcal{L}$ are subgraphs of $\mathcal{L}$ obtained by choosing a subset of edges, $|\mathcal{G}|$ is the number of edges in $\mathcal{G}$, and $C(\mathcal{G})$ is the number of connected components in $\mathcal{G}$ (where isolated vertices also count as connected components). 

Here, we are interested in the Potts model on a square lattice, as pictured by the white dots and dashed lines of Figure \ref{fig:potts}. We group the edge interactions in the FK representation into two classes: horizontal, which assign a weight $v_h = e^{K_h} - 1$, and vertical which assign a weight $v_v = e^{K_v} - 1$. We then have $|\mathcal{G}_h|$ and $|\mathcal{G}_v|$ as the number of horizontal and vertical edges in $\mathcal{G} \subseteq \mathcal{L}$, so that:

\begin{equation} \label{eq:potts}
    Z_{\mathrm{Potts}} = \sum_{\mathcal{G} \subseteq \mathcal{L}} q^{C(\mathcal{G})} v_h^{|\mathcal{G}_h|} v_v^{|\mathcal{G}_v|}. 
\end{equation}

\begin{figure}
    \subfloat{
\begin{tikzpicture}[scale = 0.65, baseline={(0,0)}]

\fill[fill=camblue, draw=black, opacity=0.7] (0,-1) -- (1,0) -- (0,1) -- (-1, 0) -- (0,-1);

\fill[fill=camblue, draw=black, opacity=0.7] (1, 0) -- (2,1) -- (3,0) -- (2,-1) -- (1,0);

\fill[fill=camblue, draw=black, opacity=0.7] (-1, 0) -- (-2,1) -- (-3,0) -- (-2,-1) -- (-1,0);

\fill[fill=camblue, draw=black, opacity=0.7] (0,1) -- (1,2) -- (0,3) -- (-1, 2) -- (0,1);

\fill[fill=camblue, draw=black, opacity=0.7] (1, 2) -- (2,3) -- (3,2) -- (2,1) -- (1,2);

\fill[fill=camblue, draw=black, opacity=0.7] (-1, 2) -- (-2,3) -- (-3,2) -- (-2,1) -- (-1,2);

\fill[fill=camblue, draw=black, opacity=0.7] (0,-3) -- (1,-2) -- (0,-1) -- (-1, -2) -- (0,-3);

\fill[fill=camblue, draw=black, opacity=0.7] (1, -2) -- (2,-1) -- (3,-2) -- (2,-3) -- (1,-2);

\fill[fill=camblue, draw=black, opacity=0.7] (-1, -2) -- (-2,-1) -- (-3,-2) -- (-2,-3) -- (-1,-2);

\draw[dashed] (0,-2)--(0,2);
\draw[dashed] (-2,-2)--(-2,2);
\draw[dashed] (2,-2)--(2,2);
\draw[dashed] (-2,2)--(2,2);
\draw[dashed] (-2,0)--(2,0);
\draw[dashed] (-2,-2)--(2,-2);

\fill (-3,2) circle (3pt);
\fill (-3,0) circle (3pt);
\fill (-3,-2) circle (3pt);

\fill(-2, 3) circle (3pt);
\fill(-2, 1) circle (3pt);
\fill(-2, -1) circle (3pt);
\fill(-2, -3) circle (3pt);

\fill (-1,2) circle (3pt);
\fill (-1,0) circle (3pt);
\fill (-1,-2) circle (3pt);

\fill(0, 3) circle (3pt);
\fill(0, 1) circle (3pt);
\fill(0, -1) circle (3pt);
\fill(0, -3) circle (3pt);

\fill (1,2) circle (3pt);
\fill (1,0) circle (3pt);
\fill (1,-2) circle (3pt);

\fill(2, 3) circle (3pt);
\fill(2, 1) circle (3pt);
\fill(2, -1) circle (3pt);
\fill(2, -3) circle (3pt);

\fill (3,2) circle (3pt);
\fill (3,0) circle (3pt);
\fill (3,-2) circle (3pt);

\fill[fill=white, draw=black] (-2, 2) circle (3pt);
\fill[fill=white, draw=black] (-2, 0) circle (3pt);
\fill[fill=white, draw=black] (-2, -2) circle (3pt);

\fill[fill=white, draw=black] (0, 2) circle (3pt);
\fill[fill=white, draw=black] (0, 0) circle (3pt);
\fill[fill=white, draw=black] (0, -2) circle (3pt);

\fill[fill=white, draw=black] (2, 2) circle (3pt);
\fill[fill=white, draw=black] (2, 0) circle (3pt);
\fill[fill=white, draw=black] (2, -2) circle (3pt);

\end{tikzpicture}
} \quad
\subfloat{

\begin{tikzpicture}[scale = 0.65, rotate = 90, baseline={(0, -0.65)}]

\fill[fill=camblue, draw=black, opacity=0.7] (0,-1) -- (1,0) -- (0,1) -- (-1, 0) -- (0,-1);

\fill[fill=camblue, draw=black, opacity=0.7] (-1, 0) -- (-2,1) -- (-3,0) -- (-2,-1) -- (-1,0);

\fill[fill=camblue, draw=black, opacity=0.7] (0,1) -- (1,2) -- (0,3) -- (-1, 2) -- (0,1);

\fill[fill=camblue, draw=black, opacity=0.7] (-1, 2) -- (-2,3) -- (-3,2) -- (-2,1) -- (-1,2);

\draw[dashed] (-2, 0)--(0,0);
\draw[dashed] (-2, 2)--(0,2);
\draw[dashed] (0, 3.5)--(0,-1.5);
\draw[dashed] (-2, 3.5)--(-2,-1.5);

\fill (-3,2) circle (3pt);
\fill (-3,0) circle (3pt);

\fill (-2, 3) circle (3pt);
\fill(-2, 1) circle (3pt);
\fill (-2, -1) circle (3pt);

\fill (-1,2) circle (3pt);
\fill (-1,0) circle (3pt);

\fill (0, 3) circle (3pt);
\fill(0, 1) circle (3pt);
\fill (0, -1) circle (3pt);

\fill (1,2) circle (3pt);
\fill (1,0) circle (3pt);

\fill[fill=white, draw=black] (-2, 2) circle (3pt);
\fill[fill=white, draw=black] (-2, 0) circle (3pt);

\fill[fill=white, draw=black] (0, 2) circle (3pt);
\fill[fill=white, draw=black] (0, 0) circle (3pt);

\draw[color=red, dotted, line width=0.75pt] (-3, -0.6)--(1, -0.6);

\end{tikzpicture}
}
\caption{\textit{The Square Potts Model.} White dots and dashed lines represent the vertices and edges of the square Potts model on $\mathcal{L}$. Black dots and full lines represent the vertices and edges of the equivalent modified six-vertex model on the medial lattice $\mathcal{L}'$. Black dots on dashed lines are the internal vertices, and all other dots are the external vertices. \textit{Left:} The $3 \times 3$ square Potts model. \textit{Right:} The $2 \times 2$ Potts model with cylindrical boundary conditions along the horizontal direction (external vertices on the sides are connected by a red, dotted `seam'). Figure reproduced from \citep{Baxter_1982}.}
\label{fig:potts}
\end{figure}
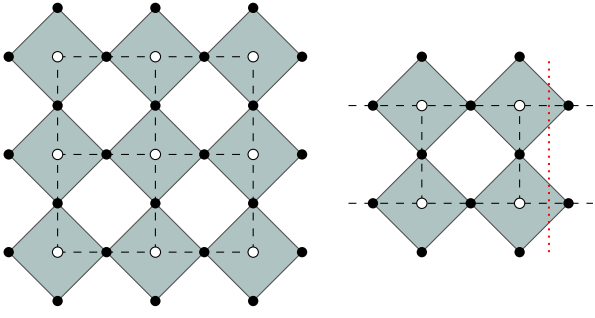

\noindent Chapter 12 of \citep{Baxter_1982} (also see \citep{Jacobsen_2013}) shows that the partition function a $q$-state square Potts model can be re-expressed as the partition function of a modified six-vertex model on the medial lattice $\mathcal{L}'$ (i.e. another square lattice, whose vertices lie on the edges of $\mathcal{L}$, as illustrated in Figure \ref{fig:potts}). The relevant parameters of this construction are given by:
\begin{align}
    \sqrt{q} &= e^\lambda + e^{-\lambda}, & s &= e^{\lambda/4}, \\
    x_i &= v_i / \sqrt{q}, & \Delta &= - \sqrt{q} / 2,
\end{align}

\noindent where $\Delta$ is the anisotropy parameter of the six-vertex model, given by:
\begin{equation}
    \Delta = \frac{a_1 a_2 + b_1 b_2 - c_1 c_2}{2 \sqrt{a_1 a_2 b_1 b_2}}.
\end{equation}

\noindent In our case, $\Delta = 1$ which sets $\sqrt{q} = -2$, $\lambda = i \pi$, and $s = \sqrt{i}$. This means we are considering the partition function of the $q=4$ Potts model, where we note that the negativity of $\sqrt{q}$ does not affect the validity of the construction (as $q$ always appears in integer powers in $Z_{\mathrm{Potts}}$). The vertices of the modified six-vertex model are given by the black dots in Figure \ref{fig:potts}, and can be classified into two types: internal, which lie on edges of the original lattice $\mathcal{L}$, and external, which do not lie on any edge of $\mathcal{L}$. In an arrow configuration on $\mathcal{L}'$, all vertices must have the same number of incoming and outgoing arrows (i.e. thin lines corresponding to $\nwarrow$, $\nearrow$ and thick lines corresponding to $\color{camblue} \swarrow$, $\color{camblue} \searrow$). Of the internal vertices, there are two further types: those lying on the horizontal edges of $\mathcal{L}$ with weights $a_{i, h}, b_{i, h}, c_{i, h}$, and those on the vertical edges of $\mathcal{L}$ with weights $a_{i, v}, b_{i, v}, c_{i, v}$. By the construction in \citep{Baxter_1982}, the weights of the vertices are given by:

\begin{align} 
    a_{i, h} &= \begin{tikzpicture}[scale=0.5,baseline={(0,0)}, rotate=45]
  \draw (-1,0)--(1,0);
  \draw (0,-1)--(0,1);
\end{tikzpicture} = \begin{tikzpicture}[scale=0.5,baseline={(0,0)}, rotate=45]
  \draw (-1,0)--(1,0);
  \draw (0,-1)--(0,1);
  \draw[camblue, line width=3pt, rounded corners=8pt] (-1,0) -- (0,0) -- (0,1);
  \draw[camblue, line width=3pt, rounded corners=8pt] (1,0) -- (0,0) -- (0,-1);
\end{tikzpicture} = 1, \label{eq:vx_weights1} \\
b_{i, h} & = \begin{tikzpicture}[scale=0.5,baseline={(0,0)}, rotate=45]
  \draw (-1,0)--(1,0);
  \draw (0,-1)--(0,1);
  \draw[camblue, line width=3pt, rounded corners=8pt] (0,-1) -- (0,0) -- (0,1);
\end{tikzpicture} = 
\begin{tikzpicture}[scale=0.5,baseline={(0,0)}, rotate=45]
  \draw (-1,0)--(1,0);
  \draw (0,-1)--(0,1);
  \draw[camblue, line width=3pt, rounded corners=8pt] (-1,0) -- (0,0) -- (1,0);
\end{tikzpicture} = x_h, \\
c_{1,h} &= \begin{tikzpicture}[scale=0.5,baseline={(0,0)}, rotate=45]
  \draw (-1,0)--(1,0);
  \draw (0,-1)--(0,1);
  \draw[camblue, line width=3pt, rounded corners=8pt] (1,0) -- (0,0) -- (0,-1);
\end{tikzpicture} = s^2 + x_h s^{-2} = i(1-x_h), \\
c_{2, h} &= \begin{tikzpicture}[scale=0.5,baseline={(0,0)}, rotate=45]
  \draw (-1,0)--(1,0);
  \draw (0,-1)--(0,1);
  \draw[camblue, line width=3pt, rounded corners=8pt] (-1,0) -- (0,0) -- (0,1);
\end{tikzpicture} = s^{-2} + x_h s^{2} = -i(1-x_h),
\end{align}

\noindent and for the vertical vertices, 
\begin{align}
a_{i, v} &= x_v, & c_{1, v} & = s^{-2} + x_v s^{2} = -i(1-x_v), \\
b_{i, v} & = 1, & c_{2, v} & = s^{2} + x_v s^{-2} = i(1-x_v).
\end{align}

\noindent Finally, the external vertices carry the weights:

\begin{align}
\begin{tikzpicture}[scale=0.5,baseline={(0,0)}, rotate=-45]
  \draw (-1,0)--(0,0);
  \draw (0,0)--(0,1);
  \draw[camblue, line width=3pt, rounded corners=8pt] (0,0) -- (0,1);
\end{tikzpicture} &= s^{-1} = \sqrt{-i}, &
\begin{tikzpicture}[scale=0.5,baseline={(0,0)}, rotate=-45]
  \draw (-1,0)--(0,0);
  \draw (0,0)--(0,1);
  \draw[camblue, line width=3pt, rounded corners=8pt] (-1,0) -- (0,0);
\end{tikzpicture} &= s = \sqrt{i}, \label{eq:vx_weights1a}\\
\begin{tikzpicture}[scale=0.5,baseline={(0,-0.25)}, rotate=-45]
  \draw (1,0)--(0,0);
  \draw (0,0)--(0,-1);
  \draw[camblue, line width=3pt, rounded corners=8pt] (1,0) -- (0,0);
\end{tikzpicture} &= s^{-1} = \sqrt{-i}, & \begin{tikzpicture}[scale=0.5,baseline={(0,-0.25)}, rotate=-45]
  \draw (1,0)--(0,0);
  \draw (0,0)--(0,-1);
  \draw[camblue, line width=3pt, rounded corners=8pt] (0,0) -- (0,-1);
\end{tikzpicture} &= s = \sqrt{i}, \\
\begin{tikzpicture}[scale=0.5,baseline={(0,-0.1)}, rotate=-45]
  \draw (1,0)--(0,0);
  \draw (0,0)--(0,1);
\end{tikzpicture} &= s^{-1} = \sqrt{-i} , & \begin{tikzpicture}[scale=0.5,baseline={(0,-0.1)}, rotate=-45]
  \draw (1,0)--(0,0);
  \draw (0,0)--(0,1);
  \draw[camblue, line width=3pt, rounded corners=8pt] (1,0) -- (0,0) -- (0,1);
\end{tikzpicture} &= s = \sqrt{i}, \\
\begin{tikzpicture}[scale=0.5,baseline={(0,-0.1)}, rotate=-45]
  \draw (-1,0)--(0,0);
  \draw (0,0)--(0,-1);
  \draw[camblue, line width=3pt, rounded corners=8pt] (-1,0) -- (0,0) -- (0,-1);
\end{tikzpicture} &= s^{-1} = \sqrt{-i}, & \begin{tikzpicture}[scale=0.5,baseline={(0,-0.1)}, rotate=-45]
  \draw (-1,0)--(0,0);
  \draw (0,0)--(0,-1);
\end{tikzpicture} &= s = \sqrt{i}.
\end{align}

We can simplify the above by redistributing some of the factors of $s$ and $s^{-1}$ from the external vertices. For each arrow configuration on $\mathcal{L}'$, take each edge in the SW--NE direction (i.e. $\nearrow$, $\color{camblue}\swarrow$) and multiply the weights of its two adjacent vertices by $s^{-1}$ if an arrow points towards the vertex, and by $s$ if an arrow points away from the vertex. Clearly, this will multiply each term in $Z_{6V'}$ by $1$ for any arrow configuration. However, the external vertices on the sides of the lattice are modified to carry weight $1$, and the weights $c_{i, r}$ become:
\begin{align}
    c_{1, h} & \rightarrow s^{-2} c_{1, h} = 1 - x_h, &
    c_{2, h} & \rightarrow s^{2} c_{2, h} = 1 - x_h, \\
    c_{1, v} & \rightarrow s^{-2} c_{1, v} = x_v - 1, &
    c_{2, v} & \rightarrow s^{2} c_{2, v} = x_v - 1.
\end{align}

\noindent Overall, the vertex weights are modified to become:

\begin{align}
    a_h &= \begin{tikzpicture}[scale=0.5,baseline={(0,0)}, rotate=45]
  \draw (-1,0)--(1,0);
  \draw (0,-1)--(0,1);
\end{tikzpicture} = \begin{tikzpicture}[scale=0.5,baseline={(0,0)}, rotate=45]
  \draw (-1,0)--(1,0);
  \draw (0,-1)--(0,1);
  \draw[camblue, line width=3pt, rounded corners=8pt] (-1,0) -- (0,0) -- (0,1);
  \draw[camblue, line width=3pt, rounded corners=8pt] (1,0) -- (0,0) -- (0,-1);
\end{tikzpicture} = 1, \label{eq:vx_weights2}\\
b_h & = \begin{tikzpicture}[scale=0.5,baseline={(0,0)}, rotate=45]
  \draw (-1,0)--(1,0);
  \draw (0,-1)--(0,1);
  \draw[camblue, line width=3pt, rounded corners=8pt] (0,-1) -- (0,0) -- (0,1);
\end{tikzpicture} = 
\begin{tikzpicture}[scale=0.5,baseline={(0,0)}, rotate=45]
  \draw (-1,0)--(1,0);
  \draw (0,-1)--(0,1);
  \draw[camblue, line width=3pt, rounded corners=8pt] (-1,0) -- (0,0) -- (1,0);
\end{tikzpicture} = x_h, \\
c_h &= \begin{tikzpicture}[scale=0.5,baseline={(0,0)}, rotate=45]
  \draw (-1,0)--(1,0);
  \draw (0,-1)--(0,1);
  \draw[camblue, line width=3pt, rounded corners=8pt] (1,0) -- (0,0) -- (0,-1);
\end{tikzpicture} = \begin{tikzpicture}[scale=0.5,baseline={(0,0)}, rotate=45]
  \draw (-1,0)--(1,0);
  \draw (0,-1)--(0,1);
  \draw[camblue, line width=3pt, rounded corners=8pt] (-1,0) -- (0,0) -- (0,1);
\end{tikzpicture} = 1 - x_h, \\
s^{-2} &= \begin{tikzpicture}[scale=0.5,baseline={(0,0)}, rotate=-45]
  \draw (-1,0)--(0,0);
  \draw (0,0)--(0,1);
  \draw[camblue, line width=3pt, rounded corners=8pt] (0,0) -- (0,1);
\end{tikzpicture}  =  \begin{tikzpicture}[scale=0.5,baseline={(0,-0.25)}, rotate=-45]
  \draw (1,0)--(0,0);
  \draw (0,0)--(0,-1);
  \draw[camblue, line width=3pt, rounded corners=8pt] (1,0) -- (0,0);
\end{tikzpicture}  = -i, \\
s^{2} &= \begin{tikzpicture}[scale=0.5,baseline={(0,0)}, rotate=-45]
  \draw (-1,0)--(0,0);
  \draw (0,0)--(0,1);
  \draw[camblue, line width=3pt, rounded corners=8pt] (-1,0) -- (0,0);
\end{tikzpicture}  =  \begin{tikzpicture}[scale=0.5,baseline={(0,-0.25)}, rotate=-45]
  \draw (1,0)--(0,0);
  \draw (0,0)--(0,-1);
  \draw[camblue, line width=3pt, rounded corners=8pt] (0,0) -- (0,-1);
\end{tikzpicture}  = i, \\
\end{align}

\noindent and also $a_v = x_v$, $b_v = 1$, and $c_v = x_v - 1$. 

\begin{figure*}
    \subfloat{
\scalebox{0.8}{
\rotatebox{90}{
\begin{quantikz}
\lstick{\rotatebox{-90}{$| 0 \rangle$}} & \gate[2]{\rotatebox{-90}{$U_S$}} & &
  \gate[2]{\rotatebox{-90}{$\theta_v$}} & &
  \gate[2]{\rotatebox{-90}{$\theta_v$}} & &
  \gate[2]{\rotatebox{-90}{$U_S^\dagger$}} &
  \rstick{\rotatebox{-90}{$\langle 0 |$}}
\\
\lstick{\rotatebox{-90}{$| 1 \rangle$}} & &
  \gate[2]{\rotatebox{-90}{$\theta_h$}} & &
  \gate[2]{\rotatebox{-90}{$\theta_h$}} & &
  \gate[2]{\rotatebox{-90}{$\theta_h$}} & &
  \rstick{\rotatebox{-90}{$\langle 1 |$}}
\\
\lstick{\rotatebox{-90}{$| 0 \rangle$}} & \gate[2]{\rotatebox{-90}{$U_S$}} & &
  \gate[2]{\rotatebox{-90}{$\theta_v$}} & &
  \gate[2]{\rotatebox{-90}{$\theta_v$}} & &
  \gate[2]{\rotatebox{-90}{$U_S^\dagger$}} &
  \rstick{\rotatebox{-90}{$\langle 0 |$}}
\\
\lstick{\rotatebox{-90}{$| 1 \rangle$}} & &
  \gate[2]{\rotatebox{-90}{$\theta_h$}} & &
  \gate[2]{\rotatebox{-90}{$\theta_h$}} & &
  \gate[2]{\rotatebox{-90}{$\theta_h$}} & &
  \rstick{\rotatebox{-90}{$\langle 1 |$}}
\\
\lstick{\rotatebox{-90}{$| 0 \rangle$}} & \gate[2]{\rotatebox{-90}{$U_S$}} & &
  \gate[2]{\rotatebox{-90}{$\theta_v$}} & &
  \gate[2]{\rotatebox{-90}{$\theta_v$}} & &
  \gate[2]{\rotatebox{-90}{$U_S^\dagger$}} &
  \rstick{\rotatebox{-90}{$\langle 0 |$}}
\\
\lstick{\rotatebox{-90}{$| 1 \rangle$}} & & & & & & & &
  \rstick{\rotatebox{-90}{$\langle 1 |$}}
\end{quantikz}
} } } \quad \quad \quad 
    \subfloat{
\begin{tikzpicture}[baseline={(0,-3.7)}]
\node (circ) {
    \scalebox{1}{%
    \rotatebox{90}{%
      \begin{quantikz}
    \lstick{\rotatebox{-90}{$| 0 \rangle$}} &
      \gate[2]{\rotatebox{-90}{$U_S$}} &
      \gate[1,style={cup_gate}]{\rotatebox{-90}{$\theta_h$}} &
      \gate[2]{\rotatebox{-90}{$\theta_v$}} &
      \gate[1,style={cup_gate}]{\rotatebox{-90}{$\theta_h$}} &
      \gate[2]{\rotatebox{-90}{$U_S^\dagger$}} &
      \rstick{\rotatebox{-90}{$\langle 0 |$}} \\
    \lstick{\rotatebox{-90}{$| 1 \rangle$}} &
      &
      \gate[2]{\rotatebox{-90}{$\theta_h$}} &
      &
      \gate[2]{\rotatebox{-90}{$\theta_h$}} &
      &
      \rstick{\rotatebox{-90}{$\langle 1 |$}} \\
    \lstick{\rotatebox{-90}{$| 0 \rangle$}} &
      \gate[2]{\rotatebox{-90}{$U_S$}} &
      &
      \gate[2]{\rotatebox{-90}{$\theta_v$}} &
      &
      \gate[2]{\rotatebox{-90}{$U_S^\dagger$}} &
      \rstick{\rotatebox{-90}{$\langle 0 |$}} \\
    \lstick{\rotatebox{-90}{$| 1 \rangle$}} &
      &
      \gate[1,style={cap_gate}]{\rotatebox{-90}{$\theta_h$}} &
      &
      \gate[1,style={cap_gate}]{\rotatebox{-90}{$\theta_h$}} &
      &
      \rstick{\rotatebox{-90}{$\langle 1 |$}} \\
\end{quantikz}
    }%
    }
  };

  \draw[->, thick] ($(circ.east)+(0.6,-1)$) -- ($(circ.east)+(0.6,1.0)$)
    node[midway, right] {$t$};
\end{tikzpicture}
}
\caption{\textit{Quantum Circuits for the Square Potts Model.} The probability amplitudes obtained from the above circuits are proportional to the partition function of the square Potts model (Figure \ref{fig:potts}). The $U_S$ gate acts as $U_S | 01 \rangle = \frac{1}{\sqrt{2}}(|01\rangle - |10\rangle)$, allowing for singlet state preparation and measurement. When boundary conditions are periodic, additional $U(\theta_h)$ gates are implemented between the first and last qubit at each `horizontal' layer.}
\label{fig:potts_circuit}
\end{figure*}
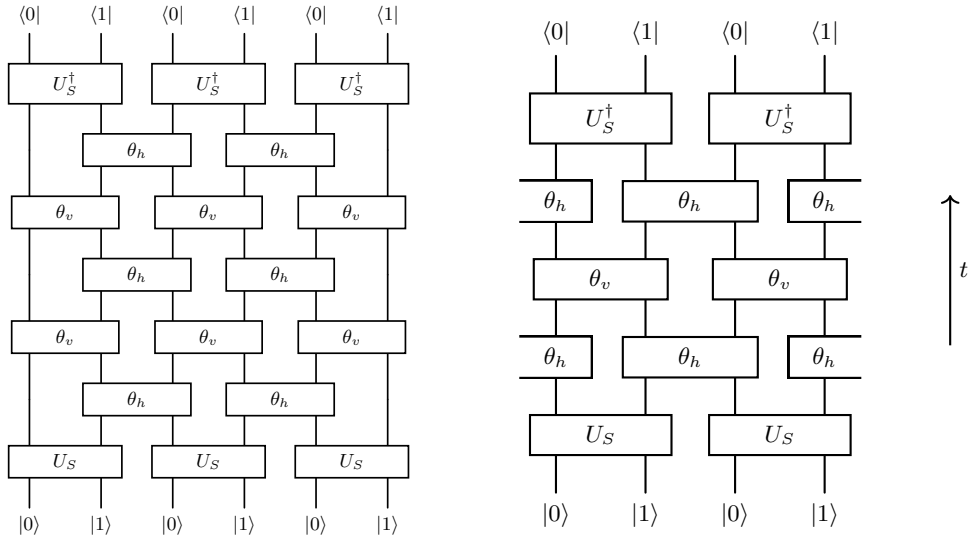

The partition function of the modified six-vertex model is given by a sum over all arrow configurations on $\mathcal{L}'$ consistent with the vertex rules. Each global arrow configuration is downstream of the orientation of the arrows around the top and bottom external vertices, each which can be either clockwise or anti-clockwise. Thus, we may evaluate the partition function by taking a sum over the $2^{2w}$ possible top/bottom orientations, where $w$ is the width of $\mathcal{L}$ (i.e. the number of vertices in each row), followed by a sum over the remaining edge arrow assignments consistent with the top/bottom orientations. We can incorporate the weights of the top and bottom external vertices into the first sum. Notice that with regards to computational basis states, the top and bottom orientations are given by:

\begin{align}
    \left( \begin{tikzpicture}[scale=0.5,baseline={(0,0.15)}, rotate=-45]
  \draw (-1,0)--(0,0);
  \draw (0,0)--(0,1);
  \draw[camblue, line width=3pt, rounded corners=8pt] (0,0) -- (0,1);
\end{tikzpicture} + \begin{tikzpicture}[scale=0.5,baseline={(0,0.15)}, rotate=-45]
  \draw (-1,0)--(0,0);
  \draw (0,0)--(0,1);
  \draw[camblue, line width=3pt, rounded corners=8pt] (-1,0) -- (0,0);
\end{tikzpicture} \right) &= -i (|01 \rangle - |10 \rangle) = -i \sqrt{2} |S \rangle, \\ \left(
\begin{tikzpicture}[scale=0.5,baseline={(0,-0.25)}, rotate=-45]
  \draw (1,0)--(0,0);
  \draw (0,0)--(0,-1);
  \draw[camblue, line width=3pt, rounded corners=8pt] (1,0) -- (0,0);
\end{tikzpicture} + \begin{tikzpicture}[scale=0.5,baseline={(0,-0.25)}, rotate=-45]
  \draw (1,0)--(0,0);
  \draw (0,0)--(0,-1);
  \draw[camblue, line width=3pt, rounded corners=8pt] (0,0) -- (0,-1);
\end{tikzpicture} \right) &= -i (\langle 01 | - \langle 10 |) = -i \sqrt{2} \langle S |.
\end{align}

\noindent Therefore, state preparation and measurement of singlets can be used to implement the sum over external vertices. We also need a sequence of gates $U(\theta_h)$, $U(\theta_v)$ to implement the internal vertices. For the horizontal vertices, we can match the vertex weights by factoring out $e^{i \theta_h}$ to obtain:
\begin{align}
    a_h & = e^{i \theta_h} e^{-i \theta_h} = 1, \\
    b_h & = i \sin \theta_h e^{-i \theta_h} = x_h, \\
    c_h & = \cos \theta_h e^{-i \theta_h} = 1 - x_h.
\end{align}

\noindent As $x_h = -v_h /2$, this lets us encode a Potts model weight of $v_h = -2 i \sin \theta_h e^{-i \theta_h} = e^{-2 i \theta_h} - 1$, i.e. $K_h = -2 i \theta_h$. Similarly, for the vertical vertices we can factor out $i \sin \theta_v$ to obtain:
\begin{align}
    a_v & = e^{i \theta_v} / i \sin \theta_v = x_v, \\
    b_v & = i\sin \theta_v / i \sin \theta_v = 1, \\
    c_v & = \cos \theta_v / i \sin \theta_v = x_v - 1.
\end{align}

\noindent Which gives us $v_v = 2 i e^{i \theta_v} / \sin \theta_v = 2i \cot \theta_v - 2$, giving $K_h = \log ( 2i \cot \theta_v - 1)$. 

It can be shown \citep{Baxter_1982} that the partition function of the modified six-vertex model is related to the original Potts partition function as follows:
\begin{equation}
    Z_{\mathrm{Potts}} = q^{N/2} Z_{6V'},
\end{equation}

\noindent where $N$ is the number of vertices in $\mathcal{L}$. Constructing a quantum circuit consisting of gates $U(\theta_h)$ and $U(\theta_v)$ as shown in Figure \ref{fig:potts_circuit}, we obtain probability amplitudes proportional to $Z_{6V'}$, and thus to $Z_{\mathrm{Potts}}$. Overall:

\begin{equation}
    Z_{\mathrm{Potts}} = \frac{(-2)^{N + w}}{(i \sin \theta_v)^{N_v} e^{i N_h \theta_h}} \langle S |^{\otimes w} U | S \rangle^{\otimes w},
\end{equation}

\noindent where $N_h$ and $N_v$ are the number of horizontal and vertical vertices in the medial lattice $\mathcal{L}'$, $U$ is the $\mathsf{XQP}$ circuit obtained by replacing each internal vertex with the corresponding gates $U(\theta_h)$ and $U(\theta_v)$ (Figure \ref{fig:potts_circuit}), and $w = 2n$ is the width of $\mathcal{L}$. We may also set $\theta_v = \theta_h = \theta$, in which case we obtain $x_h x_v = v_h v_v / 4 = 1$. This corresponds to self-dual points of the Potts model. Notably, we always require that $x_v \neq x_h$, as the solution to $x_v = x_h$ is $\theta = \pi/2$, which gives a trivial circuit with $\mathrm{SWAP}$ gates only.

The above method easily extends the case where the original Potts model has periodic boundary conditions in one direction, connecting external edges on the sides of $\mathcal{L}'$ into internal edges, as illustrated in Figure \ref{fig:DWBC}. As explained in \citep{Baxter_1976,Jacobsen_2013}, this can be done by drawing a `seam', denoting the edges along which the sides of the lattice connect. For each arrow crossing through the seam to the left or right, one multiplies each term in the partition function by $s^{\pm 4} = -1$. As the number of crossing arrows is always even, this leaves each term in the partition function identical, but requires additional gates $U(\theta_h)$ acting between the first, and last qubits of the circuit. These can be implemented by conjugating $U(\theta_h)$ with a network of $\mathrm{SWAP}$ gates.

We now show that the two formulations of the vertex weights (Equations \eqref{eq:vx_weights1} and \eqref{eq:vx_weights2}) correspond to the same quantum circuit (i.e. Figure \ref{fig:potts_circuit}). For simplicity, assume $\theta_h = \theta_v = \theta$ and $n=4$. Since we can only apply gates of the form $U(\theta)$ (for which $a = b + c$) the implementable vertex weights are (up to prefactors) $a = 1$, $b = x$, $c = 1-x$. In the first case (i.e. Equations \eqref{eq:vx_weights1} and \eqref{eq:vx_weights1a}) we can prepare superpositions of the top and bottom external vertex configurations with the correct phase differences using the state $U(\pi/4)|01\rangle = (|01\rangle + i|10\rangle)/\sqrt{2}$, and implement the weights of \eqref{eq:vx_weights1} with the gate $ (\mathbf{1} \otimes S) U(\theta) (S^\dagger \otimes \mathbf{1} ) $. The external vertex weights on the side are implemented (up to global phase) by multiplying the leftmost and rightmost qubits by $S, S^\dagger$. Thus we have a circuit with a top and bottom layer of $U(\pi/4) \otimes U(\pi/4)$ gates and interchanging layers of $S \otimes [(\mathbf{1} \otimes S) U(\theta) (S^\dagger \otimes \mathbf{1} )] \otimes S^\dagger$, and $[(\mathbf{1} \otimes S) U(\theta) (S^\dagger \otimes \mathbf{1} )] \otimes [(\mathbf{1} \otimes S) U(\theta) (S^\dagger \otimes \mathbf{1} )]$. Multiplying them out, all intermediate $S, S^\dagger$ gates cancel, and we are left with a circuit with a bottom layer of $(\mathbf{1} \otimes S^\dag \otimes \mathbf{1} \otimes S^\dag)(U(\pi/4) \otimes U(\pi/4))$, layers of $U(\theta)$ gates, and a top layer of $(U(\pi/4) \otimes U(\pi/4))(S \otimes \mathbf{1} \otimes S \otimes \mathbf{1})$. Setting $U_P = (\mathbf{1} \otimes S^\dagger) U(\pi/4)$, up to a global phase this is the same circuit as the one depicted in Figure \ref{fig:potts_circuit}, which implements the vertex weights of \eqref{eq:vx_weights2}. The same cancellations follow with boundary conditions present. 

The discussion above should be enough to convince the reader that for our choice of parameters ($\Delta = 1$), the modified six-vertex model for encoding the Potts model translates to $\mathsf{XQP}$ circuits which must either contain $S$ gates, or contain two-qubit gates with matrix elements of the form in Equation \eqref{eq:vx_weights1} -- none of which are allowed in the $\mathsf{XQP}$ model. In other words, the required modifications of the six-vertex model to encode $Z_{\mathrm{Potts}}$ correspond to access to $S$ gates on the circuit level (i.e. to implement external vertex weights), which cannot be commuted out from the circuit. We thus conclude that unlike in the case of $\mathsf{IQP}$, the $\mathsf{XQP}$ model cannot approximate the partition function of a lattice model. 

\begin{remark}
  Without access to $S$ gates (or state preparation and measurement of singlets), $\mathsf{XQP}$ circuits are not able to encode the partition function of the Potts model.
\end{remark}

\section{Open Problems} \label{sec:open}
We conclude by presenting a selection of open problems regarding the $\mathsf{XQP}$ model.

\

\noindent \textbf{Can $\mathsf{XQP}$ circuits (and especially $\mathsf{XQP}(\pi/4)$ circuits) be efficiently classically sampled up to constant error in total variation distance?}

Provable hardness of classical simulation up to additive error (usually represented in terms of total variation distance) is a significantly stronger result than that for multiplicative error, ruling out a more plausible model of how the output distribution of a quantum circuit could be sampled in practice, and thus forming a robust foundation for demonstrations of quantum computational supremacy \citep{Hangleiter_2023}. Investigating this question is an obvious next step in the study of $\mathsf{XQP}$, and especially $\mathsf{XQP}(\pi/4)$ for which Theorems \ref{thm:tvxqpbqp} and \ref{thm:amplitudes} do not apply. Given the practical ease of implementing the Heisenberg exchange interaction, a resolution in the negative would give a prescription for one of the easiest quantum advantage experiments available with near-term devices. 

We expect that a disproof of this question could be achieved with a similar method to \citep{Aaronson_2010} for $\mathsf{BosonSampling}$, \citep{Bremner_2016} for $\mathsf{IQP}$, \citep{Fefferman_2015} for Quantum Fourier Sampling, or \citep{Morimae_2017} for $\mathsf{DQC1}$. The strategy for showing hardness of additive-error simulation is typically as follows:
\begin{enumerate}
    \item A quantity with provable $\# \mathsf{P}$-hardness of approximation is identified, and it is shown that the restricted model of computation can approximate such a quantity without postselection. This gives a wost-case hardness result for estimating output probabilities of the model up to multiplicative error. This is then used to construct an \textit{average-case hardness conjecture}, stating that for a sufficiently large fraction of random instances of circuits in the model, approximating the output probabilities up to multiplicative error is $\# \mathsf{P}$-hard. In the case of $\mathsf{IQP}$ circuits this is the partition function of a complex Ising model \citep{Fujii_2017,Bremner_2016} or the gap of degree-three polynomials over $\mathbb{F}_2$. 
    \item The \textit{anti-concentration} of the output distribution of the model is conjectured (or proven), meaning that for a sufficiently large fraction of random instances of circuits in the restricted model, the output probabilities do not concentrate on a small number of outcomes. This can be given in terms of an upper or lower bound on the probabilities of the output distribution of $\mathcal{O}(2^{-n})$, which implies that no single outcome has a probability that is significantly larger than the uniform distribution. 
    \item The \textit{Stockmeyer counting algorithm} is applied, in order to show that given the ability to sample from the output distribution of the model up to additive error, there exists an $\mathsf{FBPP^{NP}}$ algorithm to approximate the $\# \mathsf{P}$-hard quantity up to multiplicative error for a large fraction of problem instances, implying the collapse of the polynomial hierarchy. Therefore, if the polynomial hierarchy does not collapse, then weak classical simulation of the model up to additive error is impossible.
\end{enumerate}

\noindent For a detailed survey of techniques used in proving hardness of classical simulations of restricted quantum circuits, we recommend \citep{Hangleiter_2023}.

The first two points are interesting open problems in their own right. In Section \ref{sec:potts} we have unsuccessfuly attempted to identify a $\# \mathsf{P}$-hard quantity (the partition function of a complex Potts model \citep{Galanis_2021}) which could be encoded by $\mathsf{XQP}$ circuits, by means of a six-vertex model. However, this strategy could still plausibly work -- the computational complexity of exactly computing the partition function of a complex six-vertex model was considered in \citep{Cai_2017}, where it was shown to be either $\mathsf{P}$-hard or $\mathsf{\#P}$-hard depending on the choice of vertex weights. The question of approximation complexity was considered in \citep{Cai_2017b}, however this was only done for the case of real vertex weights and does not apply to the $\mathsf{XQP}$ model. Therefore, resolving the first point could be accomplished by studying the approximation complexity of six-vertex model partition functions with weights $a(\theta) = e^{i \theta}$, $b(\theta) = i \sin \theta$, $c(\theta) = \cos \theta$, and fixed boundary conditions.

Another possibility of resolving point (1) is to consider the probability amplitudes of $\mathsf{XQP}(\pi/4)$ circuits. Writing each gate as $U(\pi/4)_{ij} = 2^{-1/2} \sum_{p =0}^1 i^p E^p_{ij}$, we may express the measurement amplitudes of $\sqrt{\mathrm{SWAP}}$ circuits with $N$ gates as:

\begin{equation} \label{eq:weightenumerator}
\langle \mathbf{y} |U | \mathbf{x} \rangle = 2^{-N/2} \sum_{\mathbf{p} \in \{0, 1\}^N} i^{|\mathbf{p}|} f(\mathbf{p}),
\end{equation}

\noindent Where $f(\mathbf{p}) = \langle \mathbf{y} | \prod_{k=1}^N E_{i_k j_k}^{p_k} | \mathbf{x} \rangle$ is an (efficiently computable) function which returns $1$ if the product of transpositions $E_{i_k j_k}^{p_k}$ maps the bitstring $\mathbf{x}$ to $\mathbf{y}$, and $0$ otherwise. This resembles a weight enumerator polynomial, the computation of which is known to be $\mathsf{NP}$-hard, as well as hard to approximate \citep{Vyalyi_2003}. However, as the set of bitstrings $\{ \mathbf{p} : f(\mathbf{p}) = 1 \}$ does not form a linear code the known hardness results do not apply. Evaluating the classical complexity of approximating quantites of the form of Equation \eqref{eq:weightenumerator} would resolve point (1). 

\

\noindent \textbf{What is the circuit depth and structure required for random $\mathsf{XQP}(\pi/4)$ circuits to form $t$-designs?}

The possibility of forming $t$-designs with random $\mathsf{XQP}(\pi/4)$ circuits was established in Corollary \ref{cor:designs}. However, this does not give any indication of what such circuits would look like. For example, can $t$-designs be formed with $\sqrt{\mathrm{SWAP}}_{ij}$ gates applied to random (local, or non-local) pairs of qubits? Or, would one need to form more complicated circuits with randomly applied generators $G_{ij, kl}$ of Theorem \ref{thm:semiuniversal}? Finally, what is the exact scaling of $t$ in terms of the circuit size? We address these questions in future work. 

\

\noindent \textbf{Do our results hold for $\mathsf{XQP}(\pi/k)$, where $k \neq 4N$?}

As an example, it would be interesting to consider circuits with gates restricted to integer powers of $U(\pi / 3) = \sqrt[3]{\mathrm{SWAP}}$, which can implement phase gates of the form $\text{diag}(1, e^{2i \pi / 3})$ via the phase gadget, but cannot produce singlet states in an obvious way. More generally, an interesting question to pursue is the conditions on $\theta$ required for the results of this work to apply to $\mathsf{XQP}(\theta)$.

\

\noindent \textbf{Is $\mathsf{XQP}$ contained in, or does it contain any other restricted model of quantum computation?}

This question is motivated by containment results for other restricted models. For example, $\mathsf{DQC1}$ circuits are able to calculate $\mathsf{IQP}$ probabilities \citep{Morimae_2017} and $\mathsf{QBall}$ amplitudes \citep{Aaronson_2016}. A generalisation of $\mathsf{DQC1}$ called $\mathsf{\frac{1}{2} BQP}$ \citep{Aaronson_2016, Jacobs_2024} is also able to efficiently sample from $\mathsf{IQP}$ circuits \citep{Jacobs_2024}. We have found it difficult to find any such containments for $\mathsf{XQP}$, primarily due to the fact that arbitrary $\mathsf{XQP}$ amplitudes cannot be easily expressed as a trace of a unitary. Also, it would be interesting to consider the simulation hardness for `hybrid' models involving $\mathsf{XQP}$, as was done in \citep{Fujii_2018} for $\mathsf{IQP}$--$\mathsf{DQC1}$. 

\

\acknowledgements
We thank Richard Jozsa for insightful discussions. J.B. gratefully acknowledges the hospitality of the ICTQT and the University of Gdańsk, where this work was conceived. S.S. acknowledges support from the Royal Society University Research
Fellowship. S.S. also acknowledges support from the Polish National Science Centre (NCN) under the SONATA BIS grant no. UMO-2024/54/E/ST2/00316. M.S. acknowledges support by the IRA Programme, project no. FENG.02.01-IP.05-0006/23, financed by the
FENG program 2021-2027, Priority FENG.02, Measure FENG.02.01, with the support of the FNP. 

\bibliography{bibliography.bib}

@article{Setiawan_2014,
   title={Robust two-qubit gates for exchange-coupled qubits},
   volume={89},
   ISSN={1550-235X},
   url={http://dx.doi.org/10.1103/PhysRevB.89.085314},
   DOI={10.1103/physrevb.89.085314},
   number={8},
   journal={Physical Review B},
   publisher={American Physical Society (APS)},
   author={Setiawan, F. and Hui, Hoi-Yin and Kestner, J. P. and Wang, Xin and Sarma, S. Das},
   year={2014},
   month=feb }

@article{Shi_2012,
   title={Fast Hybrid Silicon Double-Quantum-Dot Qubit},
   volume={108},
   ISSN={1079-7114},
   url={http://dx.doi.org/10.1103/PhysRevLett.108.140503},
   DOI={10.1103/physrevlett.108.140503},
   number={14},
   journal={Physical Review Letters},
   publisher={American Physical Society (APS)},
   author={Shi, Zhan and Simmons, C. B. and Prance, J. R. and Gamble, John King and Koh, Teck Seng and Shim, Yun-Pil and Hu, Xuedong and Savage, D. E. and Lagally, M. G. and Eriksson, M. A. and Friesen, Mark and Coppersmith, S. N.},
   year={2012},
   month=apr }

@article{Childs2010, volume={11},
   title={Characterization of universal two-qubit Hamiltonians},
   author={Childs, Andrew M. and Leung, Debbie W. and Mančinska, Laura and Ozols, Maris},
   ISSN={1533-7146},
   url={http://dx.doi.org/10.26421/QIC11.1-2},
   DOI={10.26421/qic11.1-2},
   number={1 & 2},
   journal={Quantum Information and Computation},
   publisher={Rinton Press},
   year={2011},
   month=jan }

@misc{bouland2016,
      title={Complexity classification of two-qubit commuting hamiltonians}, 
      author={Adam Bouland and Laura Mančinska and Xue Zhang},
      year={2016},
      eprint={1602.04145},
      archivePrefix={arXiv},
      primaryClass={quant-ph},
      url={https://arxiv.org/abs/1602.04145}, 
}

@article{Bremner_2010,
   title={Classical simulation of commuting quantum computations implies collapse of the polynomial hierarchy},
   volume={467},
   ISSN={1471-2946},
   url={http://dx.doi.org/10.1098/rspa.2010.0301},
   DOI={10.1098/rspa.2010.0301},
   number={2126},
   journal={Proceedings of the Royal Society A: Mathematical, Physical and Engineering Sciences},
   publisher={The Royal Society},
   author={Bremner, Michael J. and Jozsa, Richard and Shepherd, Dan J.},
   year={2010},
   month=aug, pages={459–472} }

@article{Kempe_2001,
   title={Theory of decoherence-free fault-tolerant universal quantum computation},
   volume={63},
   ISSN={1094-1622},
   url={http://dx.doi.org/10.1103/PhysRevA.63.042307},
   DOI={10.1103/physreva.63.042307},
   number={4},
   journal={Physical Review A},
   publisher={American Physical Society (APS)},
   author={Kempe, J. and Bacon, D. and Lidar, D. A. and Whaley, K. B.},
   year={2001},
   month=mar 
}

@misc{hulse2024,
      title={A framework for semi-universality: Semi-universality of 3-qudit SU(d)-invariant gates}, 
      author={Austin Hulse and Hanqing Liu and Iman Marvian},
      year={2024},
      eprint={2407.21249},
      archivePrefix={arXiv},
      primaryClass={quant-ph},
      url={https://arxiv.org/abs/2407.21249},
}

@article{Sawicki_2017,
  title = {Criteria for universality of quantum gates},
  author = {Sawicki, Adam and Karnas, Katarzyna},
  journal = {Phys. Rev. A},
  volume = {95},
  issue = {6},
  pages = {062303},
  numpages = {6},
  year = {2017},
  month = {Jun},
  publisher = {American Physical Society},
  doi = {10.1103/PhysRevA.95.062303},
  url = {https://link.aps.org/doi/10.1103/PhysRevA.95.062303}
}

@misc{Liu_2024,
      title={Unitary Designs from Random Symmetric Quantum Circuits}, 
      author={Hanqing Liu and Austin Hulse and Iman Marvian},
      year={2024},
      eprint={2408.14463},
      archivePrefix={arXiv},
      primaryClass={quant-ph},
      url={https://arxiv.org/abs/2408.14463}, 
}

@article{Mitsuhashi_2025,
   title={Unitary Designs of Symmetric Local Random Circuits},
   volume={134},
   ISSN={1079-7114},
   url={http://dx.doi.org/10.1103/PhysRevLett.134.180404},
   DOI={10.1103/physrevlett.134.180404},
   number={18},
   journal={Physical Review Letters},
   publisher={American Physical Society (APS)},
   author={Mitsuhashi, Yosuke and Suzuki, Ryotaro and Soejima, Tomohiro and Yoshioka, Nobuyuki},
   year={2025},
   month=may }

@article{Havlicek_2018,
   title={Quantum Schur Sampling Circuits can be Strongly Simulated},
   volume={121},
   ISSN={1079-7114},
   url={http://dx.doi.org/10.1103/PhysRevLett.121.060505},
   DOI={10.1103/physrevlett.121.060505},
   number={6},
   journal={Physical Review Letters},
   publisher={American Physical Society (APS)},
   author={Havlíček, Vojtěch and Strelchuk, Sergii},
   year={2018},
   month=aug }

@article{Marvian_2022,
   title={Restrictions on realizable unitary operations imposed by symmetry and locality},
   volume={18},
   ISSN={1745-2481},
   url={http://dx.doi.org/10.1038/s41567-021-01464-0},
   DOI={10.1038/s41567-021-01464-0},
   number={3},
   journal={Nature Physics},
   publisher={Springer Science and Business Media LLC},
   author={Marvian, Iman},
   year={2022},
   month=jan, pages={283–289} }

@article{Van_den_Nest_2009,
   title={Quantum algorithms for spin models and simulable gate sets for quantum computation},
   volume={80},
   ISSN={1094-1622},
   url={http://dx.doi.org/10.1103/PhysRevA.80.052334},
   DOI={10.1103/physreva.80.052334},
   number={5},
   journal={Physical Review A},
   publisher={American Physical Society (APS)},
   author={Van den Nest, M. and Dür, W. and Raussendorf, R. and Briegel, H. J.},
   year={2009},
   month=nov }

@article{De_las_Cuevas_2011,
   title={Quantum algorithms for classical lattice models},
   volume={13},
   ISSN={1367-2630},
   url={http://dx.doi.org/10.1088/1367-2630/13/9/093021},
   DOI={10.1088/1367-2630/13/9/093021},
   number={9},
   journal={New Journal of Physics},
   publisher={IOP Publishing},
   author={De las Cuevas, G and Dür, W and Van den Nest, M and Martin-Delgado, M A},
   year={2011},
   month=sep, pages={093021} }

@article{Izergin_1992,
doi = {10.1088/0305-4470/25/16/010},
url = {https://doi.org/10.1088/0305-4470/25/16/010},
year = {1992},
month = {aug},
publisher = {},
volume = {25},
number = {16},
pages = {4315},
author = {A G Izergin and D A Coker and V E Korepin},
title = {Determinant formula for the six-vertex model},
journal = {Journal of Physics A: Mathematical and General},
}

@misc{Aaronson_2016,
      title={The Computational Complexity of Ball Permutations}, 
      author={Scott Aaronson and Adam Bouland and Greg Kuperberg and Saeed Mehraban},
      year={2016},
      eprint={1610.06646},
      archivePrefix={arXiv},
      primaryClass={quant-ph},
      url={https://arxiv.org/abs/1610.06646}, 
}

@article{ZanardiZalkaFaoro2000,
  title = {Entangling power of quantum evolutions},
  author = {Zanardi, Paolo and Zalka, Christof and Faoro, Lara},
  journal = {Phys. Rev. A},
  volume = {62},
  issue = {3},
  pages = {030301},
  numpages = {4},
  year = {2000},
  month = {Aug},
  publisher = {American Physical Society},
  doi = {10.1103/PhysRevA.62.030301},
  url = {https://link.aps.org/doi/10.1103/PhysRevA.62.030301}
}

@book{Baxter_1982,
  title={Exactly Solved Models in Statistical Mechanics},
  author={Baxter, Rodney J.},
  url={https://physics.anu.edu.au/research/ftp/mpg/baxter_book.php},
  year={1982},
  publisher={Academic Press}
}

@inproceedings{Kasteleyn_1969,
  title={Phase transitions in lattice systems with random local properties},
  author={P. W. Kasteleyn and C. M. Fortuin},
  year={1969},
}

@article{Fortuin_1972,
title = {On the random-cluster model: I. Introduction and relation to other models},
journal = {Physica},
volume = {57},
number = {4},
pages = {536-564},
year = {1972},
issn = {0031-8914},
doi = {https://doi.org/10.1016/0031-8914(72)90045-6},
url = {https://www.sciencedirect.com/science/article/pii/0031891472900456},
author = {C.M. Fortuin and P.W. Kasteleyn},
}

@article{Baxter_1976,
doi = {10.1088/0305-4470/9/3/009},
url = {https://doi.org/10.1088/0305-4470/9/3/009},
year = {1976},
month = {mar},
publisher = {},
volume = {9},
number = {3},
pages = {397},
author = {R J Baxter and S B Kelland and F Y Wu},
title = {Equivalence of the Potts model or Whitney polynomial with an ice-type model},
journal = {Journal of Physics A: Mathematical and General},
}

@misc{Galanis_2021,
      title={The complexity of approximating the complex-valued Potts model}, 
      author={Andreas Galanis and Leslie Ann Goldberg and Andrés Herrera-Poyatos},
      year={2021},
      eprint={2005.01076},
      archivePrefix={arXiv},
      primaryClass={cs.CC},
      url={https://arxiv.org/abs/2005.01076}, 
}

@article{Fujii_2017,
   title={Commuting quantum circuits and complexity of Ising partition functions},
   volume={19},
   ISSN={1367-2630},
   url={http://dx.doi.org/10.1088/1367-2630/aa5fdb},
   DOI={10.1088/1367-2630/aa5fdb},
   number={3},
   journal={New Journal of Physics},
   publisher={IOP Publishing},
   author={Fujii, Keisuke and Morimae, Tomoyuki},
   year={2017},
   month=mar, pages={033003} }

@article{Bremner_2016,
   title={Average-Case Complexity Versus Approximate Simulation of Commuting Quantum Computations},
   volume={117},
   ISSN={1079-7114},
   url={http://dx.doi.org/10.1103/PhysRevLett.117.080501},
   DOI={10.1103/physrevlett.117.080501},
   number={8},
   journal={Physical Review Letters},
   publisher={American Physical Society (APS)},
   author={Bremner, Michael J. and Montanaro, Ashley and Shepherd, Dan J.},
   year={2016},
   month=aug }

@misc{Goldberg_2017,
      title={The complexity of approximating complex-valued Ising and Tutte partition functions}, 
      author={Leslie Ann Goldberg and Heng Guo},
      year={2017},
      eprint={1409.5627},
      archivePrefix={arXiv},
      primaryClass={cs.CC},
      url={https://arxiv.org/abs/1409.5627}, 
}

@misc{Aaronson_2010,
      title={The Computational Complexity of Linear Optics}, 
      author={Scott Aaronson and Alex Arkhipov},
      year={2010},
      eprint={1011.3245},
      archivePrefix={arXiv},
      primaryClass={quant-ph},
      url={https://arxiv.org/abs/1011.3245}, 
}

@article{Aaronson_2011,
   title={A linear-optical proof that the permanent is \#P-hard},
   volume={467},
   ISSN={1471-2946},
   url={http://dx.doi.org/10.1098/rspa.2011.0232},
   DOI={10.1098/rspa.2011.0232},
   number={2136},
   journal={Proceedings of the Royal Society A: Mathematical, Physical and Engineering Sciences},
   publisher={The Royal Society},
   author={Aaronson, Scott},
   year={2011},
   month=jul, pages={3393-3405} }

@article{Morimae_2017,
   title={Hardness of classically sampling the one-clean-qubit model with constant total variation distance error},
   volume={96},
   ISSN={2469-9934},
   url={http://dx.doi.org/10.1103/PhysRevA.96.040302},
   DOI={10.1103/physreva.96.040302},
   number={4},
   journal={Physical Review A},
   publisher={American Physical Society (APS)},
   author={Morimae, Tomoyuki},
   year={2017},
   month=oct }

@misc{Cai_2017,
      title={Complexity Classification Of The Six-Vertex Model}, 
      author={Jin-Yi Cai and Zhiguo Fu and Mingji Xia},
      year={2017},
      eprint={1702.02863},
      archivePrefix={arXiv},
      primaryClass={cs.CC},
      url={https://arxiv.org/abs/1702.02863}, 
}

@misc{Cai_2017b,
      title={Approximability of the Six-vertex Model}, 
      author={Jin-Yi Cai and Tianyu Liu and Pinyan Lu},
      year={2017},
      eprint={1712.05880},
      archivePrefix={arXiv},
      primaryClass={cs.CC},
      url={https://arxiv.org/abs/1712.05880}, 
}

@article{Wu_2003,
   title={Universal quantum computation using exchange interactions and measurements of single- and two-spin observables},
   volume={67},
   ISSN={1094-1622},
   url={http://dx.doi.org/10.1103/PhysRevA.67.050303},
   DOI={10.1103/physreva.67.050303},
   number={5},
   journal={Physical Review A},
   publisher={American Physical Society (APS)},
   author={Wu, Lian-Ao and Lidar, Daniel A.},
   year={2003},
   month=may }

@article{Temperley_1971,
    author = {Temperley, H. N. V. and Lieb, Elliott H},
    title = {Relations between the ‘percolation’ and ‘colouring’ problem and other graph-theoretical problems associated with regular planar lattices: some exact results for the ‘percolation’ problem},
    journal = {Proceedings of the Royal Society of London. A. Mathematical and Physical Sciences},
    volume = {322},
    number = {1549},
    pages = {251-280},
    year = {1971},
    month = {04},
    issn = {0080-4630},
    doi = {10.1098/rspa.1971.0067},
    url = {https://doi.org/10.1098/rspa.1971.0067},
    eprint = {https://royalsocietypublishing.org/rspa/article-pdf/322/1549/251/59283/rspa.1971.0067.pdf},
}

@misc{Jacobsen_2013,
 title={Algèbres, Intégrabilité et Modèles Exactement Solubles}, 
 author={Jesper Lykke Jacobsen and Yacine Ikhlef},
 year={2013},
 url={https://www.phys.ens.psl.eu/~jacobsen/AIMES/AIMES-complete-lecture-notes.pdf},
}

@misc{Jacobs_2024,
      title={The Space Just Above One Clean Qubit}, 
      author={Dale Jacobs and Saeed Mehraban},
      year={2024},
      eprint={2410.08051},
      archivePrefix={arXiv},
      primaryClass={quant-ph},
      url={https://arxiv.org/abs/2410.08051}, 
}

@article{Hangleiter_2023,
   title={Computational advantage of quantum random sampling},
   volume={95},
   ISSN={1539-0756},
   url={http://dx.doi.org/10.1103/RevModPhys.95.035001},
   DOI={10.1103/revmodphys.95.035001},
   number={3},
   journal={Reviews of Modern Physics},
   publisher={American Physical Society (APS)},
   author={Hangleiter, Dominik and Eisert, Jens},
   year={2023},
   month=jul }

@misc{Fefferman_2015,
      title={The Power of Quantum Fourier Sampling}, 
      author={Bill Fefferman and Chris Umans},
      year={2015},
      eprint={1507.05592},
      archivePrefix={arXiv},
      primaryClass={cs.CC},
      url={https://arxiv.org/abs/1507.05592}, 
}

@misc{Vyalyi_2003,
      title={Hardness of approximating the weight enumerator of a binary linear code}, 
      author={M. N. Vyalyi},
      year={2003},
      eprint={cs/0304044},
      archivePrefix={arXiv},
      primaryClass={cs.CC},
      url={https://arxiv.org/abs/cs/0304044}, 
}

@article{Fujii_2018,
   title={Impossibility of Classically Simulating One-Clean-Qubit Model with Multiplicative Error},
   volume={120},
   ISSN={1079-7114},
   url={http://dx.doi.org/10.1103/PhysRevLett.120.200502},
   DOI={10.1103/physrevlett.120.200502},
   number={20},
   journal={Physical Review Letters},
   publisher={American Physical Society (APS)},
   author={Fujii, Keisuke and Kobayashi, Hirotada and Morimae, Tomoyuki and Nishimura, Harumichi and Tamate, Shuhei and Tani, Seiichiro},
   year={2018},
   month=may }

@article{DiVincenzo_2000,
   title={Universal quantum computation with the exchange interaction},
   volume={408},
   ISSN={1476-4687},
   url={http://dx.doi.org/10.1038/35042541},
   DOI={10.1038/35042541},
   number={6810},
   journal={Nature},
   publisher={Springer Science and Business Media LLC},
   author={DiVincenzo, D. P. and Bacon, D. and Kempe, J. and Burkard, G. and Whaley, K. B.},
   year={2000},
   month=nov, pages={339–342} }

@misc{Van_Meter_2021,
      title={Universality of swap for qudits: a representation theory approach}, 
      author={James R. van Meter},
      year={2021},
      eprint={2103.12303},
      archivePrefix={arXiv},
      primaryClass={quant-ph},
      url={https://arxiv.org/abs/2103.12303}, 
}

@article{van_Meter_2019,
   title={Approximate exchange-only entangling gates for the three-spin decoherence-free subsystem},
   volume={99},
   ISSN={2469-9934},
   url={http://dx.doi.org/10.1103/PhysRevA.99.042331},
   DOI={10.1103/physreva.99.042331},
   number={4},
   journal={Physical Review A},
   publisher={American Physical Society (APS)},
   author={van Meter, James R. and Knill, Emanuel},
   year={2019},
   month=apr }

@misc{Kempe_2001b,
      title={Encoded Universality from a Single Physical Interaction}, 
      author={J. Kempe and D. Bacon and D. P. DiVincenzo and K. B. Whaley},
      year={2001},
      eprint={quant-ph/0112013},
      archivePrefix={arXiv},
      primaryClass={quant-ph},
      url={https://arxiv.org/abs/quant-ph/0112013}, 
}

@article{Woodworth_2006,
   title={Few-body spin couplings and their implications for universal quantum computation},
   volume={18},
   ISSN={1361-648X},
   url={http://dx.doi.org/10.1088/0953-8984/18/21/S02},
   DOI={10.1088/0953-8984/18/21/s02},
   number={21},
   journal={Journal of Physics: Condensed Matter},
   publisher={IOP Publishing},
   author={Woodworth, Ryan and Mizel, Ari and Lidar, Daniel A},
   year={2006},
   month=may, pages={S721–S744} }

@misc{Bacon_2003,
      title={Decoherence, Control, and Symmetry in Quantum Computers}, 
      author={D. Bacon},
      year={2003},
      eprint={quant-ph/0305025},
      archivePrefix={arXiv},
      primaryClass={quant-ph},
      url={https://arxiv.org/abs/quant-ph/0305025}, 
}

@misc{Fong_2011,
      title={Universal Quantum Computation and Leakage Reduction in the 3-Qubit Decoherence Free Subsystem}, 
      author={Bryan H. Fong and Stephen M. Wandzura},
      year={2011},
      eprint={1102.2909},
      archivePrefix={arXiv},
      primaryClass={quant-ph},
      url={https://arxiv.org/abs/1102.2909}, 
}

@article{Zeuch_2014,
   title={Analytic pulse-sequence construction for exchange-only quantum computation},
   volume={90},
   ISSN={1550-235X},
   url={http://dx.doi.org/10.1103/PhysRevB.90.045306},
   DOI={10.1103/physrevb.90.045306},
   number={4},
   journal={Physical Review B},
   publisher={American Physical Society (APS)},
   author={Zeuch, Daniel and Cipri, R. and Bonesteel, N. E.},
   year={2014},
   month=jul }

@article{Lidar_1998,
   title={Decoherence-Free Subspaces for Quantum Computation},
   volume={81},
   ISSN={1079-7114},
   url={http://dx.doi.org/10.1103/PhysRevLett.81.2594},
   DOI={10.1103/physrevlett.81.2594},
   number={12},
   journal={Physical Review Letters},
   publisher={American Physical Society (APS)},
   author={Lidar, D. A. and Chuang, I. L. and Whaley, K. B.},
   year={1998},
   month=sep, pages={2594–2597} }

@article{Bacon_2000,
   title={Universal Fault-Tolerant Quantum Computation on Decoherence-Free Subspaces},
   volume={85},
   ISSN={1079-7114},
   url={http://dx.doi.org/10.1103/PhysRevLett.85.1758},
   DOI={10.1103/physrevlett.85.1758},
   number={8},
   journal={Physical Review Letters},
   publisher={American Physical Society (APS)},
   author={Bacon, D. and Kempe, J. and Lidar, D. A. and Whaley, K. B.},
   year={2000},
   month=aug, pages={1758–1761} }

@misc{Hsieh_2003,
      title={An Explicit Universal Gate-set for Exchange-Only Quantum Computation}, 
      author={M. Hsieh and J. Kempe and S. Myrgren and K. B. Whaley},
      year={2003},
      eprint={quant-ph/0309002},
      archivePrefix={arXiv},
      primaryClass={quant-ph},
      url={https://arxiv.org/abs/quant-ph/0309002}, 
}

@article{West_2010,
   title={High Fidelity Quantum Gates via Dynamical Decoupling},
   volume={105},
   ISSN={1079-7114},
   url={http://dx.doi.org/10.1103/PhysRevLett.105.230503},
   DOI={10.1103/physrevlett.105.230503},
   number={23},
   journal={Physical Review Letters},
   publisher={American Physical Society (APS)},
   author={West, Jacob R. and Lidar, Daniel A. and Fong, Bryan H. and Gyure, Mark F.},
   year={2010},
   month=dec }

@article{Bacon_1999,
   title={Robustness of decoherence-free subspaces for quantum computation},
   volume={60},
   ISSN={1094-1622},
   url={http://dx.doi.org/10.1103/PhysRevA.60.1944},
   DOI={10.1103/physreva.60.1944},
   number={3},
   journal={Physical Review A},
   publisher={American Physical Society (APS)},
   author={Bacon, D. and Lidar, D. A. and Whaley, K. B.},
   year={1999},
   month=sep, pages={1944–1955} }

@inbook{Lidar_2003,
   title={Decoherence-Free Subspaces and Subsystems},
   ISBN={9783540448747},
   ISSN={0075-8450},
   url={http://dx.doi.org/10.1007/3-540-44874-8_5},
   DOI={10.1007/3-540-44874-8_5},
   booktitle={Irreversible Quantum Dynamics},
   publisher={Springer Berlin Heidelberg},
   author={Lidar, Daniel A. and Birgitta Whaley, K.},
   year={2003},
   pages={83–120} }

@book{Lidar_2014,
   title={Quantum Information and Computation for Chemistry},
   author ={Daniel A. Lidar},
   ISBN={9781118742631},
   ISSN={1934-4791},
   url={http://dx.doi.org/10.1002/9781118742631},
   DOI={10.1002/9781118742631},
   journal={Advances in Chemical Physics},
   publisher={Wiley},
   year={2014},
   month=feb }

@misc{Piddock_2015,
      title={The complexity of antiferromagnetic interactions and 2D lattices}, 
      author={Stephen Piddock and Ashley Montanaro},
      year={2015},
      eprint={1506.04014},
      archivePrefix={arXiv},
      primaryClass={quant-ph},
      url={https://arxiv.org/abs/1506.04014}, 
}

@misc{Cubitt_2016,
      title={Complexity classification of local Hamiltonian problems}, 
      author={Toby Cubitt and Ashley Montanaro},
      year={2016},
      eprint={1311.3161},
      archivePrefix={arXiv},
      primaryClass={quant-ph},
      url={https://arxiv.org/abs/1311.3161}, 
}

@misc{Piddock_2018,
      title={Universal qudit Hamiltonians}, 
      author={Stephen Piddock and Ashley Montanaro},
      year={2018},
      eprint={1802.07130},
      archivePrefix={arXiv},
      primaryClass={quant-ph},
      url={https://arxiv.org/abs/1802.07130}, 
}

@article{Cubitt_2018,
author = {Toby S. Cubitt  and Ashley Montanaro  and Stephen Piddock },
title = {Universal quantum Hamiltonians},
journal = {Proceedings of the National Academy of Sciences},
volume = {115},
number = {38},
pages = {9497-9502},
year = {2018},
doi = {10.1073/pnas.1804949115},
URL = {https://www.pnas.org/doi/abs/10.1073/pnas.1804949115},
eprint = {https://www.pnas.org/doi/pdf/10.1073/pnas.1804949115},
}

@article{Lidar_2001a,
   title={Decoherence-free subspaces for multiple-qubit errors. I. Characterization},
   volume={63},
   ISSN={1094-1622},
   url={http://dx.doi.org/10.1103/PhysRevA.63.022306},
   DOI={10.1103/physreva.63.022306},
   number={2},
   journal={Physical Review A},
   publisher={American Physical Society (APS)},
   author={Lidar, Daniel A. and Bacon, Dave and Kempe, Julia and Whaley, K. B.},
   year={2001},
   month=jan }

@article{Lidar_2001b,
   title={Decoherence-free subspaces for multiple-qubit errors. II. Universal, fault-tolerant quantum computation},
   volume={63},
   ISSN={1094-1622},
   url={http://dx.doi.org/10.1103/PhysRevA.63.022307},
   DOI={10.1103/physreva.63.022307},
   number={2},
   journal={Physical Review A},
   publisher={American Physical Society (APS)},
   author={Lidar, Daniel A. and Bacon, Dave and Kempe, Julia and Whaley, K. B.},
   year={2001},
   month=jan }

@article{Lidar_1999,
   title={Concatenating Decoherence-Free Subspaces with Quantum Error Correcting Codes},
   volume={82},
   ISSN={1079-7114},
   url={http://dx.doi.org/10.1103/PhysRevLett.82.4556},
   DOI={10.1103/physrevlett.82.4556},
   number={22},
   journal={Physical Review Letters},
   publisher={American Physical Society (APS)},
   author={Lidar, D. A. and Bacon, D. and Whaley, K. B.},
   year={1999},
   month=may, pages={4556–4559} }

@misc{Dasu_2025,
      title={Order-of-magnitude extension of qubit lifetimes with a decoherence-free subspace quantum error correction code}, 
      author={Shival Dasu and Ben Criger and Cameron Foltz and Justin A. Gerber and Christopher N. Gilbreth and Kevin Gilmore and Craig A. Holliman and Nathan K. Lysne and Alistair. R. Milne and Daichi Okuno and Grahame Vittorini and David Hayes},
      year={2025},
      eprint={2503.22107},
      archivePrefix={arXiv},
      primaryClass={quant-ph},
      url={https://arxiv.org/abs/2503.22107}, 
}

@misc{Kasatkin_2026,
      title={Quantum Error Correction and Dynamical Decoupling: Better Together or Apart?}, 
      author={Victor Kasatkin and Mario Morford-Oberst and Arian Vezvaee and Daniel A. Lidar},
      year={2026},
      eprint={2602.19042},
      archivePrefix={arXiv},
      primaryClass={quant-ph},
      url={https://arxiv.org/abs/2602.19042}, 
}

@article{Dash_2024,
   title={Concatenating quantum error-correcting codes with decoherence-free subspaces and vice versa},
   volume={109},
   ISSN={2469-9934},
   url={http://dx.doi.org/10.1103/PhysRevA.109.062411},
   DOI={10.1103/physreva.109.062411},
   number={6},
   journal={Physical Review A},
   publisher={American Physical Society (APS)},
   author={Dash, Nihar Ranjan and Dutta, Sanjoy and Srikanth, R. and Banerjee, Subhashish},
   year={2024},
   month=jun }

@article{Wu_2002,
   title={Creating Decoherence-Free Subspaces Using Strong and Fast Pulses},
   volume={88},
   ISSN={1079-7114},
   url={http://dx.doi.org/10.1103/PhysRevLett.88.207902},
   DOI={10.1103/physrevlett.88.207902},
   number={20},
   journal={Physical Review Letters},
   publisher={American Physical Society (APS)},
   author={Wu, L.-A. and Lidar, D. A.},
   year={2002},
   month=may }

@article{Weinstein_2023,
   title={Universal logic with encoded spin qubits in silicon},
   volume={615},
   ISSN={1476-4687},
   url={http://dx.doi.org/10.1038/s41586-023-05777-3},
   DOI={10.1038/s41586-023-05777-3},
   number={7954},
   journal={Nature},
   publisher={Springer Science and Business Media LLC},
   author={Weinstein, Aaron J. and Reed, Matthew D. and Jones, Aaron M. and Andrews, Reed W. and Barnes, David and Blumoff, Jacob Z. and Euliss, Larken E. and Eng, Kevin and Fong, Bryan H. and Ha, Sieu D. and Hulbert, Daniel R. and Jackson, Clayton A. C. and Jura, Michael and Keating, Tyler E. and Kerckhoff, Joseph and Kiselev, Andrey A. and Matten, Justine and Sabbir, Golam and Smith, Aaron and Wright, Jeffrey and Rakher, Matthew T. and Ladd, Thaddeus D. and Borselli, Matthew G.},
   year={2023},
   month=feb, pages={817–822} }

@article{Petta_2005,
author = {J. R. Petta  and A. C. Johnson  and J. M. Taylor  and E. A. Laird  and A. Yacoby  and M. D. Lukin  and C. M. Marcus  and M. P. Hanson  and A. C. Gossard },
title = {Coherent Manipulation of Coupled Electron Spins in Semiconductor Quantum Dots},
journal = {Science},
volume = {309},
number = {5744},
pages = {2180-2184},
year = {2005},
doi = {10.1126/science.1116955},
URL = {https://www.science.org/doi/abs/10.1126/science.1116955},
eprint = {https://www.science.org/doi/pdf/10.1126/science.1116955},
   }

@article{Laird_2010,
   title={Coherent spin manipulation in an exchange-only qubit},
   volume={82},
   ISSN={1550-235X},
   url={http://dx.doi.org/10.1103/PhysRevB.82.075403},
   DOI={10.1103/physrevb.82.075403},
   number={7},
   journal={Physical Review B},
   publisher={American Physical Society (APS)},
   author={Laird, E. A. and Taylor, J. M. and DiVincenzo, D. P. and Marcus, C. M. and Hanson, M. P. and Gossard, A. C.},
   year={2010},
   month=aug }

@article{Medford_2013,
   title={Self-consistent measurement and state tomography of an exchange-only spin qubit},
   volume={8},
   ISSN={1748-3395},
   url={http://dx.doi.org/10.1038/nnano.2013.168},
   DOI={10.1038/nnano.2013.168},
   number={9},
   journal={Nature Nanotechnology},
   publisher={Springer Science and Business Media LLC},
   author={Medford, J. and Beil, J. and Taylor, J. M. and Bartlett, S. D. and Doherty, A. C. and Rashba, E. I. and DiVincenzo, D. P. and Lu, H. and Gossard, A. C. and Marcus, C. M.},
   year={2013},
   month=sep, pages={654–659} }

@article{Eng_2015,
author = {Kevin Eng  and Thaddeus D. Ladd  and Aaron Smith  and Matthew G. Borselli  and Andrey A. Kiselev  and Bryan H. Fong  and Kevin S. Holabird  and Thomas M. Hazard  and Biqin Huang  and Peter W. Deelman  and Ivan Milosavljevic  and Adele E. Schmitz  and Richard S. Ross  and Mark F. Gyure  and Andrew T. Hunter },
title = {Isotopically enhanced triple-quantum-dot qubit},
journal = {Science Advances},
volume = {1},
number = {4},
pages = {e1500214},
year = {2015},
doi = {10.1126/sciadv.1500214},
URL = {https://www.science.org/doi/abs/10.1126/sciadv.1500214},
eprint = {https://www.science.org/doi/pdf/10.1126/sciadv.1500214}, }

@article{Madzik_2025,
   title={Operating two exchange-only qubits in parallel},
   volume={647},
   ISSN={1476-4687},
   url={http://dx.doi.org/10.1038/s41586-025-09767-5},
   DOI={10.1038/s41586-025-09767-5},
   number={8091},
   journal={Nature},
   publisher={Springer Science and Business Media LLC},
   author={Mądzik, Mateusz T. and Luthi, Florian and Guerreschi, Gian Giacomo and Mohiyaddin, Fahd A. and Borjans, Felix and Chadwick, Jason D. and Curry, Matthew J. and Ziegler, Joshua and Atanasov, Sarah and Bavdaz, Peter L. and Connors, Elliot J. and Corrigan, J. and Ercan, H. Ekmel and Flory, Robert and George, Hubert C. and Harpt, Benjamin and Henry, Eric and Islam, Mohammad M. and Khammassi, Nader and Keith, Daniel and Lampert, Lester F. and Mladenov, Todor M. and Morris, Randy W. and Nethwewala, Aditi and Neyens, Samuel and Otten, René and Osuna Ibarra, Linda P. and Patra, Bishnu and Pillarisetty, Ravi and Premaratne, Shavindra and Ramsey, Mick and Risinger, Andrew and Rooney, John D. and Savytskyy, Rostyslav and Watson, Thomas F. and Zietz, Otto K. and Matsuura, Anne Y. and Pellerano, Stefano and Bishop, Nathaniel C. and Roberts, Jeanette and Clarke, James S.},
   year={2025},
   month=nov, pages={870–875} }

@article{van_Diepen_2021,
   title={Quantum Simulation of Antiferromagnetic Heisenberg Chain with Gate-Defined Quantum Dots},
   volume={11},
   ISSN={2160-3308},
   url={http://dx.doi.org/10.1103/PhysRevX.11.041025},
   DOI={10.1103/physrevx.11.041025},
   number={4},
   journal={Physical Review X},
   publisher={American Physical Society (APS)},
   author={van Diepen, C. J. and Hsiao, T.-K. and Mukhopadhyay, U. and Reichl, C. and Wegscheider, W. and Vandersypen, L. M. K.},
   year={2021},
   month=nov }

@article{Havlicek_2019,
   title={Classical algorithm for quantum SU(2) Schur sampling},
   volume={99},
   ISSN={2469-9934},
   url={http://dx.doi.org/10.1103/PhysRevA.99.062336},
   DOI={10.1103/physreva.99.062336},
   number={6},
   journal={Physical Review A},
   publisher={American Physical Society (APS)},
   author={Havlíček, Vojtěch and Strelchuk, Sergii and Temme, Kristan},
   year={2019},
   month=jun }

@misc{Glaudell_2021,
      title={Optimal Two-Qubit Circuits for Universal Fault-Tolerant Quantum Computation}, 
      author={Andrew N. Glaudell and Neil J. Ross and Jacob M. Taylor},
      year={2021},
      eprint={2001.05997},
      archivePrefix={arXiv},
      primaryClass={quant-ph},
      url={https://arxiv.org/abs/2001.05997}, 
}

@article{Jozsa_2008,
   title={Matchgates and classical simulation of quantum circuits},
   volume={464},
   ISSN={1471-2946},
   url={http://dx.doi.org/10.1098/rspa.2008.0189},
   DOI={10.1098/rspa.2008.0189},
   number={2100},
   journal={Proceedings of the Royal Society A: Mathematical, Physical and Engineering Sciences},
   publisher={The Royal Society},
   author={Jozsa, Richard and Miyake, Akimasa},
   year={2008},
   month=jul, pages={3089–3106} }

@misc{Gottesman_1998,
      title={The Heisenberg Representation of Quantum Computers}, 
      author={Daniel Gottesman},
      year={1998},
      eprint={quant-ph/9807006},
      archivePrefix={arXiv},
      primaryClass={quant-ph},
      url={https://arxiv.org/abs/quant-ph/9807006}, 
}

@article{Bremner_2017,
   title={Achieving quantum supremacy with sparse and noisy commuting quantum computations},
   volume={1},
   ISSN={2521-327X},
   url={http://dx.doi.org/10.22331/q-2017-04-25-8},
   DOI={10.22331/q-2017-04-25-8},
   journal={Quantum},
   publisher={Verein zur Forderung des Open Access Publizierens in den Quantenwissenschaften},
   author={Bremner, Michael J. and Montanaro, Ashley and Shepherd, Dan J.},
   year={2017},
   month=apr, pages={8} }

@article{Tillmann_2013,
   title={Experimental boson sampling},
   volume={7},
   ISSN={1749-4893},
   url={http://dx.doi.org/10.1038/nphoton.2013.102},
   DOI={10.1038/nphoton.2013.102},
   number={7},
   journal={Nature Photonics},
   publisher={Springer Science and Business Media LLC},
   author={Tillmann, Max and Dakić, Borivoje and Heilmann, René and Nolte, Stefan and Szameit, Alexander and Walther, Philip},
   year={2013},
   month=may, pages={540–544} }

@article{Deutsch_1995,
   title={Universality in quantum computation},
   volume={449},
   ISSN={2053-9177},
   url={http://dx.doi.org/10.1098/rspa.1995.0065},
   DOI={10.1098/rspa.1995.0065},
   number={1937},
   journal={Proceedings of the Royal Society of London. Series A: Mathematical and Physical Sciences},
   publisher={The Royal Society},
   author={Deutsch, David Elieser and Barenco, Adriano and Ekert, Artur},
   year={1995},
   month=jun, pages={669–677} }

@article{LLoyd_1995,
  title = {Almost Any Quantum Logic Gate is Universal},
  author = {Lloyd, Seth},
  journal = {Phys. Rev. Lett.},
  volume = {75},
  issue = {2},
  pages = {346--349},
  numpages = {0},
  year = {1995},
  month = {Jul},
  publisher = {American Physical Society},
  doi = {10.1103/PhysRevLett.75.346},
  url = {https://link.aps.org/doi/10.1103/PhysRevLett.75.346}
}

@misc{Grinberg_2025,
      title={An introduction to the symmetric group algebra}, 
      author={Darij Grinberg},
      year={2025},
      eprint={2507.20706},
      archivePrefix={arXiv},
      primaryClass={math.CO},
      url={https://arxiv.org/abs/2507.20706}, 
}

@misc{Vershik_2005,
      title={A New Approach to the Representation Thoery of the Symmetric Groups. 2}, 
      author={A. M. Vershik and A. Yu. Okounkov},
      year={2005},
      eprint={math/0503040},
      archivePrefix={arXiv},
      primaryClass={math.RT},
      url={https://arxiv.org/abs/math/0503040}, 
}

@book{Polyanskiy_2025, 
place={Cambridge}, 
title={Information Theory: From Coding to Learning}, 
publisher={Cambridge University Press}, 
author={Polyanskiy, Yury and Wu, Yihong}, 
year={2025}}

\appendix

\section{Alternative Proof of Theorem 1} \label{appendix:alternativeproof}

Here, we prove Theorem \ref{thm:postxqpbqp} using an alternative mathod based on $\mathsf{DFS}_3$ circuits. 

\begin{proof}
Take a $\mathsf{XQP}(\pi/4)$ circuit initialised to the state $|010\rangle^{\otimes n} |01\rangle^{\otimes 2n}$, where the first $3n$ qubits are the computational register, and the remaining $4n$ qubits are ancillae for the phase gadgets. Apply $n$ many exchange interactions $U(\pi/4)_{12} U(\pi/4)_{45} \hdots U(\pi/4)_{3n - 2, 3n - 1}$ to the computational register. This produces the state $ \frac{1}{\sqrt{2}} \left( |010 \rangle + i |100 \rangle \right)^{\otimes n}$. Applying the phase gadgets sequentially as $\mathcal{G}(3\pi/4)_{1} \circ \mathcal{G}(3\pi/4)_{4} \circ \hdots \circ \mathcal{G}(3\pi/4)_{3n - 2}$ produces the state:

\begin{equation}
    \frac{1}{\sqrt{2}} \left( |01 0 \rangle - |100 \rangle \right)^{\otimes n} = |0_L \rangle^{\otimes n}
\end{equation}

\noindent Therefore, following postselection of $|01\rangle^{\otimes n}$ on the first half of the ancilla register, we have a logical $|0_L \rangle^{\otimes n}$. By Lemma \ref{lemma:dfs3universal}, we can apply an appropriate sequence of $\sqrt{\mathrm{SWAP}}$ gates to this state to perform any $\mathsf{BQP}$ algorithm, requiring no postselections.

To show that $\mathsf{PostBQP} \subseteq \mathsf{PostXQP}(\pi/4)$, we need be able to re-express any $\mathsf{PostBQP}$ algorithm with a $\mathsf{PostXQP}(\pi/4)$ circuit. This requires the ability to measure in the $\{ |0_L \rangle, |1_L \rangle \}$ basis, where:
\begin{align}
    |0_L \rangle &= \frac{1}{\sqrt{2}}(|010 \rangle - |100 \rangle )
 \\
 |1_L \rangle &= \frac{2}{\sqrt{6}}|001 \rangle - \frac{1}{\sqrt{3}} \frac{1}{\sqrt{2}}(|010\rangle + |100 \rangle).
\end{align}

\noindent Applying the phase gadget $\mathcal{G}(\pi/4)_{1}$, this gets us:
\begin{align}
    |0_L \rangle &\rightarrow \frac{1}{\sqrt{2}}(|010 \rangle - i |100 \rangle )
 \\
 |1_L \rangle &\rightarrow \frac{2}{\sqrt{6}}|001 \rangle - \frac{1}{\sqrt{3}} \frac{1}{\sqrt{2}}(|010\rangle + i |100 \rangle)
\end{align}

\noindent Following with a $U(\pi/4)_{12} = \frac{1}{\sqrt{2}} \mathbf{1} + \frac{i}{\sqrt{2}} E_{12}$, we get: 
\begin{align}
    |0_L \rangle &\rightarrow |010 \rangle
 \\
 |1_L \rangle &\rightarrow \frac{2e^{i \pi/4}}{\sqrt{6}}|001 \rangle - \frac{i}{\sqrt{3}} |100 \rangle
\end{align}

\noindent Therefore, after applying $U(\pi/4)_{12} \cdot \mathcal{G}(\pi/4)_{1}$ on each qubit, the two logical basis states can be perfectly distinguished with a single physical measurement on the middle qubit. A measurement of $|1\rangle$ denotes $|0_L \rangle$, whereas a measurement of $|0\rangle$ denotes $|1_L \rangle$. In other words, for a physical measurement of a triple $x_1 x_2 x_3$, $|x_L \rangle = |f(x_1, x_2, x_3)\rangle$ where $f$ is some Boolean function (for example $f(x_1, x_2, x_3) = \lnot x_2$, or $f(x_1, x_2, x_3) = x_1 \lor \lnot x_2 \lor x_3$). Similarly, any postselections in the original $\mathsf{PostBQP}$ circuit can be implemented by postselecting on the corresponding logical qubits, conditioned on additional postselections on the phase gadget ancillae. Therefore, $\mathsf{PostXQP}(\pi/4) = \mathsf{PostBQP}$.
\end{proof}

For completeness, we also give a proof of Corollary \ref{cor:xqpsimulation} using the $\mathsf{DFS}_3$ encoding.

\begin{proof}
    Let $L \in \mathsf{PostXQP}(\pi/4)$ be any language decided with bounded error by a uniform family of postselected $\mathsf{XQP}(\pi/4)$ circuits $C_w$ with a single-line output register $\mathcal{O}_w$ (i.e, the middle qubit of the $3$-qubit computational register) and postselection registers $\mathcal{P}_w$ (which activate the correct postselections and phase gadgets whenever $\mathcal{P}_w = \mathbf{p}$ ). Now introduce the conditional probability:
    \begin{equation}
        S_w(x) = \frac{p(\mathcal{O}_w = x \ \& \ \mathcal{P}_w = \mathbf{p})}{p( \mathcal{P}_w = \mathbf{p})}.
    \end{equation}

\noindent By the definition of $\mathsf{PostXQP}(\pi/4)$, for $0 < \delta < \frac{1}{2}$:
\begin{enumerate}
    \item[(i).] if $w \in L$, then $S_w(1) \geq \frac{1}{2} + \delta$,
    \item[(ii).] if $w \notin L$, then $S_w(1) \leq \frac{1}{2} - \delta$,
\end{enumerate}

\noindent Now, let $\mathcal{Y}_w$ denote the \textit{full} output register of $C_w$: this consists of qubits we have included in $\mathcal{O}_w$, $\mathcal{P}_w$, and any qubits we ignore at the end. If the output measurement of this register can be weakly simulated to multiplicative error $c$, then there is a uniform family of classical randomised circuits $\tilde{C}_w$ with output register $\tilde{\mathcal{Y}}_w$. Multiplicative weak simulation implies that:
\begin{equation}
    \frac{1}{c} p(\mathcal{Y}_w = \mathbf{y} ) \leq p(\tilde{\mathcal{Y}}_w = \mathbf{y}) \leq c p(\mathcal{Y}_w = \mathbf{y} )
\end{equation}

\noindent The inequality also holds for all marginal distributions. We can therefore sample the postselection output first, and proceed only if $\tilde{\mathcal{P}}_w = \mathbf{p}$ (discarding the outcome otherwise). Then, we sample the middle qubit of the $3$-qubit computational output. This allows us to sample from $\tilde{S}_w(x)$, where:

\begin{equation}
    \tilde{S}_w(x) = \frac{p(\tilde{\mathcal{O}}_w = x \ \& \ \tilde{\mathcal{P}}_w = \mathbf{p})}{p( \tilde{\mathcal{P}}_w = \mathbf{p})}.
\end{equation}

\noindent By the inequalities above, this satisfies:

\begin{equation}
    \frac{1}{c^2} S_w(x)  \leq \tilde{S}_w(x) \leq c^2 S_w(x)
\end{equation}

\noindent Thus our classical algorithm decides whether $w$ belongs in $L$ with bounded error if $c^2 < 1 + 2 \delta$. If we take $c < \sqrt{2}$, this means that $\tilde{S}_w(x)$ satisfies the definitions of $\mathsf{PostBPP}$, meaning that $L \in \mathsf{PostBPP}$. But $L \in \mathsf{PostXQP}(\pi/4) = \mathsf{PP}$, so $\mathsf{PostBPP = PP}$. Along with Toda's theorem, this implies that $\mathsf{PH} = \Delta_3$. \end{proof}

\end{document}